\definecolor{linkcol}{rgb}{0.0,0.5,0.65}  
\definecolor{citecol}{rgb}{0.0, 0.6, 0.45}
\definecolor{urlcol}{rgb}{0.7, 0.0, 0.55}
\def\01{\{0,1\}}
\DeclareMathOperator*{\Ex}{\mathbf{E}}
\let\Pr\relax
\DeclareMathOperator*{\Pr}{\mathbf{Pr}}
\renewcommand{\epsilon}{\varepsilon}
\DeclarePairedDelimiterX\ceil[1]{\lceil}{\rceil}{
    \IfBlankTF{#1}{\thinspace\cdot\thinspace}{#1}
}
\DeclarePairedDelimiterX\floor[1]{\lfloor}{\rfloor}{
    \IfBlankTF{#1}{\thinspace\cdot\thinspace}{#1}
}
\DeclarePairedDelimiterX\norm[1]{\lVert}{\rVert}{
    \IfBlankTF{#1}{\thinspace\cdot\thinspace}{#1}
}
\DeclarePairedDelimiterX\abs[1]{\lvert}{\rvert}{
    \IfBlankTF{#1}{\thinspace\cdot\thinspace}{#1}
}
\DeclarePairedDelimiterX\ket[1]{\lvert}{\rangle}{
    \IfBlankTF{#1}{\psi}{#1}
}
\DeclarePairedDelimiterX\bra[1]{\langle}{\rvert}{
    \IfBlankTF{#1}{\psi}{#1}
}
\DeclarePairedDelimiterX\expval[1]{\langle}{\rangle}{
    \IfBlankTF{#1}{\thinspace\cdot\thinspace}{#1}
}
\DeclarePairedDelimiterX\braket[2]{\langle}{\rangle}{
    #1\,\delimsize\vert\,\mathopen{}#2
}
\DeclarePairedDelimiterX\mel[3]{\langle}{\rangle}{
    #1\delimsize\rvert\,\mathopen{}#2\,\delimsize\lvert\mathopen{}#3
}
\DeclarePairedDelimiterX\ketbra[2]{\lvert}{\rvert}{
    #1\delimsize\rangle\negthinspace\delimsize\langle\mathopen{}#2
}
\DeclarePairedDelimiterX\proj[1]{\lvert}{\rvert}{
    \IfBlankTF{#1}{
        \psi\delimsize\rangle\negthinspace\delimsize\langle\mathopen{}\psi
    }{
        #1\delimsize\rangle\negthinspace\delimsize\langle\mathopen{}#1
    }        
}
\DeclarePairedDelimiterX\Set[1]\{\}{%

#1
}
\newcommand{\tr}[1]{\operatorname{tr}\left[{#1}\right]}
\NewDocumentCommand{\RefRestate}{m}{%
    \nameCref{#1} \hyperlink{#1-restate}{\ref*{#1}}%
}
\NewDocumentCommand{\labelRestate}{m}{
    \hypertarget{#1-restate}{}
}
\newtheoremstyle{mydefinitionsty}
{10pt}
{10pt}
{}
{}
{}
{}
{.5em}
{\textbf{\thmname{#1}~\thmnumber{#2}:  }\thmnote{(#3)}}
\theoremstyle{mydefinitionsty}
\declaretheorem[name=Definition,style=mydefinitionsty]{definition}
\declaretheorem[name=Remark,style=mydefinitionsty]{remark}
\declaretheorem[name=Observation,style=mydefinitionsty]{observation}
\declaretheorem[name=Example,style=mydefinitionsty]{example}
\crefname{fact}{fact}{facts}
\Crefname{fact}{Fact}{Facts}
\crefname{observation}{observation}{observations}
\Crefname{observation}{Observation}{Observations}
\newtheoremstyle{mythmsty}
{10pt}
{10pt}
{\itshape}
{}
{}
{}
{.5em}
{\textbf{\thmname{#1}~\thmnumber{#2}:  }\thmnote{(#3)}}
\theoremstyle{mythmsty}
\declaretheorem[name=Theorem,style=mythmsty]{theorem}
\declaretheorem[name=Lemma,style=mythmsty]{lemma}
\declaretheorem[name=Problem,style=mythmsty]{problem}
\declaretheorem[name=Corollary,style=mythmsty]{corollary}
\NewDocumentEnvironment{proofsketch}{o}{%
    \textbf{Proof Sketch\IfBlankF{#1}{ (of #1):}}%
}{%
    \hfill\(\square\)%
}
\definecolor{marcelorange}{rgb}{0.8, 0.43, 0}
\definecolor{alexcolor}{rgb}{0.0, 0.47, 0.75}
\title{Interactive proofs for verifying (quantum) learning and testing}
\author[1,2]{Matthias C.~Caro\thanks{
E-mail addresses: matthias.caro@warwick.ac.uk, \{jense@zedat., m.hinsche@, marios.ioannou@, a.nietner@\}fu-berlin.de, ryan.sweke@ibm.com}}
\newcommand\CoAuthorMark{\footnotemark[\arabic{footnote}]}
\author[2,3]{Jens Eisert\protect\CoAuthorMark}
\author[2]{Marcel Hinsche\protect\CoAuthorMark}
\author[2]{Marios Ioannou\protect\CoAuthorMark}
\author[2]{\\Alexander Nietner\protect\CoAuthorMark}
\author[4]{Ryan Sweke\protect\CoAuthorMark}
\affil[1]{\small Department of Computer Science, University of Warwick, UK}
\affil[2]{\small Dahlem Center for Complex Quantum Systems, Freie Universit\"at Berlin, Berlin, Germany}
\affil[3]{\small Helmholtz-Zentrum Berlin für Materialien und Energie, Berlin, Germany}
\affil[4]{\small IBM Quantum, Almaden Research Center, San Jose, CA, USA}
\date{}
\begin{document}
\maketitle
\thispagestyle{empty}

\begin{abstract}
\noindent We consider the problem of testing and learning from data in the presence of resource constraints, such as limited memory or weak data access, which place limitations on the efficiency and feasibility of testing or learning. In particular, we ask the following question: Could a resource-constrained learner/tester use interaction with a resource-unconstrained but untrusted party to solve a learning or testing problem more efficiently than they could without such an interaction?  In this work, we answer this question both abstractly and for concrete problems, in two complementary ways:
\begin{enumerate}
    \item For a wide variety of scenarios, we prove that a resource-constrained learner \textit{cannot} gain any advantage through classical interaction with an untrusted prover. As a special case, we show that for the vast majority of testing and learning problems in which \textit{quantum memory} is a meaningful resource, a memory-constrained quantum algorithm cannot overcome its limitations via classical communication with a memory-unconstrained quantum prover.
    \item In contrast, when quantum communication is allowed, we construct a variety of interactive proof protocols, for specific learning and testing problems, which allow memory-constrained quantum verifiers to gain significant advantages through delegation to untrusted provers.
\end{enumerate} 
These results highlight both the limitations and potential of delegating learning and testing problems to resource-rich but untrusted third parties.
\end{abstract}

\newpage
\thispagestyle{empty}
\tableofcontents
\newpage

\clearpage
\pagenumbering{arabic}
\section{Introduction}\label{s:introduction}

The success of modern \emph{machine learning} (ML) models so far often relies on extraordinary amounts of high-quality data and computation time.
Indeed, current \emph{large language models} (LLMs) have hundreds of billions of parameters and are trained on tens or even hundreds of terabytes of data~\cite{hoffmann2022training}.
The training process itself is executed on thousands of GPUs over several weeks or even months, with a total cost of many millions of dollars.
Additionally, \emph{reinforcement learning from human feedback} (RLHF) \cite{christiano2017rlhf}, employed to further improve the quality of answers provided by an LLM as seen by users, requires collecting a specific kind of data that comes at the cost of many human working hours.
Finally, developing competitive ML models also requires advanced programming expertise.
Thus, we see that one of the arguably most notable current-day ML applications requires both amounts and kinds of resources that are unavailable to all but a select few entities.

Beyond this practical relevance, the impact of resource mismatches -- either qualitative or quantitative -- between what is available to a learner and what is required to solve a learning problem of interest has long been appreciated in learning theory. 
For instance, in classical learning theory, the seminal work \cite{kearns1998efficient} demonstrated that learning $n$-bit parity functions requires exponentially-in-$n$ many \emph{statistical queries} (SQs) but only linearly-in-$n$ many random examples. 
More recently, Ref.\  \cite{arunachalam2023role} showed an analogous 
separation between quantum SQs (QSQs) and quantum examples for learning quadratic functions, 
and Refs.\ \cite{arunachalam2023role,nietner2023unifying} proved a separation between learning stabilizer states from QSQs and (incoherent) copies of the state.
In a different vein, juxtaposing classical and quantum learning theory, we know that Fourier-based learning of, e.g., disjunctive normal forms (DNFs) from quantum examples, or even QSQs, improves upon learning from classical examples in terms of sample complexity \cite{bshouty1998learning, arunachalam2020qsq} and likely also in terms of computation time \cite{grilo2019learning}.
Finally, recent developments in quantum learning and testing have highlighted and experimentally demonstrated exponential sample and query complexity advantages of coherent multi-copy learners over incoherent single-copy procedures for a variety of tasks \cite{chen2022exponential, huang2022quantum-advantage}, showing that \textit{quantum memory} is a valuable 
resource for many quantum testing and learning problems.

Thus, both from a practical and from a theoretical perspective, successful machine learning can become an unattainable goal for resource-constrained learners that lack access to, e.g., the required type of data or the required processing capabilities. 
This motivates an approach in which the resource-constrained party delegates part of the learning to a resource-unconstrained (or at least less resource-constrained) but potentially untrusted service provider.
Modeling this approach with an interactive proof framework, the recent line of work \cite{goldwasser2021interactive, oconnor2021delegating, mutreja2022pac-verification, caro2023verification, gur2024power} has explored the potential and limitations of interactively verifying delegated learning for different scenarios with specific resource mismatches. In this work, we take a broader perspective and ask:
\begin{center}
    \textit{When can a resource-constrained verifier profit from interacting with a resource-unconstrained but untrusted prover in solving problems of learning from data?}
\end{center}
Here, the verifier profiting from the interaction is understood as achieving improvements in sample, query, and/or time complexity over what the verifier would require on their own. 
To illustrate concrete instantiations of this abstract question, let us return to the resource mismatches highlighted above. 
In the spirit of Ref.\ \cite{kearns1998efficient}, we may ask: Can a verifier with only SQ-access interact with a prover, who has random example access, to solve problems related to parity learning with, say, polynomially-in-$n$ many SQs?
Similarly, in the context of Refs.\ \cite{bshouty1998learning, arunachalam2020qsq}: Can a classical verifier with random example access interact with a quantum prover, who has quantum example access, and thereby efficiently learn DNFs?
Finally, motivated by Refs.\ \cite{chen2022exponential, huang2022quantum-advantage}, an example question becomes: Can a memory-constrained quantum verifier with access to copies of an unknown state interact with a memory-unconstrained prover, who has access to copies of the same state, and can make coherent measurements on multiple copies, to efficiently estimate the purity of the state?

We answer these questions both in the abstract and for concrete examples. 
On the one hand, we show that, for a wide variety of scenarios, resource-constrained learning or testing algorithms \textit{cannot} gain any query complexity advantages through interactions with a resource-unconstrained but untrusted prover over what they could achieve 
on their own. This no-go result in particular implies that for any of the myriad tasks in which quantum memory is known to be a meaningful resource~\cite{chen2022exponential, huang2022quantum-advantage}, limitations arising from a constrained quantum memory cannot be overcome via delegation \textit{over classical communication channels} to unconstrained but untrusted service providers.
On the other hand, we study memory-constrained quantum verifiers interacting with memory-unconstrained quantum provers via a \textit{quantum communication} channel.
In this setting, we overcome the above no-go result and construct a variety of concrete interactive proof protocols for learning and testing problems that do indeed allow the constrained parties to gain advantages via interaction with unconstrained service providers. 
Our results are presented in more detail in \Cref{ss:results_overview}.

\subsection{Framework}\label{ss:framework}
Throughout this work, we consider a wide variety of learning and testing problems. To solve such problems, an algorithm should output a valid solution when given oracle access to some unknown instance of the problem. We are particularly interested in problems for which the query complexity of solving the problem depends on the details of either the oracle, or the computational resources (such as memory) that the algorithm has access to. When it comes to testing, we mainly focus on many-vs-one distinguishing problems.

\begin{definition}[Many-vs-one distinguishing problem]\label{definition:many-vs-one-distinguishing-informal} 
    A \emph{many-vs-one distinguishing problem} is defined by a triple $(x_A,\mathcal{X}_R,\mathcal{X})$, where $\mathcal{X}$ is some instance set, $\mathcal{X}_R\subset \mathcal{X}$ is a reject set, and $x_A\in\mathcal{X}\setminus\mathcal{X}_R$ is an accept instance. Valid solutions for the triple are defined as follows:
    \begin{itemize}
        \item[(i)] When given oracle access to $x_A$, the valid solution is ``accept".
        \item[(ii)] When given oracles access to $x\in\mathcal{X}_R$, the valid solution is ``reject".
        \item[(iii)] When given oracle access to $x\in\mathcal{X}\setminus (\mathcal{X}_R\cup\{x_A\})$, then both ``accept" and ``reject" are valid solutions.
    \end{itemize}
\end{definition}

Note that we consider a many-vs-one distinguishing problem to be a \textit{testing problem}, as one can interpret the problem as requiring one to test whether an unknown instance $x\in\mathcal{X}$ is the accept instance $x_A$ or a reject instance $x\in\mathcal{X}_R$, promised that either one of these is the case.
When it comes to learning we will consider a variety of different types of learning problems.
However, as explained in Section~\ref{ss:learning_frameworks}, all of the problems we consider can be understood as special instances of an agnostic learning problem.

\begin{restatable}[Agnostic learning problem]{definition}{defagnostic}
    \label{def:agnostic-learning}%
    An \textit{agnostic learning problem} is defined by a quintuple ${(\mathcal{X},\mathcal{M},\ell, \epsilon,\alpha)}$. 
    Here, $\mathcal{X}$ and \(\mathcal{M}\) are sets; $\mathcal{X}$ is the instance set and \(\mathcal{M}\) is referred to as the model class. Additionally, $\ell:\mathcal{M}\times \mathcal{X}\to \mathbb{R}_{\geq 0}$ is called the loss function,  and $\epsilon\in (0,1]$ and $\alpha\geq 0$ are both accuracy parameters. When given 
    oracle access to some instance $x\in\mathcal{X}$, any model $m\in\mathcal{M}$ satisfying
    \begin{equation}\label{eq:agnostic_criteria}
            \ell (m,x) 
            \leq 
            \alpha 
            \min_{m'\in\mathcal{M}}\left[\ell(m',x)\right]
            + \epsilon
    \end{equation}
    is a valid solution.
\end{restatable}

At a high level, for any $x\in\mathcal{X}$, a model $m\in\mathcal{M}$ is a valid solution if it is ``close" to the best model in $\mathcal{M}$ for $x$, with respect to the loss function $\ell$ and accuracy parameters $\alpha$ and $\epsilon$. We note that one most often considers agnostic learning problems with $\alpha=1$. However, constructing efficient algorithms that solve such agnostic learning problems is often formidably difficult, and so in this work we also consider agnostic learning problems with $\alpha > 1$.

With these definitions in hand, we can now define what it means for an algorithm to \textit{solve} either of the above problems with respect to a specific oracle model. Intuitively, the oracle plays the role of abstracting a certain form of data access of the algorithm.

\begin{definition}[Solving a learning/testing problem]\label{def:solve_a_problem}
    We say that an algorithm $\mathcal{A}$ solves a learning/testing task from $\mathsf{O}$-access, with probability $\geq 1-\delta$, if for all admissible instances $x\in\mathcal{X}$, when $\mathcal{A}$ is given access to $\mathsf{O}(x)$, it outputs a valid solution with probability $\geq 1-\delta$. Here, the probability is with respect to potential randomness in both the oracle and the algorithm. 
    (If \(\mathcal{A}\) and \(\mathsf{O}\) are deterministic, then \(\delta=0\).)
\end{definition}

In addition, we often in this work consider the complexity of algorithms which solve a \textit{randomized} many-vs-one distinguishing problem, whose definition we postpone to Section~\ref{ss:learning_frameworks}.

With these notions of testing and learning established, we now consider the following situation: We imagine a verifier $V$ who wants to solve a testing or learning problem, given access to the unknown object via some oracle $\mathsf{O}_V$. Crucially, we imagine that this verifier is \textit{resource-constrained} in some way, which leads to a lower bound on the number of queries or computational time required by $V$ to solve the problem. This resource constraint could come in different forms. We could, for example, consider a verifier that has limited memory, access to a ``weak oracle'' (e.g., random examples as opposed to membership queries), or, in the quantum case, the inability to make coherent measurements across multiple copies of an unknown state. We then imagine that $V$ has the ability to interact with a resource-unconstrained but untrusted prover $P$, who has access to some oracle $\mathsf{O}_P$. In particular, we consider scenarios in which, given that $P$ is resource-unconstrained, it could solve the problem with fewer queries or less time than required by $V$ (given the resource constraint). The question we ask in this work is whether there exists a well-defined procedure via which $V$ can meaningfully solve the problem through interaction with $P$, while simultaneously surpassing the lower bound on complexity imposed by the resource constraint. Following Refs.~\cite{goldwasser2021interactive, oconnor2021delegating, mutreja2022pac-verification, caro2023verification, gur2024power} we refer to such a procedure as an \textit{interactive proof (IP)} 
for testing or learning.

\begin{figure}
    \centering
     \includegraphics[width=0.65\textwidth]{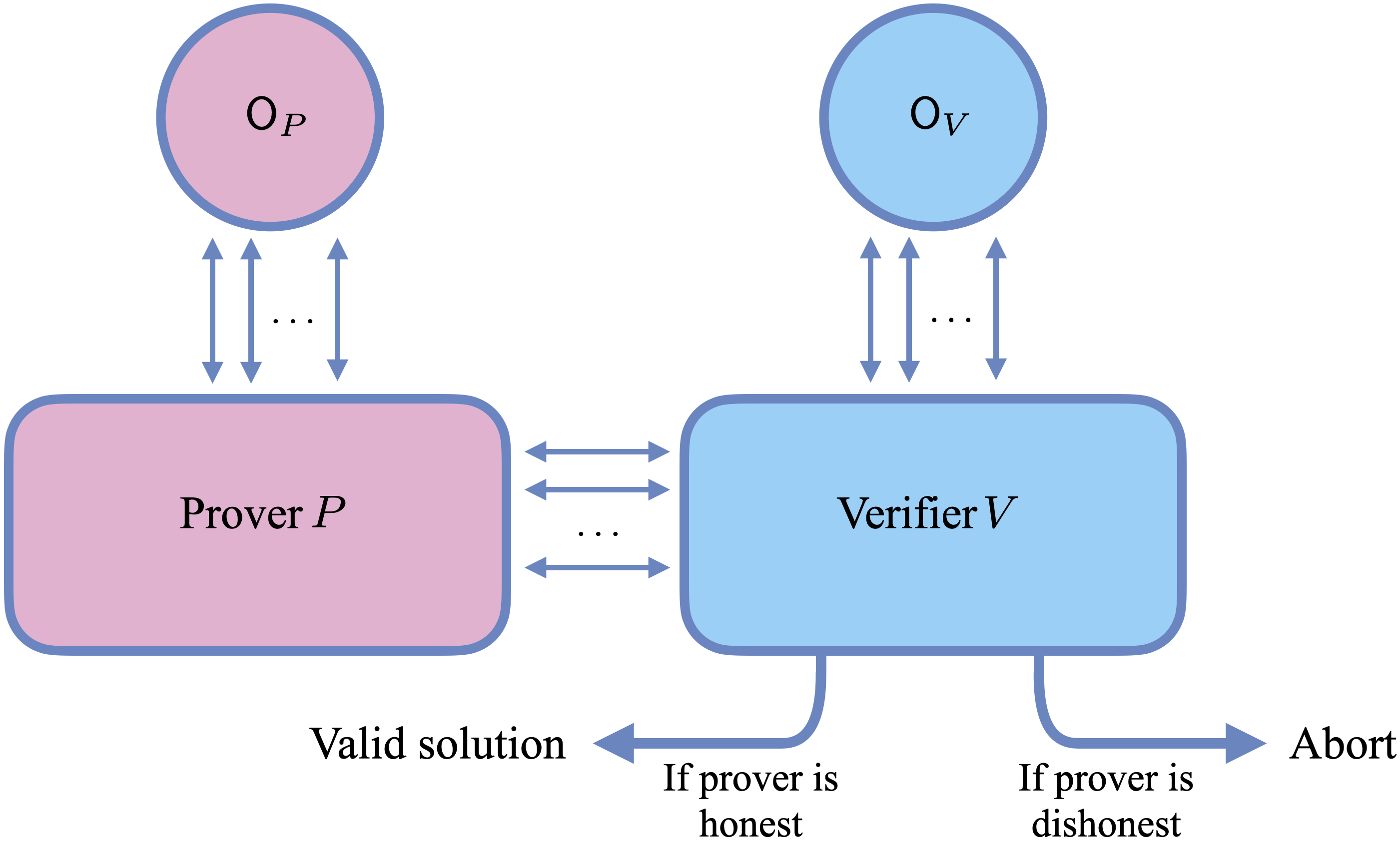} 
    \caption[Interactive Learning Protocol]{\textbf{Illustration of interactive testing/learning.} The verifier $V$ interacts with the prover $P$ to solve a testing or learning task. The verifier has access to the oracle \(\mathsf{O}_V(x)\) and the prover has access to the oracle \(\mathsf{O}_P(x)\). The verifier is resource-constrained, while the prover is resource-unconstrained.} 
    \label{fig:interactive_learning_protocol} 
\end{figure}

\begin{definition}[Interactive proofs for testing and learning]\label{definition:interactive-proofs-for-learning-informal}
    A pair $(V,P)$ of -- possibly randomized or quantum -- algorithms constitutes an \emph{interactive proof system for a testing/learning task}, with success probability $1-\delta$, from $(\mathsf{O}_V, \mathsf{O}_P)$-access if the following are satisfied:
    \begin{itemize}
        \item \textbf{Completeness:} For any $x\in\mathcal{X}$, when $V$ is given access to the oracle $\mathsf{O}_V(x)$ and interacts with the honest prover $P$, who is given access to the oracle $\mathsf{O}_P(x)$, then: With probability $\geq 1-\delta$, $V$ does not abort the interaction and outputs a valid solution to the testing/learning task.
        \item \textbf{Soundness:} For any $x\in\mathcal{X}$ and for any (potentially adversarial) prover $P'$, when $V$ is given access to the oracle $\mathsf{O}_V(x)$ and interacts with $P'$, then: With probability $< \delta$, $V$ does not abort the interaction and outputs an invalid solution to the testing/learning task.
    \end{itemize}
    Additionally, if $\mathsf{A}$ is a class of resource-constrained algorithms, we say that a testing/learning task admits an interactive proof system from $(\mathsf{O}_V, \mathsf{O}_P)$-access with $\mathsf{A}$-constrained verifier if $V$ can be chosen to lie in $\mathsf{A}$.
\end{definition}

\Cref{definition:interactive-proofs-for-learning-informal} is illustrated in \Cref{fig:interactive_learning_protocol}. 
As discussed before \Cref{definition:interactive-proofs-for-learning-informal}, we are naturally also interested in questions of efficiency. In particular we are primarily interested in the query/time complexity of $V$, through the course of the interaction with $P$, and how this compares to the query/time complexity required by $V$ when solving the problem in isolation. Depending on the context, we may also care about the communication complexity of the protocol -- i.e.,  the number of (quantum) bits exchanged between $V$ and $P$ -- as well as the query/time complexity of the honest prover $P$. 

The notions of completeness and soundness in \Cref{definition:interactive-proofs-for-learning-informal} become two-sided for the case of a many-vs-one distinguishing task. For example, in our soundness requirement, no matter whether $x=x_A$ or $x\in\mathcal{X}_R$, no dishonest prover $P'$ should be able to convince the verifier $V$ to mistakenly reject or accept, respectively (except with small probability). 
This two-sidedness arises naturally when trying to unify interactive proofs for testing and learning in a single definition.
From a complexity-theoretic perspective, however, one-sided notions of completeness and soundness -- e.g., for soundness, we would only require that no dishonest prover $P'$ can convince the verifier $V$ to mistakenly accept an $x\in\mathcal{X}_R$ -- may seem more natural. 
We recall that the classical complexity class $\mathsf{IP}$ is commonly defined with one-sided completeness and soundness, but it in fact equals its  two-sided version because $\mathsf{IP}=\mathsf{PSPACE}$ \cite{lund1992algebraic, shamir1992ip-equals-pspace} and $\mathsf{PSPACE}=\mathsf{coPSPACE}$, so $\mathsf{IP}=\mathsf{coIP}=\mathsf{IP}\cap \mathsf{coIP}$.
However, this equality in general does not hold relative to an oracle \cite{fortnow1988there}.
Hence, when defining interactive verification of testing, it is not immediately clear whether one-sided completeness and soundness should be favored over their two-sided versions.
For our results, however, the distinction does not matter: The proof of our no-go result (see \Cref{theorem:no-go-interactive-many-vs-one-distinguishing-intro-version}) works even if only one-sided completeness and soundness are assumed. And the interactive protocols for our positive results achieve the stronger two-sided guarantees.  

\subsection{Motivating examples}\label{ss:motivating_examples}

Having established the framework, we provide here a variety of concrete motivating examples to help illustrate both the scope of problems we consider, as well as the different types of resource constraints one could consider, and their effects on query complexity. For a more comprehensive list of problems that we consider, we refer to Appendix.~\ref{app:problems}, and for a longer list of examples of resource constraints and their effects on complexity, we refer to Appendix~\ref{app:motivating_examples}.

\begin{example}[Uniformity testing]\label{example:uniformity_testing} 
    As defined in Problem~\ref{prob:uniformity}, uniformity testing is the many-vs-one distinguishing task which asks one to decide whether an unknown distribution is the uniform distribution or some distribution $\epsilon$-far from the uniform distribution with respect to total variation distance. For this problem it is known that \textit{memory} is a relevant resource. More specifically, for distributions with support size $k$, without any memory constraint $\Theta(\nicefrac{\sqrt{k}}{\epsilon^2})$ samples are necessary and sufficient~\cite{paninski2008coincidence}. However, when only $m$ bits of memory are available, $\Omega(\nicefrac{k}{(m\epsilon^2)})$ samples are necessary~\cite{diakonikolas2019communication,berg2022memory,canonne2023simpler}. As such, for this problem, we can ask whether there exists an interactive proof protocol which allows a memory-constrained verifier, through interaction with an unconstrained prover, to outperform the memory-imposed lower bound.
\end{example}

\begin{example}[Purity testing]\label{example:purity-testing-informal} 
    As defined in Problem~\ref{prob:purity_testing}, purity testing is the problem which asks one to decide whether an unknown state is the maximally mixed state or a pure state.  For the randomized version of purity testing (see Definition~\ref{def:solve_random_problem}), with $\mu$ the Haar measure over pure states, it is known that both \textit{quantum memory}, and \textit{copy access} to the unknown state (i.e. access to copies of the unknown quantum state, as opposed to quantum statistical queries for example), are relevant resources. With respect to memory, if one has enough quantum memory to perform coherent measurements on two copies of the unknown state, then $O(1)$ copies are sufficient overall. However, with single-copy incoherent measurements, $\Omega(\sqrt{d} )$ copies of the unknown state are required~\cite{chen2022exponential, chen2024optimalpauli}. We can therefore ask whether a memory-constrained single-copy verifier can outperform the lower bound imposed by single-copy incoherent measurements via an interactive proof with an unconstrained prover, who has the ability and sufficient quantum memory to make multi-copy coherent measurements. With respect to copy access, when given only QSQ access to the unknown state, then $2^{\Omega(\tau^2 d)}$ QSQs of tolerance $\tau$ are necessary~\cite{arunachalam2023role}. As such, one could also ask whether a verifier with QSQ access could meaningfully delegate purity testing to an untrusted verifier with access to copies of the unknown quantum state.
\end{example}

\begin{example}[Quantum state tomography]\label{example:quantum-state-tomography} 
    As defined in Problem~\ref{prob:qst}, quantum state tomography is the problem in which, when given access to an  unknown quantum state, one should produce a classically described density matrix which is $\epsilon$-close to the unknown state with respect to the trace distance. 
    Optimal sample complexities for this problem are by now well understood. When restricted to single-copy measurements (but still making use of arbitrary and not only Pauli measurements), $\Theta(d^3/\varepsilon^2)$ copies are necessary and sufficient \cite{kueng2017low, chen2023whendoesadaptivity}. For general multi-copy measurements, the optimal sample complexity is $\Theta(d^2/\varepsilon^2)$ \cite{haah2016sample, odonnell2016efficient}, an improvement over the dimension-dependence of the single-copy case. (We note that this separation vanishes when the unknown state is promised to be pure, compare, e.g., Ref.~\cite{zhao2023learning}.)
    More recently, 
    Ref.\ \cite{chen2024optimalstate} demonstrated a smooth dependence of the sample complexity on the number of copies that can be measured simultaneously. 
    We see that \textit{quantum memory} is again a relevant resource, and we can ask about the existence of interactive proofs that allow single-copy verifiers to delegate quantum state tomography to untrusted multi-copy provers.
\end{example}

\begin{example}[Pauli shadow tomography]\label{example:Pauli-shadow-tomography}
    As defined in Problem~\ref{problem:pauli_shadow_tomography}, Pauli shadow tomography is the problem in which, when given access to an unknown quantum state, one should simultaneously predict the expectation values of all $4^n$ Pauli observables.
    For this problem, \emph{quantum memory}, \emph{copy access}, and \emph{noiseless quantum processing} are all known to be relevant resources.
    \begin{itemize}
        \item With respect to memory, if one has enough quantum memory to perform coherent measurements on two copies of the unknown state, then $\mathcal{O}(n)$ copies suffice \cite{huang2021information, king2024triply, chen2024optimalpauli}.
        However, with single-copy incoherent measurements, $\Omega(2^n)$ copies are required \cite{chen2022exponential, chen2024optimalpauli}.
        Thus, as in \Cref{example:purity-testing-informal}, we may ask whether a memory-constrained quantum verifier can overcome the exponential single-copy lower bound by interacting with an untrusted but memory-unconstrained quantum prover.
        \item With respect to copy access, when given only QSQ access to the unknown state, then $\Omega(\tau^2 2^{2n})$ QSQs of tolerance $\tau$ are necessary \cite{arunachalam2023role}.
        We may thus again ask whether a QSQ verifier can improve upon this exponential query complexity lower bound when delegating Pauli shadow tomography to a prover with quantum copy access.
        \item Finally, with respect to noiseless quantum processing, the authors of Ref.\ \cite{chen2023complexity} have shown that \emph{noisy intermediate-scale quantum} (NISQ) algorithms, where the noise is given by single-qubit depolarizing noise of noise strength $\lambda$, require $\Omega((1-\lambda)^{-n})$ copies of an unknown state to predict even the absolute values of all Pauli expectation values. As this simplified Pauli shadow tomography task can be solved from two-copy measurements on $\mathcal{O}(n)$ copies and with efficient noiseless quantum processing, this raises the question of whether a NISQ verifier can delegate Pauli shadow tomography in an interactive proof system to achieve an advantage in the number of copies.
    \end{itemize}
\end{example}

As already mentioned, for a more comprehensive list of concrete problems and motivating examples, including quantum state certification \cite{buadescu2019quantum}, stabilizer testing \cite{gross2021schur}, Pauli spike detection \cite{chen2023efficient}, agnostic parity learning and agnostic quantum state tomography \cite{grewal2024agnostic, chen2024stabilizerbootstrappingrecipeefficient}, we refer the reader to Appendices~\ref{app:problems} and \ref{app:motivating_examples}. 

\subsection{Overview of main results}\label{ss:results_overview}

\subsubsection{Limitations of interactive proofs for verifying learning and testing}\label{sss:limitations_overview}

Our first result is a broadly applicable ``no-go" theorem that puts limits on when and how much a resource-constrained verifier can profit from interacting with an unconstrained but untrusted prover. In particular, this theorem can be used to show that for a wide variety of scenarios, resource-constrained learning or testing algorithms \textit{cannot} gain any query complexity advantages through interactions with a resource-unconstrained but untrusted prover over what they could achieve when trying to solve the problem on their own.

\begin{restatable}[Limitations of interactive proofs for many-vs-one distinguishing]{theorem}{thmLimitations}
    \label{theorem:no-go-interactive-many-vs-one-distinguishing-intro-version}%
    Let $\mathsf{A}$ be a class of resource-constrained algorithms 
    such that any $A\in\mathsf{A}$ requires $m$ oracle queries to solve the many-vs-one distinguishing task $(x_A,\mathcal{X}_R,\mathcal{X})$ from $\mathsf{O}_V$-access, with probability $\geq 1-\delta$.
    Let $(V,P)$, with $V\in\mathsf{A}$, be an interactive proof system with success probability $1-\delta$ for $(x_A,\mathcal{X}_R,\mathcal{X})$ from $(\mathsf{O}_V, \mathsf{O}_P)$-access. If, for all $x\in\mathcal{X}$, the actions of the honest prover $P$ with access to $\mathsf{O}_P(x_A)$, when interacting with $V$ with access to $\mathsf{O}_V(x)$, can be simulated by an algorithm in $\mathsf{A}$, then $V$ has to make at least $m$ oracle queries.
\end{restatable}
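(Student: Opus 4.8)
The plan is to argue by contradiction: suppose $V \in \mathsf{A}$ makes fewer than $m$ oracle queries to $\mathsf{O}_V$ during the protocol, yet $(V,P)$ is a sound and complete interactive proof for the many-vs-one distinguishing task. From this I will manufacture a standalone algorithm $A' \in \mathsf{A}$ that solves the task from $\mathsf{O}_V$-access alone with probability $\geq 1-\delta$ and with fewer than $m$ queries, contradicting the assumed lower bound on $\mathsf{A}$.

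The construction of $A'$ is the heart of the argument. Given oracle access to $\mathsf{O}_V(x)$ for an unknown instance $x \in \mathcal{X}$, the algorithm $A'$ internally runs $V$ on its own oracle, and simultaneously simulates the \emph{honest} prover $P$ as if $P$ had access to $\mathsf{O}_P(x_A)$ — i.e., $A'$ always feeds $V$ the prover-messages that the honest prover would send in the world where the instance is the accept instance $x_A$. By hypothesis, for every $x \in \mathcal{X}$ this simulation of "$P$ on $\mathsf{O}_P(x_A)$ interacting with $V$ on $\mathsf{O}_V(x)$" can be carried out by an algorithm in $\mathsf{A}$; composing this with the (already $\mathsf{A}$-admissible) verifier $V$, and noting that the only oracle calls to $\mathsf{O}_V$ are those made by $V$ itself, we get $A' \in \mathsf{A}$ making at most as many $\mathsf{O}_V$-queries as $V$ does, hence fewer than $m$. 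Then $A'$ simply outputs whatever $V$ outputs at the end of the interaction.

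It remains to check that $A'$ is a correct algorithm for the distinguishing task. There are two cases. If $x = x_A$, then the messages $A'$ feeds to $V$ are exactly those of the honest prover in the honest execution, so $A'$'s behaviour is distributed identically to the honest run of $(V,P)$ on $x_A$; by completeness, $V$ outputs a valid solution (here, "accept") with probability $\geq 1-\delta$. If $x \in \mathcal{X}_R$, then from $V$'s point of view the party it is interacting with is \emph{some} prover strategy $P'$ — namely the strategy "ignore your oracle and replay the honest-on-$x_A$ transcript" — and soundness guarantees that against \emph{any} prover, $V$ fails to reject $x$ with probability $< \delta$; thus $A'$ outputs "reject" with probability $\geq 1-\delta$. (For $x \in \mathcal{X}\setminus(\mathcal{X}_R \cup \{x_A\})$ every output is valid, so there is nothing to check.) Hence $A'$ solves $(x_A,\mathcal{X}_R,\mathcal{X})$ from $\mathsf{O}_V$-access with probability $\geq 1-\delta$ using fewer than $m$ queries — contradicting the defining property of $\mathsf{A}$. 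Therefore $V$ must make at least $m$ queries.

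The step I expect to be the main obstacle is the bookkeeping that makes $A' \in \mathsf{A}$ genuinely rigorous: one must ensure that "simulating the honest prover's actions" does not smuggle in extra calls to $\mathsf{O}_V$ (the prover has $\mathsf{O}_P$, not $\mathsf{O}_V$, and in the simulation it evaluates on the \emph{fixed, known} instance $x_A$, so no query to the unknown oracle is needed), and that the resource class $\mathsf{A}$ is closed under the relevant composition — this is precisely what the theorem's hypothesis "can be simulated by an algorithm in $\mathsf{A}$" is designed to grant. A secondary subtlety is the quantum/randomized case: one checks that "distributed identically to the honest run" and the soundness bound against arbitrary $P'$ both hold at the level of the induced distribution (or quantum state) on $V$'s output register, which follows directly from \Cref{definition:interactive-proofs-for-learning-informal} without modification. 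Finally, note the argument only uses one-sided completeness (on $x_A$) and one-sided soundness (on $\mathcal{X}_R$), as claimed in the discussion preceding the statement.
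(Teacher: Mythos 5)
Your approach is the same as the paper's, but there is one genuine wrinkle in the soundness case that your write-up does not close. You say $A'$ ``simply outputs whatever $V$ outputs'' and then claim that soundness guarantees ``$V$ fails to reject $x$ with probability $<\delta$.'' That is not what the soundness clause of \Cref{definition:interactive-proofs-for-learning-informal} says: it only bounds the probability that $V$ \emph{does not abort and} outputs an invalid solution. A verifier that aborts with large probability on $x\in\mathcal{X}_R$ is perfectly sound, yet under your rule $A'$ would then output ``abort'' too often, and ``abort'' is not a valid answer for the distinguishing task per \Cref{def:solve_a_problem}. So as written, the claim ``thus $A'$ outputs `reject' with probability $\geq 1-\delta$'' does not follow.

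The fix is a one-line modification, and it is exactly what the paper does: the constructed distinguisher outputs ``reject'' whenever the simulated $V$ either aborts \emph{or} outputs ``reject,'' and outputs ``accept'' otherwise. Then for $x\in\mathcal{X}_R$ the soundness bound gives $\Pr[\text{output ``accept''}] = \Pr[V \text{ does not abort and accepts}] < \delta$, and for $x=x_A$ completeness gives $\Pr[\text{output ``accept''}] \geq 1-\delta$. With that patch, the rest of your argument — simulating the honest prover on the fixed oracle $\mathsf{O}_P(x_A)$ so no queries to the unknown oracle are smuggled in, invoking the simulatability hypothesis to conclude the composite lies in $\mathsf{A}$, charging all $\mathsf{O}_V$-queries to $V$ alone, splitting into the $x_A$, $\mathcal{X}_R$, and ``don't-care'' cases, and noting that only one-sided completeness and soundness are used — coincides with the paper's proof; your framing as a contradiction versus the paper's direct construction is only a presentational difference.
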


We emphasize that here we \emph{do not} assume that the honest prover $P$ can fully be simulated in $\mathsf{A}$, which would trivially imply that an interactive proof with such a prover yields no advantage. Rather, we assume only that the actions of $P$ \emph{when it has oracle access specifically to the accept instance} of the many-vs-one distinguishing task can be simulated in $\mathsf{A}$.

In line with the earlier examples, we will consider different classes of resource-constrained algorithms. For example, we can take $\mathsf{A}$ to be the class of algorithms that are memory-constrained or the class of NISQ quantum algorithms.

As discussed in detail in \Cref{s:limitations}, this no-go result can be applied to any scenario in which the only constraint on the verifier is a weaker form of oracle access than what is available to the prover, and to any quantum scenario in which a memory-constrained quantum verifier is interacting with a memory-unconstrained quantum prover via a \textit{classical} communication channel. As a consequence, our results imply that for any of the myriad tasks in which quantum memory is known to be a meaningful resource~\cite{chen2022exponential, huang2022quantum-advantage}, any party with a memory-constrained quantum device cannot overcome any of the known limitations implied by the constrained quantum memory via delegation, \textit{over classical communication channels}, to unconstrained but untrusted service providers. That is, we obtain from Theorem~\ref{theorem:no-go-interactive-many-vs-one-distinguishing-intro-version} the following corollaries:

\begin{restatable}[No advantage delegating to memory-unconstrained server via classical communication]{corollary}{corLimitationsmemory}
    \label{cor:no-go-delegating-memory}%
Any memory-constrained quantum verifier, interacting with an untrusted but memory-unconstrained quantum prover via a \textit{classical} communication channel, cannot gain any advantage in query complexity by delegating a many-vs-one distinguishing task to the prover via an interactive proof. Specifically, this implies that quantum memory-constrained verifiers cannot gain any query complexity advantage via delegation over what they could achieve in isolation for any of the following many-vs-one distinguishing tasks: 
\begin{enumerate}
    \item Purity testing~\cite{chen2022exponential}.
    \item State Pauli spike detection \cite{chen2022exponential, chen2024optimalpauli}, state preparation channel Pauli spike detection \cite{huang2022quantum-advantage}, and Choi state Pauli spike detection \cite{caro2023learning}.
    \item State certification and mixedness testing \cite{chen2022tight}.
    \item Unitarity testing \cite{chen2022exponential, chen2023unitarity}.
\end{enumerate}
\end{restatable}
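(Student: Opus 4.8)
The plan is to obtain \Cref{cor:no-go-delegating-memory} as a direct application of \Cref{theorem:no-go-interactive-many-vs-one-distinguishing-intro-version}. I would take $\mathsf{A}$ to be the class of memory-constrained quantum algorithms that communicate with a prover only over a classical channel; concretely, one may think of single-copy (incoherent) measurement algorithms, although the argument is identical for any class that is allowed to coherently process at most $k$ oracle copies at a time. Since \Cref{theorem:no-go-interactive-many-vs-one-distinguishing-intro-version} is used as a black box, the whole content of the proof is to verify its hypothesis: for every instance $x$, the honest prover $P$, \emph{when it is granted oracle access to the accept instance $x_A$} and interacts with $V$ running on $\mathsf{O}_V(x)$, can be simulated by some algorithm in $\mathsf{A}$. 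Granting this, the theorem yields that $V$ must still make at least $m$ oracle queries, where $m$ is the query complexity of the task for algorithms in $\mathsf{A}$, i.e.\ delegation buys nothing. The concrete numbers for the listed tasks are then inherited from the literature: $\Omega(\sqrt{d})$ copies for purity testing, the three Pauli spike detection variants, and state certification / mixedness testing under single-copy measurements \cite{chen2022exponential, chen2024optimalpauli, chen2022tight}, and the analogous single-copy lower bound for unitarity testing \cite{chen2022exponential, chen2023unitarity}. (Purity testing is phrased in its randomized form, in which case one applies the randomized analogue of \Cref{theorem:no-go-interactive-many-vs-one-distinguishing-intro-version}; the simulation argument is unaffected.)

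To verify the hypothesis, the key observation is that in every one of these tasks the accept instance $x_A$ is a \emph{single, publicly known} object: the maximally mixed state $\mathds{1}/d$ (purity testing; state, state-preparation-channel, and Choi-state Pauli spike detection; mixedness testing), the known target state $\sigma$ (state certification), or a fixed unitary channel such as the identity (unitarity testing). A memory-constrained algorithm can therefore \emph{emulate the oracle} $\mathsf{O}_P(x_A)$ on its own --- it merely prepares fresh copies of the known state $x_A$ (or applies the fixed unitary), which costs zero oracle queries and does not spend the resource that $\mathsf{A}$ restricts, since that restriction governs only how copies \emph{delivered by the unknown oracle} may be processed, not self-prepared states. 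I would then build the simulator $B$ as follows: $B$ runs $V$ honestly on the real oracle $\mathsf{O}_V(x)$ and, in parallel, internally plays the honest prover $P$ against a self-supplied $\mathsf{O}_P(x_A)$; because the channel is classical, the only data $B$ must route between the two roles is a classical transcript, which a memory-constrained (but otherwise unrestricted) algorithm can store and relay. Then $B \in \mathsf{A}$, $B$ makes exactly the oracle queries that $V$ makes, and $B$ reproduces $V$'s view in the honest interaction on $x_A$, which is precisely what the hypothesis of \Cref{theorem:no-go-interactive-many-vs-one-distinguishing-intro-version} requires.

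I expect the main obstacle to be making this last step fully rigorous, which is also exactly the place where \emph{classical} communication is indispensable. If the channel were quantum, the honest prover's messages could depend on copies of the \emph{unknown} state that $V$ forwards to it --- copies drawn from $\mathsf{O}_V(x)$ --- and then faithfully simulating the prover would force $B$ to coherently process several oracle-supplied copies, which is exactly what membership in $\mathsf{A}$ forbids; the reduction genuinely collapses there, consistently with the quantum-communication protocols underlying the positive results. A related point that needs care is the precise formalization of ``memory-constrained'': one wants the constraint stated as a bound on coherent processing of (equivalently, on communication with) the unknown oracle, so that self-preparation of the fixed instance $x_A$, arbitrary processing of those self-prepared copies, and classical bookkeeping are all unambiguously free. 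With that convention in place, the reduction to \Cref{theorem:no-go-interactive-many-vs-one-distinguishing-intro-version} is routine.
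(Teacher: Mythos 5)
Your proposal is correct and lands in essentially the right place, but it takes a slightly different and somewhat more fragile route than the paper, and the difference is worth noting precisely because you yourself flag it as a ``point that needs care.'' You simulate the honest prover \emph{physically}: the simulator prepares fresh copies of the known accept instance $x_A$, runs the coherent multi-copy prover on those self-prepared copies, and relies on the convention that the memory constraint on $\mathsf{A}$ only bites for coherent processing of \emph{oracle-supplied} copies, so that arbitrary coherent processing of self-prepared states is free. The paper instead performs a \emph{classical} simulation of the entire prover: since the channel is classical, every input the prover ever receives is either a classical message or (known, fixed) $x_A$, so the prover's state preparations, the action of $\mathsf{O}_P(x_A)$, and all coherent multi-copy measurements can be tracked by an (inefficient) classical computation that outputs the classical transcript the prover would have sent. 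This buys two things. First, it makes the argument robust to the exact formalization of ``memory-constrained'': even under a stricter model where a single-copy algorithm genuinely cannot entangle two registers of \emph{any} origin, the classical simulator still lies in $\mathsf{A}$, because it uses no quantum memory at all. Second, it makes the simulation argument uniform across cases -- no need to argue separately that self-prepared copies are ``free'' and that the lower bounds in the cited references tolerate arbitrary ancillas. Your route does work under the standard incoherent-measurement model of \cite{chen2022exponential,chen2022tight}, where a POVM on one oracle copy may act on an arbitrary ancilla register, but you should either state that model explicitly and verify the cited lower bounds hold against it (they do), or -- more simply -- switch to the classical-simulation argument and avoid the interpretational dependence entirely. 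One minor framing remark: \Cref{theorem:no-go-interactive-many-vs-one-distinguishing-intro-version} asks you to exhibit an algorithm in $\mathsf{A}$ that simulates the \emph{prover's} actions; your $B$ packages the simulated prover together with $V$ and is thus really the distinguisher $D$ that the theorem's proof constructs. That is fine content-wise, but cleaner is to isolate the prover simulator and then invoke the theorem as stated.
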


\begin{restatable}[No query complexity advantage delegating to server with stronger oracle]{corollary}{corLimitationsoracles}
    \label{cor:no-advantage-delegating-stronger-oracle}
    Any verifier, interacting with an untrusted prover who has a stronger form of oracle access, cannot gain any advantage in query complexity by delegating a many-vs-one distinguishing task to the prover via an interactive proof. Specifically, this implies that for any of the following many-vs-one distinguishing tasks, verifiers with the indicated weaker oracle cannot gain any query complexity advantage via delegation to provers with the indicated stronger oracle:
    \begin{enumerate}
        \item Parity testing when $V$ has statistical query access to the unknown distribution and $P$ has sample access~\cite{kearns1998efficient}. 
        \item Purity testing when $V$ has quantum statistical query access to the unknown state and $P$ has access to copies of the unknown state~\cite{arunachalam2023role,nietner2023unifying}.
        \item Stabilizer testing when $V$ has quantum statistical query access to the unknown state and $P$ has access to copies of the unknown state~\cite{arunachalam2023role,nietner2023unifying}.
        \item Quadratic function testing when $V$ has quantum statistical query access to the unknown state and $P$ has access to copies of the unknown state~\cite{arunachalam2023role}.
    \end{enumerate}
\end{restatable}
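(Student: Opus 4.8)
The plan is to derive \Cref{cor:no-advantage-delegating-stronger-oracle} as a direct instantiation of \Cref{theorem:no-go-interactive-many-vs-one-distinguishing-intro-version}, applied once to each of the four listed tasks. In every case I would take $\mathsf{A}$ to be the class of \emph{all} algorithms --- computationally and memory unbounded, classical or quantum as appropriate --- whose only restriction is that they access the unknown instance through the weaker oracle $\mathsf{O}_V$: statistical queries for parity testing, and quantum statistical queries of the relevant tolerance $\tau$ for purity, stabilizer, and quadratic function testing. With this choice the honest prover $P$ (which has the stronger oracle $\mathsf{O}_P$ and is otherwise unconstrained) is automatically resource-unconstrained relative to $V$, so the only hypothesis of \Cref{theorem:no-go-interactive-many-vs-one-distinguishing-intro-version} that needs verification is that the actions of $P$, \emph{when $P$ is given access specifically to $\mathsf{O}_P(x_A)$}, can be reproduced by an algorithm in $\mathsf{A}$.

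The observation that makes this hypothesis hold is that in each of these tasks the accept instance $x_A$ is a single, explicitly known object that can be generated internally with no queries to $\mathsf{O}_V$: the uniform distribution for parity testing, the maximally mixed state for purity testing, the all-zeros stabilizer state for stabilizer testing, and a fixed phase state of a designated quadratic function for quadratic function testing. Hence a machine in $\mathsf{A}$ can simulate $\mathsf{O}_P(x_A)$ on its own --- sampling uniform bits, or preparing fresh copies of the fixed pure or mixed state --- feed those simulated responses to an internal copy of $P$, and thereby produce exactly the message distribution that honest $P$ with $\mathsf{O}_P(x_A)$-access would send to $V$ on any incoming transcript; note this simulator spends zero $\mathsf{O}_V$-queries, so in the combined ``$V$ plus simulated prover'' algorithm every oracle query is $V$'s own. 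This step is essentially automatic here, but it is the conceptual crux: it is precisely because the \emph{accept} side of a many-vs-one task is pinned to one known instance that the honest prover's behavior on that side carries no information the verifier could not have manufactured itself.

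With the hypothesis in place, \Cref{theorem:no-go-interactive-many-vs-one-distinguishing-intro-version} yields that in any interactive proof with an $\mathsf{A}$-constrained verifier, $V$ must still make at least $m$ oracle queries, where $m$ is the $\mathsf{O}_V$-query complexity of solving the task in isolation. It then remains to cite the known values of $m$: the $2^{\Omega(n)}$ statistical-query lower bound for distinguishing the uniform distribution from a parity-correlated one \cite{kearns1998efficient} for task (1), and the exponential lower bounds on the number of tolerance-$\tau$ quantum statistical queries needed for purity, stabilizer, and quadratic function testing \cite{arunachalam2023role, nietner2023unifying} for tasks (2)--(4). Since these same bounds also lower-bound the query complexity of $V$ when acting alone, delegation to the stronger-oracle prover cannot beat them, which is exactly the claimed absence of a query-complexity advantage.

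The main obstacle I expect is not in the delegation argument but in the bookkeeping around the lower bounds: for each task one must check that the cited bound is stated, or can be restated, for the specific \emph{many-vs-one distinguishing} formulation used here (the designated accept instance versus the appropriate reject set), from exactly the oracle model $\mathsf{O}_V$ assigned to $V$, and in the tolerance/accuracy regime in which the protocol operates --- e.g.\ that the parity SQ lower bound survives the reduction to the distinguishing task with SQ tolerance not too small, and that the QSQ hardness results of \cite{arunachalam2023role, nietner2023unifying} are invoked with a tolerance $\tau$ consistent with the one defining $\mathsf{A}$. Once these formulations are matched up, the simulability of $\mathsf{O}_P(x_A)$, and hence the application of \Cref{theorem:no-go-interactive-many-vs-one-distinguishing-intro-version}, are routine.
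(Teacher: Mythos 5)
Your proposal is essentially the same argument as the paper's: choose $\mathsf{A}$ to be the class of (computationally and memory) unbounded algorithms constrained only by the weaker oracle $\mathsf{O}_V$, observe that the accept instance $x_A$ is a known fixed object so a member of $\mathsf{A}$ can simulate every query to $\mathsf{O}_P(x_A)$ and hence the honest prover's behavior with zero $\mathsf{O}_V$-queries, and then apply \Cref{theorem:no-go-interactive-many-vs-one-distinguishing-intro-version} together with the cited SQ/QSQ lower bounds. The paper phrases this by first taking $V$ unbounded and then noting constrained verifiers can only do worse, which is the same content as your upfront choice of $\mathsf{A}$.
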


Note that \Cref{theorem:no-go-interactive-many-vs-one-distinguishing-intro-version} and the subsequent corollaries are stated only for many-vs-one distinguishing problems, and provide query complexity lower bounds for their verification. However, in \Cref{s:limitations} we provide extensions both to learning problems and to computational complexity lower bounds. Using these extensions, one can also prove that memory-constrained verifiers cannot gain any query complexity advantage through delegation to untrusted but memory-unconstrained provers for any of the following learning or estimation problems:
\begin{enumerate}
    \item Purity estimation \cite{chen2022exponential}.
    \item (Pauli) threshold decision/search and (Pauli) shadow tomography~\cite{chen2022exponential,chen2024optimalpauli}.
    \item Unitarity estimation \cite{chen2022exponential, chen2023unitarity}.
    \item Symmetry classification \cite{aharonov2022quantum, chen2022exponential}.
    \item Learning polynomial-time quantum processes ~\cite{huang2022quantum-advantage}.
    \item Pauli transfer matrix learning~\cite{caro2023learning}.
\end{enumerate}
Similarly, one can prove that memory-constrained verifiers cannot gain an advantage in the number of measurements needed for Pauli channel eigenvalue estimation and channel Pauli spike detection \cite{chen2022quantum, chen2024tight, chen2023efficient} through untrusted classical delegation.
Moreover, relying on similar arguments, verifiers with restricted oracles cannot gain any query complexity advantages via delegation to untrusted provers with stronger oracles for the following learning problems:
\begin{enumerate}
    \item Learning Clifford circuits when $V$ has quantum process statistical query access~\cite{WadhwaDoosti} to the unknown circuit and $P$ has query access. 
    \item Pauli shadow tomography when $V$ has quantum statistical query access to the unknown state and $P$ has access to copies~\cite{arunachalam2023role}. 
    \item Learning displacement amplitudes when $P$ has access to copies of the unknown state and its conjugate state, but $V$ only has access to copies of the state~\cite{king2024exponential}.
\end{enumerate}
Finally, \Cref{theorem:no-go-interactive-many-vs-one-distinguishing-intro-version} also implies that a $\mathsf{NISQ}$ verifier cannot gain any query complexity advantage for state Pauli spike detection by 
interacting with an untrusted $\mathsf{BQP}$ prover; for Pauli shadow tomography, the possible advantage to be gained from delegation is extremely limited, in particular the $\mathsf{NISQ}$ verifier still faces an exponential lower bound even as part of an IP with an untrusted but powerful quantum prover.

\subsubsection{Power of interactive proofs for verifying quantum learning and testing}\label{sss:power_overview}

In the previous section we have shown how Theorem~\ref{theorem:no-go-interactive-many-vs-one-distinguishing-intro-version} implies stringent limitations on the ability of quantum memory-constrained parties to meaningfully delegate learning and testing tasks to memory-unconstrained parties via \textit{classical} communication channels. However, as explained in detail in Section~\ref{s:go_results}, these limitations may no longer apply when the communication is via a \textit{quantum} channel. As such, we shift our focus to the development of concrete interactive proof protocols in this setting, which allow memory-constrained verifiers to reap 
some advantage via delegation.

A straightforward first method for developing 
interactive proofs for quantum testing and learning in the presence of a resource mismatch is to employ \textit{verified blind quantum computing} (VBQC) protocols~\cite{gheorghiu2018verification, fitzsimonsUnconditionallyVerifiableBlind2017}.
VBQC protocols allow a client 
to securely delegate quantum computations to a remote server, ensuring the server learns 
nothing about the computations while the client can verify their correctness.
Moreover, via quantum communication, the client can provide the server with any input state of their choice to apply the target computation on, including the unknown quantum data.
In the present  context, a verifier $V$ can use VBQC to delegate to a prover $P$ the execution of a multi-copy measurement on multiple copies of the unknown quantum state by sending them one at a time.
Thus, $V$ can ensure that the intended computation was performed on the provided quantum data using VBQC. The following observation, which we derive in more detail in \Cref{app:VBQC}, summarizes this idea:

\begin{restatable}[Generic interactive proofs for quantum 
testing/learning -- VBQC version]{observation}{obsgenericVBQC}\label{lem:interactive-quantum-proofs-vbqc}
    If a quantum learning/testing problem can be solved with success probability $\geq 1-\frac{\delta}{2}$ from $m$ copies by a coherent multi-copy learner $\mathcal{A}$, then there exists an interactive proof system $(V,P)$, which succeeds with probability $\geq 1-\delta$, with quantum communication between an incoherent single-copy $V$ and a coherent multi-copy $P$, such that $V$ uses at most $m$ copies. The computational and communication complexities of this protocol are efficient with respect to the gate complexity of $\mathcal{A}$.
\end{restatable}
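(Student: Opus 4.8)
*
The plan is to instantiate $(V,P)$ by delegating, via a \emph{verified blind quantum computing} (VBQC) protocol in the measurement-based model, the computation ``run the coherent multi-copy learner $\mathcal{A}$ on $m$ copies of the unknown state''. Concretely, $V$ will request the $m$ copies from $\mathsf{O}_V(x)$ \emph{one at a time}: on receiving the $i$-th copy it applies at most a single-qubit one-time-pad rotation (as prescribed by the VBQC protocol for quantum inputs) and forwards it to $P$ as an input qubit of the delegated measurement pattern, discarding it before requesting the next copy. Hence $V$ never holds two copies simultaneously and never applies any operation jointly to two copies, so $V$ is an incoherent single-copy algorithm, while all coherent, multi-copy processing demanded by $\mathcal{A}$ is performed by $P$ on the graph state it assembles. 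Apart from the relayed input qubits, $V$ only prepares single-qubit auxiliary states ($\ket{+_\theta}$ qubits, dummy $\ket{0}/\ket{1}$ qubits, and trap qubits), sends adaptive measurement angles, receives classical outcomes, and finally performs the classical trap checks and output decoding. The honest prover entangles all qubits into the brickwork/graph state and carries out the instructed single-qubit measurements (plus standard trap operations); it needs no oracle access of its own beyond the $m$ copies relayed by $V$.

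For \emph{completeness}, when $P$ is honest the VBQC protocol faithfully simulates a noiseless run of $\mathcal{A}$ on the $m$ copies and all trap checks pass, so $V$ does not abort and its output is distributed exactly as the output of $\mathcal{A}$; since $\mathcal{A}$ produces a valid solution with probability $\ge 1-\delta/2 \ge 1-\delta$, so does $V$. For \emph{soundness}, I would invoke the verifiability guarantee of the VBQC protocol in its composable, simulation-style form: for any prover $P'$, the joint distribution of ($V$'s output, whether $V$ aborts) lies within total-variation distance $\epsilon_{\mathrm{VBQC}}$ of an ideal distribution in which, conditioned on $V$ not aborting, $V$'s output equals that of an honest run of $\mathcal{A}$. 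Choosing the VBQC security parameter so that $\epsilon_{\mathrm{VBQC}} \le \delta/2$, a union bound yields
\[
\Pr[\,V\text{ does not abort and outputs an invalid solution}\,] \;\le\; \Pr[\,\mathcal{A}\text{ outputs an invalid solution}\,] + \epsilon_{\mathrm{VBQC}} \;\le\; \delta/2 + \delta/2 \;=\; \delta,
\]
which is exactly the required soundness bound.

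Finally, for \emph{efficiency}: the delegated computation is $\mathcal{A}$ together with its $m$ input qubits, so its size is polynomial in the gate complexity of $\mathcal{A}$, and the VBQC overhead --- in $V$'s local computation, in the number of auxiliary and trap qubits, and in the classical and quantum communication --- is polynomial in this size and in $\log(1/\delta)$. Importantly, this overhead does \emph{not} inflate the number of copies of the unknown state: each of the $m$ copies is consumed once as an input qubit, so $V$ uses at most $m$ copies. This last point is where I expect the argument to need the most care. One must not drive $\epsilon_{\mathrm{VBQC}}$ down by sequentially repeating the whole protocol, since each repetition would consume $m$ fresh copies; instead one should use a VBQC protocol whose verifiability error is already exponentially small within a single run (e.g.\ by interspersing many trap qubits, as in fault-tolerant or ``robust'' VBQC), so that all soundness amplification is absorbed into auxiliary qubits prepared locally by $V$. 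A secondary technical point is to state the verifiability guarantee in the simulation-style form used above and to compose it with the intrinsic failure probability of $\mathcal{A}$ via the triangle inequality; this is routine. Blindness of the VBQC protocol is not needed for the argument, though it comes for free. The full construction is carried out in \Cref{app:VBQC}.
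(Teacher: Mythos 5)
Your proposal is correct and follows essentially the same strategy as the paper's proof in Appendix~\ref{app:VBQC}: both delegate the coherent multi-copy algorithm $\mathcal{A}$ via the trap-based MBQC VBQC protocol of Fitzsimons--Kashefi, stream the $m$ copies one qubit at a time so the verifier stays single-copy, pick the VBQC security parameter so that the verifiability error is at most $\delta/2$ within a single run, and union bound that against $\mathcal{A}$'s intrinsic $\delta/2$ failure probability. The one small difference is that you phrase soundness via an abstract simulation-style verifiability guarantee, whereas the paper invokes Theorem~12 of Ref.~\cite{fitzsimonsUnconditionallyVerifiableBlind2017} directly to bound the trace distance between the actual output state and an ideal accept/reject state $\psi_{\mathrm{ideal}}(p)$; these are equivalent formulations, and you correctly flag the key subtlety that soundness amplification must come from trap qubits within a single run rather than sequential repetition, which would inflate the copy count.
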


It is interesting to note that, while the VBQC protocols of \cite{gheorghiu2018verification, fitzsimonsUnconditionallyVerifiableBlind2017} and, for the same reasons, of \cite{broadbent2018howtoverify}, give rise to generic interactive proofs for quantum testing and learning, it is not true in general that verified quantum computing implies verified quantum testing and learning. Consider for instance verified quantum computing protocols with a classical verifier and multiple entangled quantum provers \cite{reichardt2013classical, coladangelo2024verifier-on-a-leash} or the breakthrough result of \cite{mahadev2022classical-verification} that allows a classical verifier to interactively verify the polynomial-time quantum computations of a single prover under a cryptographic hardness assumption. In these scenarios, the verifier is entirely classical, hence the communication between verifier and prover(s) is necessarily classical as well. Therefore, \Cref{theorem:no-go-interactive-many-vs-one-distinguishing-intro-version} implies that neither of the interactive verification protocols from \cite{reichardt2013classical, coladangelo2024verifier-on-a-leash,mahadev2022classical-verification} enable the verifier to achieve query complexity advantages. This demonstrates that, although certain forms of verified quantum computing also facilitate quantum testing and learning, the presence of quantum data -- particularly untrusted data on the prover side -- introduces a layer of complexity to verified quantum testing and learning, extending beyond the established scope of verifying quantum computations: The key issue becomes to verify that $P$ performs the correct processing \emph{on the correct data}.

The VBQC approach to interactive proofs for quantum testing and learning has the advantage of being generic and working almost out-of-the box. However, it also comes with two drawbacks.
\begin{enumerate}
    \item First, to make use of VBQC protocols in our context, $V$ needs to know the circuit (implementing the measurement) that they would run to solve the problem at hand if they were capable of coherent multi-copy quantum processing.   
    \item Second, all the quantum data comes from the verifier's oracle $\mathsf{O}_V$; the oracle $\mathsf{O}_P$ is not used at all. So, this way $V$ can overcome single-copy lower bounds through interaction with $P$, but it can never surpass multi-copy lower bounds.
\end{enumerate}

In the following we will give two different approaches that partially alleviate these drawbacks. Moreover, we will apply these approaches to derive concrete interactive proof protocols for specific problems, which allow resource-constrained verifiers to gain query complexity advantages over what they could achieve on their own, or over what even the unconstrained prover could achieve on their own. 
We summarize the query complexities obtained by our protocols in \Cref{table:advantages-via-IPs}, along with other relevant lower bounds which allow one to extract the advantages which can be achieved via the interactive proofs we provide.

\begin{table}
\begin{center}
\begin{tabular}{||c | c c c c||} 
 \hline
  & $\begin{matrix}
  \text{Interactive proof}\\
  \text{(quantum comm.)}
  \end{matrix}$ & $\begin{matrix}
  \text{Interactive proof}\\
  \text{(classical comm.)}
  \end{matrix}$ & $\begin{matrix}
  \text{Coherent}\\
  \text{measurements}
  \end{matrix}$ & $\begin{matrix}
  \text{Incoherent}\\
  \text{measurements}
  \end{matrix}$ \\[5pt] 
 \hline\hline{} & {}& {}&{} &{}\\[-1em]
  Purity testing &  $\begin{matrix}
  \mathcal{O}(1) \\
  \text{(Theorem~\ref{theorem:interactive-purity-testing-black-box-intro-version})}
  \end{matrix}$ &  $\begin{matrix}
  \Omega(\sqrt{d}) \\
  \text{(Corollary~\ref{cor:no-go-delegating-memory})}
  \end{matrix}$ & $\begin{matrix}
  \mathcal{O}(1) \\
  \text{\cite{chen2022exponential}}
  \end{matrix}$ &  $\begin{matrix}
  \Omega(\sqrt{d}) \\
  \text{\cite{chen2022exponential}}
  \end{matrix}$\\ 
  \hline{} & {}& {}&{} &{}\\[-1em]
 $\begin{matrix}
  \text{Quantum}\\
  \text{state tomography}
  \end{matrix}$ &  $\begin{matrix}
  \mathcal{O}(d\epsilon^{-2}) \\
  \text{(Theorem~\ref{theorem:interactive-state-tomography-intro-version})}
  \end{matrix}$ &  $\begin{matrix}
  \Omega(d^{3/2} \epsilon^{-2}) \\
  \text{(Corollary~\ref{cor:no-advantage-qst})}
  \end{matrix}$ & $\begin{matrix}
  \Omega(d^2 \epsilon^{-2}) \\
  \text{\cite{odonnellEfficientQuantumTomography2017}}
  \end{matrix}$ &  $\begin{matrix}
  \Omega(d^3 \epsilon^{-2}) \\
  \text{\cite{haah2016sample}}
  \end{matrix}$\\
 \hline {} & {}& {}&{} &{}\\[-1em]
 $\begin{matrix}
  \text{Agnostic}\\
  \text{rank-$k$ 
  tomography}
  \end{matrix}$ & $\begin{matrix}
  \mathcal{O}(k^2\epsilon^{-4}) \\
  \text{(Theorem~\ref{theorem:agnostic-rank-k-tomography})}
  \end{matrix}$ & ? & $\begin{matrix}
  \widetilde{\Omega}(dk) \\
  \text{\cite{odonnell2016efficient}}
  \end{matrix}$ &? \\
 \hline{} & {}& {}&{} &{}\\[-1em]
 $\begin{matrix}
  \text{Agnostic}\\
  \text{stabilizer learning}
  \end{matrix}$ & $\begin{matrix}
  \mathcal{O}(\epsilon^{-2})\\
  \text{(Theorem~\ref{theorem:agnostic-stabilizer-state-learning-intro-version})}
  \end{matrix}$ & $\begin{matrix}
      \Omega(\sqrt{n})\\
      \text{(\Cref{corollary:no-go-stabilizer-testing})}
  \end{matrix}$ & $\begin{matrix}
  \Theta(n)\\
  \text{(via~\cite{arunachalam2023optimal})}
  \end{matrix}$ & $\begin{matrix}
  \Theta(n^2)\\
  \text{(via~\cite{arunachalam2023optimal})}
  \end{matrix}$\\[0.9em]
 \hline
\end{tabular}\caption{
\textbf{Highlights of quantum results for memory-constrained verifiers.} The last two columns contain known upper and lower bounds for solving some of the problems we consider in this work.
We contrast these with lower bounds on the query complexity of verification with classical communication (implied by Theorem~\ref{theorem:no-go-interactive-many-vs-one-distinguishing-intro-version}), 
as well as upper bounds derived from the concrete interactive-proof protocols we provide when quantum communication is allowed.
}\label{table:advantages-via-IPs}
\end{center}
\end{table}

In removing the first drawback, we aim to construct protocols in which the only information that $V$ needs is: 
\begin{enumerate}
    \item The description of the problem -- e.g., the tuple $(x_A,\mathcal{X}_R,\mathcal{X})$ for a many-vs-one distinguishing problem.
    \item A promise that the honest prover can solve the problem given $m$ copies of the unknown quantum state.
\end{enumerate}
Said another way, we wish to construct interactive protocols in which the verifier $V$ can treat the prover $P$ as a ``problem solving black-box". This means we aim for protocols in which $V$ directly verifies that $P$ correctly solved the task of interest, rather than verifying that $P$ performed a certain desired quantum computation known to $V$. As such, we call an interactive protocol a \emph{black-box} protocol whenever $V$ is only required to know the two pieces of information listed above, and not a description of the coherent multi-copy quantum algorithm for solving the problem.  
Intuitively, we can think of a black-box protocol as one in which the verifier also outsources the programming expertise required to solve a learning/testing task of interest.
We give a concrete example of such a protocol below, for the problem of purity testing, defined in Problem~\ref{prob:purity_testing} and discussed in Example~\ref{example:purity-testing-informal}.

\begin{restatable}[Interactive proof for purity testing -- black-box version]{theorem}{thmInteractivePurityTesting}
    \label{theorem:interactive-purity-testing-black-box-intro-version}
    There exists a black-box interactive proof system $(V,P)$ for purity testing, between an incoherent single-copy $V$ and a coherent multi-copy $P$, efficiently communicating via a quantum channel, such that $V$ and $P$ both are computationally efficient and such that $V$ uses at most $\mathcal{O}(1)$ copies.
\end{restatable}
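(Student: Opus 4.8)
The plan is to build a black-box interactive proof for purity testing by exploiting the *structure* of the purity-testing problem rather than the details of any particular coherent algorithm. The key observation is that purity testing in its randomized form (Definition~\ref{def:solve_random_problem}, with $\mu$ the Haar measure) is a many-vs-one distinguishing task: either the unknown state $\rho$ is the maximally mixed state $I/d$, or it is a pure state $\ketbra{\psi}{\psi}$. The verifier $V$ will send copies of the unknown state to $P$ \emph{one at a time} over the quantum channel, have $P$ report a classical answer (``pure'' or ``mixed''), and then perform a small number of \emph{cheap single-copy consistency checks} that $V$ \emph{can} do on its own, to catch a cheating $P'$. The central difficulty --- and the thing that makes ``black-box'' possible here --- is that purity has a self-certifying feature: if $P$ claims the state is pure and further claims to be able to produce copies of the alleged pure state, then a single-copy verifier can test that claim via the SWAP test on two copies (one from its own oracle, one returned by the prover), because for a genuinely pure state $\ket{\psi}$ the overlap $|\braket{\psi}{\psi}|^2 = 1$, whereas any state with purity bounded away from $1$ fails the SWAP test with constant probability.

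**Concretely, the protocol I would write down proceeds as follows.** (i) $V$ draws $O(1)$ copies of $\rho$ from $\mathsf{O}_V$; it sends all but a constant number to $P$ through the quantum channel, keeping a couple in reserve. (ii) $P$, being coherent and multi-copy, performs the optimal purity test (e.g. the two-copy SWAP test, which distinguishes $I/d$ from any pure state with constant advantage using $O(1)$ copies \cite{chen2022exponential}) and reports a bit $b\in\{\text{accept},\text{reject}\}$ to $V$; if $b=\text{accept}$ (``pure''), $P$ is additionally asked to send back to $V$ a fresh copy of the state it believes it holds. (iii) $V$ now runs a SWAP test between one of its reserved copies of $\rho$ and the copy returned by $P$; it accepts ``pure'' only if this SWAP test passes, and otherwise aborts or rejects. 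To amplify, repeat the whole thing $O(\log(1/\delta))$ times, which is still $O(1)$ copies for constant $\delta$. Completeness is immediate: the honest $P$ correctly identifies $\rho$, and when $\rho$ is pure it returns a genuine copy of $\ketbra{\psi}{\psi}$, so $V$'s SWAP test passes with probability $1$; when $\rho = I/d$ the honest $P$ says ``reject'' and $V$ simply outputs ``reject''. For soundness one splits into two cases: if $\rho=I/d$ and $P'$ tries to convince $V$ to accept, then \emph{whatever} state $\sigma$ the prover sends back, $V$'s SWAP test between $\sigma$ and its reserved copy of $I/d$ passes with probability $\frac{1}{2}(1+\Tr{\sigma\cdot I/d}) = \frac{1}{2}(1 + 1/d)$, which is bounded away from $1$, so $V$ catches the lie with constant probability; if $\rho=\ketbra{\psi}{\psi}$ and $P'$ wants $V$ to reject, $P'$ just says ``reject'' --- but here we need the two-sided guarantee, so in that case $V$ should \emph{not} blindly trust a ``reject'': instead, upon receiving ``reject'', $V$ itself runs a single cheap single-copy measurement that has nonzero advantage for distinguishing (e.g. measures in a fixed basis and checks the empirical distribution against uniform over its reserved copies), or --- cleaner --- $V$ always demands that $P$, when it says ``reject'', exhibits \emph{no} useful certificate, and $V$ then falls back to its own single-copy test with whatever small advantage it has, which still suffices because the protocol only needs to beat an honest prover, and amplification over $O(1)$ rounds pushes the error below $\delta$; one has to be slightly careful that the reserved-copy budget stays $O(1)$.

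**I expect the main obstacle to be the soundness analysis in the ``$\rho$ is pure, $P'$ says reject'' direction**, i.e. the two-sidedness demanded by Definition~\ref{definition:interactive-proofs-for-learning-informal}: there is no obvious way for a single-copy $V$ to \emph{force} a cheating prover to prove mixedness, since ``I measured and got a flat distribution'' is not a checkable certificate with only $O(1)$ single-copy accesses. The resolution I would pursue is to use the \emph{randomized} version of the problem (Definition~\ref{def:solve_random_problem}): because the pure state is Haar-random and unknown to $P'$, a prover who wants $V$ to reject gains nothing by lying unless it can itself certify mixedness, and one shows that no such certificate exists that passes $V$'s return-a-copy-and-SWAP-test check --- effectively, the ``accept'' branch's certificate requirement is what disciplines the ``reject'' branch, because the honest behavior is for $P$ to return a copy whenever it claims purity, and a $P'$ that claims ``reject'' on a genuinely pure input simply causes $V$ to output ``reject'', which for the randomized many-vs-one task can be folded into the $\delta$ error budget provided the Haar measure puts negligible mass on states that a single-copy $V$ cannot tell from mixed in $O(1)$ copies --- but it puts \emph{zero} such mass, since every pure state is maximally far from $I/d$. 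A second, more technical obstacle is making the ``black-box'' claim honest: $V$ must not need to know $P$'s circuit, so the only structural fact $V$ uses is the SWAP test, which $V$ implements itself; the promise ``$P$ solves purity testing from $O(1)$ copies'' is exactly what guarantees completeness, and the SWAP-test-on-a-returned-copy is exactly what buys soundness without $V$ knowing anything about $P$'s internals. Finally, I would verify the communication and gate complexities: each round exchanges $O(1)$ copies (hence $O(\log d)$ qubits) plus $O(1)$ classical bits, and the SWAP test is a constant-depth circuit, so both $V$ and $P$ are computationally efficient, matching the theorem statement.
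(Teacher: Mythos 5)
Your protocol is sound for one direction of the two-sided soundness requirement but has a genuine gap on the other, and the gap is precisely the one you flagged as ``the main obstacle'' --- your proposed resolution does not close it.

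Concretely: the only verification mechanism in your protocol (the SWAP test against a returned copy) is triggered \emph{only when $P$ claims ``pure''}. A dishonest $P'$ who unconditionally reports ``reject'' in every round sidesteps that check entirely, and $V$ has nothing left to check. When $\rho$ is in fact a pure state, this $P'$ causes $V$ to output the wrong answer with probability~$1$, violating soundness. Your attempted fix is incorrect: you assert that the Haar measure puts ``zero mass'' on pure states a single-copy, $O(1)$-sample $V$ cannot distinguish from $I/d$, but it puts \emph{full} mass on such states --- that is exactly the content of the $\Omega(\sqrt d)$ single-copy lower bound (\cite{chen2022exponential}) the protocol is trying to circumvent, and ``maximally far in trace distance'' is irrelevant to what $O(1)$ incoherent single-copy measurements can detect. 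Likewise, the remark that ``the protocol only needs to beat an honest prover'' is not true under Definition~\ref{definition:interactive-proofs-for-learning-informal}: soundness must hold against all $P'$, for both possible values of the hidden instance. The paper closes this hole with a qualitatively different device: $V$ randomly interleaves computation rounds (where it sends $U\rho U^{\dagger}$) with \emph{pure-test rounds} (where $V$ itself prepares $U\ket{0}\bra{0}U^{\dagger}$ for $U$ drawn from a unitary $m$-design) and \emph{mixed-test rounds} (sending $\mathds{1}/d$). Because the design makes a pure-test round statistically indistinguishable from a computation round when $\rho$ is pure, a prover who systematically answers ``mixed'' must also answer ``mixed'' on pure-test rounds and gets caught; symmetrically, mixed-test rounds catch the dual cheat. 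Your scheme has no analogue of the pure-test round, i.e., no round in which $V$ \emph{itself knows} a pure ground truth that the prover cannot distinguish from the unknown computation input, and that is the missing idea.

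Your ``return-a-copy-and-SWAP'' check on the ``pure'' branch does give constant soundness against a prover falsely claiming purity on the maximally mixed state (the returned state passes with probability $\tfrac12(1+1/d)\approx\tfrac12$), and that half of your argument is fine; it is a legitimate alternative certificate to the paper's mixed-test rounds. But by itself it is a one-sided protocol, and to match the theorem you would need to graft on something like the paper's pure-test rounds or another mechanism by which $V$ can, from $O(1)$ copies and without a claimed hypothesis, penalize a prover that denies purity.
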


To remove the second drawback of VBQC-based interactive proof protocols, we could instead imagine a proof system where $V$ delegates the entire learning algorithm to $P$, including the quantum data requirements, and simply asks $P$ to send over the solution. For such a proof system to be sound, $V$ then only needs to use their data oracle to verify the quality of the alleged solution sent by the prover -- i.e., to decide whether it is a valid solution. 
For our first example in which this is achieved, we consider quantum state tomography as defined in Problem~\ref{prob:qst} and discussed in Example~\ref{example:quantum-state-tomography}.
More specifically, we construct an interactive proof protocol in which the verifier first asks the prover to solve the quantum state tomography problem using their own copies of the unknown quantum state, before then interacting with the prover to verify that this is indeed a valid solution. This way, as shown in Table~\ref{table:advantages-via-IPs}, the verifier is able to use fewer quantum state copies than even the coherent multi-copy prover would require to solve the problem in isolation.

\begin{restatable}[Interactive quantum state tomography -- informal]{theorem}{thmInteractiveTomography}
    \label{theorem:interactive-state-tomography-intro-version}
    There exists an interactive proof system $(V,P)$ for state tomography in trace distance with quantum communication between an incoherent single-copy $V$ and a coherent multi-copy $P$ such that $V$ uses at most $\mathcal{O}\left(\frac{d \log \delta^{-1}}{\varepsilon^2}\right)$ copies of the unknown \(d\)-dimensional quantum state.
\end{restatable}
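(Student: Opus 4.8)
The plan is to instantiate the ``delegate the whole algorithm, then only verify the answer'' strategy discussed above, with one extra twist: because single-copy trace-distance certification against a \emph{known} state is itself expensive---it is exactly what forces an $\Omega(d^{3/2}\varepsilon^{-2})$ lower bound on any classical-communication protocol (cf.\ \Cref{table:advantages-via-IPs})---the verifier will also delegate the coherent certification \emph{measurement} back to the prover over the quantum channel. Concretely, the protocol runs in two phases. In Phase~1, $V$ asks $P$ to run an optimal coherent multi-copy tomography routine on $P$'s own copies of the unknown $d$-dimensional state $\rho$ and to transmit a classical description of a density matrix $\hat\rho$ with the claim $\|\hat\rho-\rho\|_1 \le \varepsilon/4$; the honest $P$, being resource-unconstrained, can achieve this from $\mathcal{O}(d^2\varepsilon^{-2}\log\delta^{-1})$ copies \cite{haah2016sample, odonnell2016efficient}, and $V$ aborts at once if the message is not a valid density matrix. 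In Phase~2, $\hat\rho$ is classically known to $V$, so $V$ wants to decide the tolerant certification question ``$\|\rho-\hat\rho\|_1 \le \varepsilon/2$'' versus ``$\|\rho-\hat\rho\|_1 > \varepsilon$'' for the fixed reference state $\hat\rho$; this testing problem is solved by a coherent multi-copy algorithm from $\mathcal{O}(d\varepsilon^{-2})$ copies of $\rho$, and from $\mathcal{O}(d\varepsilon^{-2}\log\delta^{-1})$ copies with confidence $1-\delta/4$ \cite{buadescu2019quantum}. Since the circuit of that algorithm is fully determined by the now-known $\hat\rho$, \Cref{lem:interactive-quantum-proofs-vbqc} hands us an interactive proof for this certification question, succeeding with probability $\ge 1-\delta/2$, in which an incoherent single-copy $V$ feeds its own $\mathcal{O}(d\varepsilon^{-2}\log\delta^{-1})$ copies of $\rho$ into the protocol one at a time while the coherent processing happens blindly and verifiably on $P$'s side. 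If this sub-protocol accepts, $V$ outputs $\hat\rho$; otherwise $V$ aborts. The copies of $\rho$ consumed by $V$ are thus $\mathcal{O}(d\varepsilon^{-2}\log\delta^{-1})$, as claimed.

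For completeness, with the honest $P$ the estimate $\hat\rho$ from Phase~1 is $(\varepsilon/4)$-close to $\rho$ except with probability $\le \delta/2$, so it lies strictly in the ``close'' regime of the certification question; conditioned on this, the Phase~2 interactive proof accepts and $V$ outputs $\hat\rho$ except with probability $\le \delta/2$, so altogether $V$ outputs a valid trace-distance solution with probability $\ge 1-\delta$. For soundness, fix any prover $P'$. On every transcript in which $V$ does not abort, the Phase~2 sub-protocol accepted; by the soundness of \Cref{lem:interactive-quantum-proofs-vbqc}, applied to the reference state $\hat\rho$ that $P'$ has already committed to in Phase~1, this forces $\|\rho-\hat\rho\|_1 \le \varepsilon$ except with probability $< \delta$, i.e.\ the output $\hat\rho$ is a valid solution. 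Estimates landing in the gray zone $\varepsilon/2 < \|\rho-\hat\rho\|_1 \le \varepsilon$ are harmless: either $V$ aborts, or it outputs an $\varepsilon$-close $\hat\rho$, and neither is a soundness violation. Note that Phase~1 on its own carries no soundness obligation, since a $P'$ that sends garbage there merely causes $V$ to abort.

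The hard part is making Phase~2 watertight rather than the bookkeeping. Two points deserve care. First, the primitive delegated through \Cref{lem:interactive-quantum-proofs-vbqc} must be the \emph{tolerant} trace-distance certification test with a constant-factor gap and $\mathcal{O}(d\varepsilon^{-2})$ copy complexity; one should check that this tolerant version follows from the coherent certification measurement of \cite{buadescu2019quantum}, or record the short reduction from its exact (equality-testing) version. Second, one must verify that the VBQC-based certification proof stays sound even though its reference state $\hat\rho$ was chosen adversarially---this is fine precisely because $\hat\rho$ is committed to in Phase~1, before Phase~2 begins, and \Cref{lem:interactive-quantum-proofs-vbqc} is invoked for a single fixed circuit, but it is the place where the two-phase structure does real work. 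Everything else---that $V$ only prepares and measures single qubits and touches only $\mathcal{O}(d\varepsilon^{-2}\log\delta^{-1})$ copies of $\rho$, and the union-bounding of the $\mathcal{O}(\delta)$-scale error budgets---is inherited directly from \Cref{lem:interactive-quantum-proofs-vbqc} and is routine.
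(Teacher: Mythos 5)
Your proof is correct and follows essentially the same two-phase structure as the paper's: Phase~1 has $P$ perform full tomography from its own copies and send a classical $\hat\rho$, Phase~2 has $V$ verify tolerant trace-distance certification of $\hat\rho$ against $\rho$ via the VBQC-based IP of \Cref{lem:interactive-quantum-proofs-vbqc} applied to the coherent certification algorithm of \cite{buadescu2019quantum}, aborting on a ``far'' outcome and outputting $\hat\rho$ otherwise, with matching copy budgets and a union bound over the two phases. The one small thing you flag as needing care---whether the tolerant version of certification with a constant-factor gap follows from \cite{buadescu2019quantum}---is already handled in that reference (the paper invokes \cite[Theorem~1.6]{buadescu2019quantum} precisely for distinguishing $\norm{\rho-\hat\rho}_1\leq 0.99\varepsilon$ from $\norm{\rho-\hat\rho}_1>\varepsilon$), so no additional reduction is required.
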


This result extends to interactive verification of tomography of rank-$k$ quantum states -- i.e. the setting in which the unknown state is promised to be of rank-$k$. Here, $V$ uses at most $\mathcal{O}\left(\frac{k \log \delta^{-1}}{\varepsilon^2}\right)$ copies, irrespective of the dimension $d$.

Our second example of how to overcome the second drawback of VBQC-based  protocols revolves around the task of \textit{agnostic} rank-$k$ quantum state tomography. As defined in Problem~\ref{prob:agnostic_rank_k_quantum_state}, here a learner is given copy access to an \textit{arbitrary} unknown quantum state, and should output a description of a rank-$k$ quantum state which is a close-to-optimal rank-$k$ approximation to the unknown state. As for quantum state tomography, we show that here the verifier can also achieve significant sample complexity improvements upon even the coherent multi-copy lower bounds through an IP.

\begin{restatable}[Agnostic rank-$k$ state tomography -- informal]{theorem}{thmAgnostikLowRankTomography}
    \label{theorem:agnostic-rank-k-tomography}
    There exists an interactive proof system $(V,P)$ for interactive agnostic rank-$k$ state tomography in trace distance, with quantum communication between an incoherent single-copy $V$ and a coherent multi-copy $P$, such that $V$ uses at most \(\mathcal{O}\left(\frac{k^2 \log\delta^{-1}}{\varepsilon^4}\right)\) copies of the unknown state.
\end{restatable}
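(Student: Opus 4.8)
The plan is to mimic the structure that presumably worked for the (non-agnostic) interactive tomography result of Theorem~\ref{theorem:interactive-state-tomography-intro-version}: delegate the heavy lifting to the prover and have the verifier merely \emph{certify} the quality of the returned classical description using its own single-copy copy access. Concretely, the honest prover $P$ uses its coherent multi-copy access to the unknown state $\rho$ to compute a rank-$k$ state $\sigma$ that is a near-optimal rank-$k$ approximation of $\rho$ in trace distance (this is exactly the agnostic rank-$k$ tomography task solved in isolation, e.g.\ via the PGM/Keyl-type estimators of \cite{odonnell2016efficient}); $P$ then sends the classical description of $\sigma$ to $V$. Since $\sigma$ is low rank, its description has only $\tilde{\mathcal{O}}(dk)$ parameters, and $V$ can form the purification $\ket{\sigma}$ of $\sigma$ on $\mathcal{O}(\log d)$ ancillas. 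The remaining job of $V$ is a \emph{verification} step: decide whether $\tv(\sigma,\rho) \le \mathrm{opt}_k(\rho) + \varepsilon$, where $\mathrm{opt}_k(\rho) := \min_{\tau:\mathrm{rank}(\tau)\le k}\tv(\tau,\rho)$, using few single-copy measurements on $\rho$ plus the known description of $\sigma$. Crucially, the agnostic threshold $\mathrm{opt}_k(\rho)$ is itself unknown to $V$, so the verification must be phrased as a \emph{two-sided} comparison that $P$ also helps establish.

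The key steps, in order. (1) Have $P$ send $\sigma$ together with a claimed value $t^\star$ for $\mathrm{opt}_k(\rho)$ (up to additive $\varepsilon/3$). (2) \textbf{Upper-certification of $\sigma$:} $V$ checks that $\tv(\sigma,\rho) \le t^\star + \varepsilon/3$. This is a one-sided property test that $V$ can run with single-copy access, because the test state $\sigma$ is fully known classically — measuring $\rho$ in the eigenbasis of $\sigma$ (or using a Helstrom-type two-outcome measurement for the known candidate $\sigma$) and a Chernoff argument over $\mathcal{O}(\varepsilon^{-2}\log\delta^{-1})$ copies suffices to estimate $\tv(\sigma,\rho)$ up to $\varepsilon/3$. (3) \textbf{Lower-certification of $t^\star$:} $V$ must be convinced that no rank-$k$ state does better than $t^\star - \varepsilon/3$, i.e.\ $\mathrm{opt}_k(\rho) \ge t^\star - \varepsilon/3$. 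For this, delegate to $P$ the task of proving a lower bound on $\mathrm{opt}_k(\rho)$; the natural handle is that $\mathrm{opt}_k(\rho)$ equals $1 - \|\text{top-}k\text{ eigenvalues of }\rho\|_1$ up to a known constant factor (the optimal rank-$k$ approximation in trace distance is the truncation to the $k$ largest eigenvalues, and the error is $\tfrac12\sum_{j>k}\lambda_j(\rho) = \tfrac12(1-\sum_{j\le k}\lambda_j(\rho))$). So it suffices for $P$ to demonstrate an \emph{upper} bound on $\Lambda_k(\rho):=\sum_{j\le k}\lambda_j(\rho)$, the sum of the $k$ largest eigenvalues. $P$ does this by sending a $k$-dimensional projector $\Pi$ claimed to satisfy $\Tr{\Pi\rho} \ge \Lambda_k(\rho) - \varepsilon/6$; but a \emph{dishonest} claim here only makes $\Tr{\Pi\rho}$ \emph{smaller} than $\Lambda_k(\rho)$, which is the safe direction — $V$ just estimates $\Tr{\Pi\rho}$ to accuracy $\varepsilon/6$ with $\mathcal{O}(\varepsilon^{-2}\log\delta^{-1})$ single-copy measurements (projective measurement $\{\Pi,\mathds{1}-\Pi\}$) and concludes $\mathrm{opt}_k(\rho) \ge \tfrac12(1-\Tr{\Pi\rho}) - \varepsilon/6$, giving the required lower certificate. (4) Combine: if both checks pass, $\tv(\sigma,\rho) \le t^\star + \varepsilon/3 \le (\mathrm{opt}_k(\rho)+\varepsilon/3) + \varepsilon/3 + \varepsilon/3 = \mathrm{opt}_k(\rho)+\varepsilon$, so $\sigma$ is a valid solution; and completeness holds because the honest $P$ can supply the true $\sigma$, the true $\Pi$, and $t^\star=\mathrm{opt}_k(\rho)$.

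The copy-count bookkeeping: the dominant cost is the $\tv(\sigma,\rho)$ estimation in step (2). Estimating trace distance to a \emph{known} rank-$k$ state to additive accuracy $\varepsilon$ with single-copy measurements costs $\mathcal{O}(k^2\varepsilon^{-4})$ copies rather than the naive $\varepsilon^{-2}$ — this is the source of the $k^2\varepsilon^{-4}$ in the statement, and comes from the fact that $\tv$ is not an expectation of a fixed observable but requires (after a basis change aligning $\sigma$) estimating $\tfrac12\sum_i|p_i - q_i|$ where $q$ is the known spectrum of $\sigma$ (nonzero on $k$ coordinates) and $p$ is the induced distribution of $\rho$; handling the off-support mass and the near-ties needs the optimal-$\ell_1$-to-a-known-distribution estimator, whose sample complexity on $k$-sparse reference distributions is $\Theta(k^2\varepsilon^{-4})$ (alternatively, one can estimate $\tv$ via learning $\rho$ restricted to the span of $\sigma$ plus its orthogonal weight). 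Multiplying by the $\log\delta^{-1}$ from boosting all $\mathcal{O}(1)$-many Chernoff bounds and union-bounding gives the claimed $\mathcal{O}(k^2\varepsilon^{-4}\log\delta^{-1})$.

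\textbf{Main obstacle.} The delicate point is step (3): the verifier has no a priori access to the agnostic optimum $\mathrm{opt}_k(\rho)$, so soundness of the \emph{agnostic} guarantee (as opposed to the realizable guarantee in Theorem~\ref{theorem:interactive-state-tomography-intro-version}) hinges on finding a certificate for a \emph{lower} bound on $\mathrm{opt}_k(\rho)$ that is (a) short to communicate, (b) verifiable with single-copy access, and (c) such that dishonesty on the prover's part can only hurt the prover. The eigenvalue-sum reformulation — $\mathrm{opt}_k(\rho) = \tfrac12(1-\Lambda_k(\rho))$ and $\Lambda_k(\rho) = \max_{\mathrm{rank}(\Pi)\le k}\Tr{\Pi\rho}$ — is what makes this work, because a claimed projector can only under-report $\Lambda_k$, i.e.\ over-report $\mathrm{opt}_k$ — wait, that is the \emph{unsafe} direction — so one must be careful: the honest prover must additionally be pinned by step (2), which upper-bounds $\tv(\sigma,\rho)$ by $t^\star+\varepsilon/3$, forcing $t^\star \ge \mathrm{opt}_k(\rho) - \varepsilon/3$ whenever $\sigma$ is genuinely near-optimal; combining the over-report protection from (2) with the under-report protection from (3) sandwiches $t^\star$ within $\pm\varepsilon$ of $\mathrm{opt}_k(\rho)$, which is exactly what the agnostic criterion (Eq.~\eqref{eq:agnostic_criteria} with $\alpha=1$) demands. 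Getting this two-sided sandwich tight, and confirming the $k^2\varepsilon^{-4}$ sample complexity for the known-target trace-distance estimate, are the two places where real work is needed; everything else follows the template of the preceding theorem.
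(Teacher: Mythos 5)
Your proposal takes a genuinely different route from the paper's, but it has two gaps that I don't think can be patched in its current form.

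\textbf{Gap 1: direct estimation of trace distance.} Your step (2) requires the verifier to estimate $\|\sigma - \rho\|_1$ for a \emph{known} rank-$k$ state $\sigma$ and \emph{unknown} $\rho$, using single-copy measurements of $\rho$, to accuracy $\varepsilon$ with $\mathcal{O}(k^2\varepsilon^{-4})$ copies. This is not a black box you can invoke. Measuring $\rho$ in $\sigma$'s eigenbasis gives the classical TV distance between the induced diagonals, which only \emph{lower} bounds the quantum trace distance; the Helstrom measurement that achieves it depends on $\rho$, which $V$ doesn't know; and the analogy to ``$\ell_1$ to a known $k$-sparse reference distribution'' is a classical statement that does not carry over without an additional argument. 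The paper sidesteps this entirely: rather than estimate $\|\rho'_{1:k}-\rho\|_1$, it estimates the O'Donnell--Wright surrogate $\sqrt{2k}\,\|\mathrm{diag}(\alpha') - U'^\dagger \rho U'\|_2 + \|R\|_1$, which provably upper-bounds the trace distance (\Cref{lem:mirsky} plus Cauchy--Schwarz) and whose ingredients — the purity of $\rho$ (via a 2-copy SWAP test delegated by VBQC), the overlap $\tr[\mathrm{diag}(\alpha')U'^\dagger \rho U']$, and $1-\tr[\Pi U'^\dagger\rho U']$ — are genuine single-copy observables. The $k^2\varepsilon^{-4}$ then falls out because the Frobenius term must be estimated to accuracy $\varepsilon^2/k$ to survive the $\sqrt{2k\cdot(\cdot)}$, not because of any trace-distance estimation primitive.

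\textbf{Gap 2: the lower certificate has the wrong sign.} You notice the problem yourself (``wait, that is the unsafe direction'') but the patch doesn't close it. Any projector $\Pi$ sent by a dishonest prover satisfies $\tr[\Pi\rho]\le\Lambda_k(\rho)$, so $1-\tr[\Pi\rho]$ \emph{upper}-bounds $\mathrm{opt}_k(\rho)$; it is therefore not usable on the right-hand side of the acceptance check, because the soundness argument needs the RHS to be bounded by $\mathrm{opt}_k(\rho)+\varepsilon$, not by something that can be inflated arbitrarily by a bad $\Pi$. Your step (2) constraint $\hat\ell \le t^\star + \varepsilon/3$ only bounds $t^\star$ from below ($t^\star\ge\mathrm{opt}_k -$ small via $\tv(\sigma,\rho)\ge\mathrm{opt}_k$); it cannot bound $t^\star$ from above, which is what soundness requires. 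The paper resolves this by having $V$ obtain an estimate of the top-$k$ eigenvalue sum $\sum_{i\le k}\alpha_i$ to accuracy $\varepsilon$ via the delegated \cite[Corollary~1.8]{odonnell2016efficient} protocol (another VBQC call), so that the RHS $1-\sum\hat\alpha_i$ is pinned to the true optimal loss by the soundness of that subroutine, not by trusting a prover-supplied projector.

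Two smaller differences worth noting: the paper's honest prover performs \emph{full} tomography (sending all of $U'$ and $\alpha'$), which is needed so $V$ can compute $\mathrm{pur}(\rho')$ and the overlap exactly/efficiently; the paper also shows (after \Cref{lem:agnostic-rank-k-psd}) that if the prover only does rank-$k$ tomography you can still get an IP, but the agnostic guarantee degrades from $\alpha=1$ to $\alpha=2(\sqrt{2k}+1)$. And the paper first proves the result for PSD (improper) hypotheses and then renormalizes in the final step to get a bona fide rank-$k$ state, picking up a factor of $2$ on the optimal loss via a second application of \Cref{lem:mirsky}. So the overall architecture — prover does the learning, verifier certifies — is the same, but the \emph{certificate} you propose (direct trace distance plus a projector witness) is not realizable by single-copy measurements, and the missing ingredient is precisely the O'Donnell--Wright certification inequality.
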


Our final result is concerned with the task of agnostic stabilizer state learning, defined in Problem~\ref{prob:gnostic_stabilizer_learning}. 
Again, we give an IP in which the verifier's sample complexity outperforms what even a coherent multi-copy quantum algorithm could achieve in isolation.

\begin{theorem}[Agnostic stabilizer learning -- informal]\label{theorem:agnostic-stabilizer-state-learning-intro-version}     
    There exists an interactive proof system $(V,P)$ for $8$-agnostic stabilizer state learning, between a single-copy verifier $V$ and a coherent multi-copy prover $P$, efficiently communicating via a quantum channel, such that $V$ is computationally efficient and such that $V$ uses at most $\mathcal{O}\left(\frac{\log\delta^{-1}}{\varepsilon^2}\right)$ copies of the unknown pure $n$-qubit state. 
\end{theorem}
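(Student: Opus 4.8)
The plan is to let $P$ do all the sample-expensive work with its own copies and have $V$ spend its $\mathcal{O}(\log\delta^{-1}/\varepsilon^2)$ copies purely on verification. Write $F(\psi) \defeq \max_{|s\rangle}\lvert\langle s|\psi\rangle\rvert^2$ for the stabilizer fidelity of the unknown pure state $|\psi\rangle$ (the maximum being over all $n$-qubit stabilizer states); a valid $8$-agnostic output is any stabilizer state $|s\rangle$ with $1 - \lvert\langle s|\psi\rangle\rvert^2 \le 8\bigl(1 - F(\psi)\bigr) + \varepsilon$. The honest prover, being unconstrained, solves the ordinary ($\alpha = 1$) agnostic stabilizer learning problem with its own copies and sends $V$ a classical stabilizer tableau for a (near-)optimal stabilizer state $|s^\star\rangle$, i.e.\ one with $\lvert\langle s^\star|\psi\rangle\rvert^2 \approx F(\psi)$. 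It then remains for $V$ to certify, cheaply and soundly, that $|s^\star\rangle$ meets the $8$-agnostic bound.

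Verification has two pieces. First, since $|s^\star\rangle$ is a stabilizer state, $V$ reads off from the tableau the Clifford $C$ with $C|s^\star\rangle = |0^n\rangle$ and estimates $f \defeq \lvert\langle s^\star|\psi\rangle\rvert^2 = \lvert\langle 0^n|C|\psi\rangle\rvert^2$ to additive error $\varepsilon/c_0$ by measuring $C|\psi\rangle$ in the computational basis $\mathcal{O}(\log\delta^{-1}/\varepsilon^2)$ times and applying a Chernoff bound. Second --- and this is where the factor $8$ comes from --- $V$ needs to rule out the possibility that a much better stabilizer approximation exists which $P$ is withholding; equivalently, $V$ needs an \emph{upper} bound on $F(\psi)$. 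Since an incoherent single-copy $V$ cannot itself perform the Bell-sampling measurements that probe stabilizer structure, $V$ instead delegates the relevant estimation to $P$ through a VBQC protocol in the style of \Cref{lem:interactive-quantum-proofs-vbqc}: $V$ teleports $\mathcal{O}(\log\delta^{-1}/\varepsilon^2)$ of its own copies of $|\psi\rangle$, one at a time over the quantum channel, into the blind verified computation, and $P$ runs the Bell-difference-sampling estimator for the statistic $Y(\psi) \defeq \mathbb{E}_{x\sim q_\psi}\bigl[\langle\psi|W_x|\psi\rangle^2\bigr]$ (here $q_\psi$ is the Bell-difference distribution and $W_x$ the Pauli indexed by $x$). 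This estimator is a polynomial-size Clifford circuit followed by classical post-processing, hence VBQC-delegatable, and the natural unbiased estimator of $Y(\psi)$ built from a constant number of Bell samples plus Pauli measurements is bounded in $[-1,1]$, so $\mathcal{O}(\log\delta^{-1}/\varepsilon^2)$ copies suffice to estimate $Y(\psi)$ to additive error $\varepsilon/c_0$ --- crucially, \emph{independently of $n$}. By VBQC soundness, $V$ either aborts or obtains a trustworthy estimate $\hat Y \approx Y(\psi)$. We then invoke the known two-sided polynomial relation between $Y(\psi)$ and the stabilizer fidelity from the Bell-difference-sampling literature, namely $F(\psi)^{c_1} \le Y(\psi) \le F(\psi)$ for a constant $c_1$, to convert $\hat Y$ into a reliable upper bound $F(\psi) \lesssim \hat Y^{1/c_1}$. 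Finally, $V$ outputs $|s^\star\rangle$ if $\hat f \ge 1 - \varepsilon/2$ or if $1 - \hat f \le 8\bigl(1 - \hat Y^{1/c_1}\bigr) + \varepsilon/2$, and aborts otherwise.

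Completeness holds because, against the honest prover, $\hat f \approx f = F(\psi)$ and $\hat Y \approx Y(\psi)$, so the decision rule reduces to checking $1 - F(\psi) \le 8\bigl(1 - Y(\psi)^{1/c_1}\bigr) + \mathcal{O}(\varepsilon)$; using $Y(\psi) \le F(\psi)$ this is implied by the elementary inequality $1 - F \le 8\bigl(1 - F^{1/8}\bigr)$, valid for all $F \in [0,1]$ as long as $c_1 \le 8$ (and the edge case $F(\psi) > 1 - \mathcal{O}(\varepsilon)$, where this inequality is tight, is caught by the first disjunct $\hat f \ge 1 - \varepsilon/2$). Soundness holds because VBQC soundness makes $\hat Y$ a faithful output of the honest Bell-difference computation and the direct estimate makes $\hat f$ faithful, so whenever $V$ outputs $|s^\star\rangle$ the relation $F(\psi)^{c_1} \le Y(\psi)$ together with the accepted inequality forces $1 - f \le 8\bigl(1 - F(\psi)\bigr) + \varepsilon$ (or $f \ge 1 - \varepsilon$, trivially valid), no matter what $P$ sent. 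The verifier's copy cost is $\mathcal{O}(\log\delta^{-1}/\varepsilon^2)$ for the direct fidelity estimate plus $\mathcal{O}(\log\delta^{-1}/\varepsilon^2)$ for the VBQC-delegated estimation of $Y$, both $n$-independent, and $V$ is computationally efficient since all its operations are Clifford circuits, single-qubit preparations for the VBQC layer, and classical post-processing.

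The step I expect to be the main obstacle is arranging the Bell-difference statistic and its two-sided bound so that everything closes with the single multiplicative constant $\alpha = 8$: one must (i) choose $Y$ (or a close variant) so that it is simultaneously an unbiased, bounded, $n$-independently estimable, and efficiently Clifford-computable quantity --- so that it can be safely run inside VBQC --- and (ii) control the exponents in $F^{c_1} \le Y \le F$ (or whatever the true sandwich is, possibly with extra constants) tightly enough that propagating them through the decision rule leaves $\alpha = 8$ valid for \emph{both} completeness (the honest near-optimal answer must always pass, which is tight as $F \to 1$) and soundness (every stabilizer state violating the $8$-agnostic bound must be rejected), uniformly over all values of $F(\psi)$. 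The rest is standard VBQC soundness/completeness together with Chernoff bookkeeping.
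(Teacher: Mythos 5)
Your protocol architecture matches the paper's: prover solves $1$-agnostic stabilizer learning with its own copies and sends a stabilizer tableau; the verifier spends $\mathcal{O}(\log\delta^{-1}/\varepsilon^2)$ copies on a direct single-copy fidelity estimate of the candidate, and delegates a Bell-sampling-based Pauli-moment estimate via VBQC (as in Observation~\ref{lem:interactive-quantum-proofs-vbqc}); soundness and completeness both rest on a two-sided sandwich between that moment and the stabilizer fidelity, and the $\alpha=8$ factor is extracted from the slack in the sandwich. That is exactly how the paper's proof of \Cref{theorem:agnostic-stabilizer-state-learning-formal-version} works.

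The gap is the one you flag: you posit a sandwich of the form $F^{c_1} \le Y \le F$ with $c_1 \le 8$ but never establish it, and this postulated form is not what actually holds. The paper uses the moment $A_3(\ket{\psi})=\tfrac{1}{2^n}\sum_{P}\langle\psi|P|\psi\rangle^6$ with the known (asymmetric) bounds $A_3^{1/6}\ge F_{\mathrm{Stab}}\ge \tfrac{4}{3}A_3-\tfrac{1}{3}$; the upper bound on $F_{\mathrm{Stab}}$ is a sixth root and the lower bound is affine, not a power, so your $Y\le F$ half in particular does not hold for this kind of moment. The paper then sets $\mathrm{UB}=\tfrac{4}{3}(1-A_3)$ and $\mathrm{LB}=1-A_3^{1/6}$ and proves $\mathrm{UB}/\mathrm{LB}\le 8$ by monotonicity plus l'H\^opital at $A_3\to 1$ — the $8$ falls out of that specific ratio, and it is not recoverable from a generic $F^{c}\le Y\le F$ with $c\le 8$. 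Your decision rule also compounds the problem: checking against $\hat Y^{1/c_1}$ pushes additive error in $\hat Y$ through a map with unbounded derivative near $0$, so the estimation accuracy does not propagate cleanly. The paper avoids this entirely by testing the loss directly against the estimated $\mathrm{UB}$ (an affine function of $\hat A_3$) and invoking $\ell^*\ge\mathrm{LB}$ only in the analysis, never exponentiating an estimate. To repair your proof along its current lines, you would need to replace the postulated sandwich with the precise $A_3$ bounds and redo both the decision rule and the error bookkeeping around them.
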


\subsection{Related work}\label{ss:related_work}

As already mentioned, the framework we use here, described in Section~\ref{ss:framework}, extends the framework of interactive proofs for PAC verification recently introduced in  Ref.~\cite{goldwasser2021interactive}. This framework, motivated by the increasing relevance of delegating the solution of machine learning problems, allows one to reason about such delegation to untrusted servers with more or better quality data, but who are still constrained to be efficient. Since then, this framework has been extended in a variety of ways: To statistical query learning in Ref.~\cite{mutreja2022pac-verification}, to limited communication complexity in Ref.~\cite{oconnor2021delegating}, to classical verification of quantum learning in~\cite{caro2023learning}, and to a broader class of concrete learning problems in~\cite{gur2024power}. In this work, we contribute further to the understanding of this framework by providing powerful and broad reaching no-go results, as well as a variety of interactive proof protocols for the concrete and relevant setting of quantum memory-constrained verifiers interacting, via quantum communication channels, with memory-unconstrained quantum provers. We note that in this sense, our work also extends to the quantum (learning and testing) setting, the notion of a \textit{streaming interactive proof}, in which a space-bounded verifier, that can only view a large input object as a stream, interacts with an unconstrained prover to verify a computation that requires large space~\cite{chakrabarti2014annotations,cormode2011verifyingcomputationsstreaminginteractive}.

Another body of related work is that of \textit{verified blind quantum computing} (VBQC) protocols \cite{gheorghiu2018verification, fitzsimonsUnconditionallyVerifiableBlind2017}, which aims to develop protocols via which quantum clients can securely and verifiably delegate quantum computations to remote servers, in a way that ensures that the server learns nothing about the computations. The primary differences between the standard VBQC setting and the setting we consider here are that (a) we consider problems for which the server is also in posession of a (potentially more powerful data oracle) and (b) our verifier is limited by virtue of some resource constraint. Despite this, VBQC techniques do indeed provide an interesting toolbox for the construction of interactive proofs in our setting, which we discuss in Section~\ref{ss:proofs_from_VBQC}.

\subsection{Directions for future work}\label{ss:future_overview}

Our work raises several further questions regarding interactive proofs for (classical and quantum) testing and learning.
Below, we highlight some directions of interest.

\textbf{Further concrete limitations of interactive proofs for learning/testing:} One may aim to exhibit further scenarios, and testing or learning problems, in which \Cref{theorem:no-go-interactive-many-vs-one-distinguishing-intro-version} limits the advantage that a resource-constrained algorithm can gain by delegating to a resource-unconstrained prover. 
For instance, in terms of oracle constraints, one could consider the following scenarios:
\begin{enumerate}
    \item  A verifier with budget-constrained oracle access (as in 
    Ref.\ \cite{cesa-bianchi2011efficient}) and a prover with unconstrained oracle access.
    \item A verifier with noisy example access (as in Ref.\ \cite{angluin1988learning}) and a prover with noiseless example access.
    \item A quantum verifier for PAC learning which only has access to quantum example states, interacting with a prover with access to both quantum example states and the circuit that generates these states (as in Ref.~\cite{salmon2023provable}).
\end{enumerate}
All of these scenarios fit the setting of \Cref{cor:no-advantage-delegating-stronger-oracle}.
However, it remains to identify suitable many-vs-one distinguishing tasks for which these oracle constraints imply lower bounds. Additionally on the quantum side, we conjecture that \Cref{theorem:no-go-interactive-many-vs-one-distinguishing-intro-version} can be used to show that an incoherent quantum verifier cannot achieve Heisenberg-limited scaling in Hamiltonian learning \cite{huang2023heisenberg, li2023heisenberglimitedhamiltonianlearninginteracting, dutkiewicz2024advantagequantumcontrolmanybody} when interacting with an untrusted coherent prover able to interleave the Hamiltonian time evolution with control operations.

\textbf{Generic black-box interactive proofs:} In Theorem~\ref{theorem:interactive-purity-testing-black-box-intro-version} we gave a \textit{black-box} interactive proof protocol for delegating purity testing to memory-unconstrained provers. In Section~\ref{ss:black_box} we show how similar protocols can be constructed for unitarity testing, stabilizer testing, and Clifford testing. It is interesting to ask whether one can construct a \textit{generic} black-box interactive-proof protocol that can be used for delegating arbitrary many-vs-one distinguishing tasks while achieving a verifier query complexity that matches known coherent multi-copy complexities (even if computationally inefficiently).

\textbf{On the necessity of quantum communication for VBQC:} 
Juxtaposing our results from \Cref{sss:limitations_overview,sss:power_overview} highlights that quantum communication is a crucial ingredient to the power of interactive proofs for delegating quantum learning and testing. In that sense, we have proved that the verifier requires quantumness to reap the benefits of verified delegated quantum learning and testing. Whether ``a little bit of quantumness'' is indeed needed for verified quantum computing remains an important open question in the field (compare the discussion in 
Refs.\ \cite{aaronson2007prize, aharonov2013isquantummechanicsfalsifiable, gheorghiu2018verification}), and we hope that our work, which answers it in the affirmative relative to quantum data oracles, can motivate a new 
approach towards addressing it.

\textbf{From sketching to interactive proofs for testing and learning:} Sketching algorithms are a well studied approach to computation with memory constraints, with a wide variety of applications in streaming settings~\cite{nelson2020sketching}. As such, sketching algorithms are promising candidates for calculating ``certificates of validity" (see Section~\ref{ss:trivial}) that are needed for interactive proof protocols in which a memory-constrained verifier can outperform even an unconstrained isolated prover. Can concrete instances of such a connection be found, in either the classical case, where sketching algorithms are well developed, or in the quantum case, where sketching is less developed?

\subsection{Structure of this work}

This work is structured as follows: We begin in Section~\ref{s:preliminaries} by providing an overview of the required preliminary material. In particular, we give an overview of, and motivation for, the different testing and learning frameworks considered here. Additionally, we also give some insights into straightforward interactive proof protocols for a variety of settings, which motivate the particular focus of our work, and serve as a starting point for some of the interactive proofs that we provide.

With these preliminaries established, we then proceed in Section~\ref{s:limitations} to establish a variety of results on the \textit{limitations} of interactive proofs for verifying learning and testing. In particular, after proving a widely applicable no-go theorem, we discuss the implications for memory-constrained quantum verifiers with classical communication in Section~\ref{sec:nogo-implications-quantum}, the implications for verifiers with restricted oracles in Section~\ref{sec:nogo-implications-testing}, and the implications for NISQ verifiers in Section~\ref{sec:no-go-nisq}.

Finally, in Section~\ref{s:go_results} we construct a variety of concrete interactive proof protocols for the setting in which a quantum memory-constrained verifier interacts with a memory-unconstrained prover via a \textit{quantum} communication channel -- a setting for which our no-go result is not applicable. Specifically, we start by observing a generic protocol via universal verified blind quantum computing (VBQC) in Section~\ref{ss:proofs_from_VBQC}. As mentioned previously, this generic protocol has a variety of shortcomings. In particular it requires the verifier to know the algorithm it would like to delegate, and it does not allow the verifier the possibility of using less queries than that required by an isolated prover. With this in mind, we proceed in Section~\ref{ss:black_box} to construct \textit{black-box} interactive proof protocols, which address the first shortcoming of VBQC-based protocols in that they do not require the verifier to know the algorithm which should be run by the prover. We then address the second shortcoming of VBQC-based protocols in Section~\ref{ss:outperforming_provers}, by providing interactive proof protocols for quantum state tomography in Section~\ref{sss:tomography}, agnostic rank-$k$ quantum state tomography in Section~\ref{sss:rank_k}, and agnostic stabilizer learning in Section~\ref{sss:agnostic-stabilizer}, all of which allow the verifier to outperform lower bounds that apply to even a memory-unconstrained quantum learner in isolation.

\section{Setting and preliminaries}\label{s:preliminaries}

\subsection{Learning and testing frameworks}\label{ss:learning_frameworks}

In Definition~\ref{def:solve_a_problem} we have given a definition for what it means for an algorithm, with a given type of oracle access, to solve a learning or testing problem. In addition to this notion of solving a learning or testing problem we will also sometimes consider the query complexity of algorithms which solve a \textit{randomized} many-vs-one distinguishing problem (as mentioned in Example~\ref{example:purity-testing-informal}), which we define as follows:

\begin{definition}[Solving a randomized many-vs-one distinguishing problem]\label{def:solve_random_problem} 
    We say that an algorithm $\mathcal{A}$ solves a randomized many-vs-one distinguishing problem $(x_A,\mathcal{X}_R,\mathcal{X})$, with respect to some distribution $\mu$ over $\mathcal{X}_R$, from $\mathsf{O}$-access if, when $\mathcal{A}$ is given access to $\mathsf{O}(x)$, where
    \begin{enumerate}
    \item $x=x_A$ with probability $1/2$,
    \item $x\sim\mu$ with probability $1/2$,
    \end{enumerate}
    then, with probability $\geq 1-\delta$, $\mathcal{A}$ outputs a valid solution. Here, the probability of success is with respect to the randomness in the choice of $x$ and any potential randomness in the algorithm $\mathcal{A}$ and the oracle $\mathsf{O}$.
\end{definition}

Additionally, as hinted at in Section~\ref{ss:framework}, agnostic learning provides a unifying framework which encompasses a variety of different learning problems. We begin here by making the connection to \textit{realizable learning} more precise. For convenience, we restate the definition of an agnostic learning problem:

\labelRestate{def:agnostic-learning}
\defagnostic*

With this in hand we note that the special case in which $\mathcal{X}=\mathcal{M}$ is often referred to as a \textit{(proper) realizable} learning problem. In particular, note that for this case one always has, for any $x\in\mathcal{X}$, that
\begin{equation}
    \min_{m'\in\mathcal{M}}\ell(m',x) = 0.
\end{equation}
As a result, for any realizable learning problem, the parameter $\alpha$ plays no role,  and for any $x\in\mathcal{X}$, any model $m\in\mathcal{M}$ satisfying $\ell(m,x)\leq \epsilon$ is a valid solution. Given that $\alpha$ plays no role and that $\mathcal{X}=\mathcal{M}$, we can fully describe a realizable learning problem by the tuple $(\mathcal{M},\ell,\epsilon)$. With this connection clear, we summarize with the following definition:

\begin{definition}[Realizable learning problem]\label{def:realizable_problem} 
    A \textit{realizable learning problem} is defined by a tuple $(\mathcal{M},\ell,\epsilon)$, where $\mathcal{M}$ is a set, $\ell:\mathcal{M}\times\mathcal{M}\rightarrow \mathbb{R}_{\geq 0}$ is a loss function, and $\epsilon\in(0,1]$ is an accuracy parameter. When given oracle access to any instance $x\in\mathcal{M}$, any instance $h\in\mathcal{M}$ satisfying $\ell(h,x)\leq \epsilon$ is a valid solution.
\end{definition}
We call an algorithm that solves a realizable learning problem $(\mathcal{M},\ell,\epsilon)$, for any $\epsilon,\delta\in (0,1]$, a \textit{proper probably approximately correct} (PAC) learner for $\mathcal{M}$, which we often just abbreviate to ``a PAC learner". Additionally, we also note that any algorithm which solves the agnostic learning problem $(\mathcal{X},\mathcal{M},\ell,\alpha,\epsilon)$ for some set  $\mathcal{X}\supseteq \mathcal{M}$, immediately solves the realizable learning problem $(\mathcal{M},\ell,\epsilon)$. However, the converse statement is in general not true. As such, the realizable learning problem $(\mathcal{M},\ell,\epsilon)$ is (usually) strictly easier than the agnostic learning problem $(\mathcal{X},\mathcal{M},\ell,\alpha,\epsilon)$, and lower bounds for realizable learning imply lower bounds for agnostic learning.

\subsection{Interactive proofs via certificates of solution validity}\label{ss:trivial}

In this section we illustrate conditions under which learning problems admit straightforward interactive proofs, and give a variety of examples. These insights help motivate the selection of learning problems we consider in this work, as well as the strategies we use for developing concrete interactive proofs in Section~\ref{s:go_results}.

We begin by giving an intuitive definition of what it means for an algorithm to be able to \textit{decide valid solutions} to an agnostic learning problem (which, as per the previous section, includes the special case of realizable learning). 

\begin{definition}[Deciding valid solutions]\label{def:decide-valid} 
    We say that an algorithm $\mathcal{A}$ can decide valid solutions for an agnostic learning problem $(\mathcal{X},\mathcal{M},\ell,\alpha,\epsilon)$, with success probability $1-\delta$, from $\textsf{O}$-access if, for all $x\in\mathcal{X}$, when $V$ is given access to $\textsf{O}(x)$ as well as some candidate solution $h\in\mathcal{M}$, then $V$ can, with probability $\geq 1-\delta$, decide whether $h$ is a valid solution for the instance $x$.
\end{definition}
In words, an algorithm can decide valid solutions from $\textsf{O}$-access if, when given oracle access to an unknown instance $x$, and a candidate solution for that instance, it can decide whether or not this candidate solution is a valid solution. We will sometimes refer to a quantity that an algorithm needs to compute to decide the validity of a solution as a \emph{certificate} of solution validity. We define the query complexity of an algorithm for deciding valid solutions as the worst case number of queries it must make with respect to all possible unknown instances $x\in\mathcal{X}$ and candidate solutions $h\in\mathcal{M}$. 
With these notions in hand, we now make the following observation: Whenever a verifier $V$ can decide valid solutions for an agnostic learning problem from $\mathsf{O}_V$ access, then there exists a trivial interactive proof system (as in \Cref{definition:interactive-proofs-for-learning-informal}) for this agnostic learning problem in which the query complexity of the verifier is the same as the query complexity required for deciding valid solutions.

\begin{observation}[Trival IP when $V$ can decide valid solutions]\label{observation:trivial-ip-from-valid-solution-decision}  
    Assume $V$ can decide valid solutions for an agnostic learning problem $(\mathcal{X},\mathcal{M},\ell,\alpha,\epsilon)$ from $\mathsf{O}_V$-access, and that there exists an honest $P$ that can solve the agnostic learning problem from $\mathsf{O}_P$ access.
    Then, the following provides a valid interactive proof for solving $(\mathcal{X},\mathcal{M},\ell,\alpha,\epsilon)$ from $(\textsf{O}_P,\textsf{O}_V)$-access:
    \begin{enumerate}
        \item $V$ asks $P$ to solve the problem 
        on their own and to send a candidate hypothesis $h$.
        \item $V$ then uses $\textsf{O}_V(x)$ access to the unknown instance $x$ to decide whether $h$ is a valid solution, and accepts or rejects the interaction as appropriate.
    \end{enumerate}
Clearly, the query complexity of $V$ is the query complexity required to decide whether or not the candidate solutions are valid.
\end{observation}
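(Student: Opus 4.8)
The plan is to check, directly against \Cref{definition:interactive-proofs-for-learning-informal}, that the two-step protocol in the statement is complete and sound, and then to read off the verifier's query complexity from the observation that $V$ touches $\mathsf{O}_V$ only through a single run of its decide-valid-solutions subroutine on the hypothesis $h$ it receives from the prover. Throughout, probabilities are taken over the internal randomness of the provers, of $V$, and of the oracles, as in \Cref{def:solve_a_problem,def:decide-valid}.

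For completeness I would fix an arbitrary $x\in\mathcal{X}$ and consider $V$ interacting with the honest $P$. Since $P$ solves the agnostic learning problem from $\mathsf{O}_P(x)$-access, it returns a valid solution $h\in\mathcal{M}$ for $x$ with probability $\geq 1-\delta_P$; conditioned on this, $V$'s decide-valid-solutions call on $(x,h)$ correctly certifies $h$ as valid with probability $\geq 1-\delta_V$, and in that case $V$ neither aborts nor outputs anything but the valid $h$. A union bound then gives ``no abort, valid output'' with probability $\geq 1-\delta_P-\delta_V$.

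For soundness I would fix $x\in\mathcal{X}$ and an arbitrary cheating prover $P'$, and let $h$ be the message $V$ receives. The bad event — $V$ does not abort and outputs an invalid solution — is by construction contained in the event that the decide-valid-solutions subroutine declares an \emph{invalid} $h$ to be valid, which (by the correctness guarantee of that subroutine, whose behavior depends only on $x$ and $h$) has probability $<\delta_V$ no matter what $h$ is or what transcript led to it. In particular, nothing needs to be assumed about $P'$: if $P'$ happens to send a valid $h$, then $V$ outputting it is simply not a soundness violation.

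Putting the two bounds together yields an interactive proof system with success probability $\geq 1-\delta_P-\delta_V$; choosing $\delta_P=\delta_V=\delta/2$ — which is free, by amplifying both $P$'s algorithm and the decide-valid-solutions subroutine through repetition and majority/AND voting — gives success probability $\geq 1-\delta$. The verifier's query complexity is exactly that of deciding valid solutions, since that subroutine is its only interaction with $\mathsf{O}_V$. I do not expect a genuine obstacle here: the only things to be careful about are (i) keeping the prover's failure probability and the verifier's decision-failure probability bookkept separately, and (ii) noting that soundness concerns outputting an \emph{invalid} hypothesis, so a deviating $P'$ that nevertheless happens to send a valid $h$ causes no problem.
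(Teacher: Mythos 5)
Your proof is correct and formalizes exactly what the paper treats as self-evident (the paper states this as an observation without an explicit proof). Your decomposition of the failure probability into the honest prover's failure $\delta_P$ and the decide-valid-solutions subroutine's failure $\delta_V$, the containment of the soundness bad event in the decider's false-accept event, and the observation that $h$ is fixed before $V$ touches $\mathsf{O}_V$ (so the decider's worst-case-over-$h$ guarantee applies) are all the right ingredients and match the intended argument.
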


In light of the above observation, a natural first strategy for obtaining interactive proofs for learning and testing would be to develop algorithms for deciding valid solutions. We note that one is particularly interested in the case when there exists an algorithm for deciding valid solutions from $\mathsf{O}$-access which has a smaller query complexity than that required to solve the problem from $\mathsf{O}$-access. Specifically, when this is the case, the trivial interactive proof from Observation~\ref{observation:trivial-ip-from-valid-solution-decision} allows the verifier to immediately gain a query complexity advantage via the interactive proof. We show in the examples below that there exist classes of interesting learning problems which admit straightforward algorithms for deciding valid solutions, and therefore straightforward interactive proofs.

\begin{example}[Triviality of interactive proofs for realizable learning of Boolean functions]\label{example:trivial-boolean} As discussed in Ref.~\cite{goldwasser2021interactive}, consider the realizable learning problem $(\mathcal{M},\ell,\epsilon)$ where $\mathcal{M}$ is any set of Boolean functions on $n$ bits, and 
\begin{equation}
        \ell(h,c) := \mathrm{Pr}_{x\sim D}[h(x)\neq c(x)]
\end{equation}
is the misclassification probability with respect to some distribution $D$ over $\{0,1\}^n$. Additionally, let $\textsf{O}_V(c)$ be the random example oracle -- i.e.,  when queried, oracle $\textsf{O}_V(c)$ responds with $(x,c(x))$ where $x$ is drawn from $D$. In this case, when given access to $\textsf{O}_V(c)$ for some unknown instance $c$, as well as some candidate $h$, the verifier $V$ can easily use $O(1/\epsilon^2)$ queries to $\textsf{O}_V(c)$ to estimate $\ell (h,c)$ accurately enough to decide whether $\ell (h,c)\leq \epsilon$, and therefore whether $h$ is a valid solution. As such, for any realizable learning problem, whenever the verifier is constrained through a random example oracle (as opposed to membership queries for example), there exists an efficient interactive proof with $n$-independent query complexity  $O(1/\epsilon^2)$.
\end{example}

\begin{example}[Triviality of interactive proofs for realizable quantum state learning w.r.t. fidelity]\label{example:trivial-verification-quantum} 
    Consider the realizable learning problem $(\mathcal{M},\ell,\epsilon)$ where $\mathcal{M}$ is any set of pure $d$-dimensional quantum states and $\ell$ is the fidelity. Additionally, let $\mathsf{O}_V$ be the standard quantum copy oracle which when queried provides a copy of the unknown state. For a concrete example, consider $d=2^n$ and $\mathcal{M}=\mathrm{Stab}_n$ the set of pure $n$-qubit stabilizer states. Here, a single-copy memory-constrained verifier $V$ can estimate the fidelity between a given candidate stabilizer state and any unknown pure state from $\mathcal{O}(1)$ copies (e.g., using the technique from \cite{flammia2011direct}), and thus decide validity of solutions in the sense of Definition~\ref{def:decide-valid}. This reasoning extends to $\mathcal{M}$ being the class of states prepared by Clifford gates and $\mathcal{O}(\log n)$ many $T$ gates.
    As another example, the even computationally efficient learning algorithms of~\cite{zhao2023learning} for states prepared by $\mathcal{O}(\log n)$-size circuits and of~\cite{huangLearningShallowQuantum2024a} for states prepared by  geometrically local shallow 2D circuits output descriptions of the hypothesis state in terms of a circuit for preparing that state. The fidelity of a proposed hypothesis can then be estimated by a single-copy $V$ simply by applying the inverse of the hypothesis circuit on the unknown state and estimating the overlap with the all-zero state.
    For our final example, \cite{huang2024certifying} can be viewed as giving rise to a single-copy procedure for deciding validity of proposed pure hypothesis states, assuming that those states have good relaxation times and that the prover provides the hypothesis in a form that allows to evaluate computational basis amplitudes (possibly up to a normalization)\footnote{\cite{gupta2025singlequbitmeasurementssufficecertify} recently gave an algorithm that achieves essentially the same certification guarantees as in \cite{huang2024certifying} without assumptions on mixing times, certifying arbitrary single-qubit states with single-copy measurements.}. 
\end{example}

The above observation and examples motivate the restriction of our focus to learning problems in which $V$ is \textit{not} able to trivially decide valid solutions. This is certainly the case for some realizable PAC learning problems, such as realizable distribution learning with respect to the total variation distance, or full quantum state tomography (realizable learning the set of \textit{all} quantum states), which we study in Section~\ref{sss:tomography}. However it is more typical for agnostic learning problems, as defined in \Cref{def:agnostic-learning}. More specifically, for an agnostic learning problem, even if $V$ can evaluate $\ell (m,x)$ when given access to $\textsf{O}(x)$, deciding whether or not $m$ is a valid solution requires knowing $\min_{m'\in\mathcal{M}}\left[\ell (m',x)\right]$, which is often a significant obstacle.

In addition to informing our choice of problems to study, the observations and examples above also motivate some of the strategies we use to develop meaningful interactive proof systems. In particular, the interactive proof protocols developed in Section~\ref{ss:outperforming_provers} are based around the construction of query efficient algorithms for deciding valid solutions, with which we can then instantiate the interactive proof protocol from Observation~\ref{observation:trivial-ip-from-valid-solution-decision}.

\subsection{Quantum memory as a resource}\label{ss:memory}

Throughout this work we study interactive proofs between quantum learning and testing algorithms with and without the resource of quantum memory. Here, in line with Refs.~\cite{chen2022exponential, huang2022quantum-advantage,chen2024optimalstate}, we clarify precisely what is meant by ``the resource of quantum memory". To start, as we are concerned with learning and testing algorithms, we will always assume that the algorithm has access to some oracle which, when queried, provides a copy $n$-qubit quantum state. This oracle could simply always provide a single copy of some fixed quantum state $\rho$ when queried, as is the case for quantum state tomography. However, the oracle could also be an oracle for an unknown quantum channel $T$, which when queried with some state $\rho$, provides as output the state $T(\rho)$. 

With this in hand, we say that an algorithm has a $k$-copy quantum memory if it can first perform $k$ sequential oracle queries, followed by computations and measurements on the $(nk)$-qubit quantum state $\bigotimes^k_{i = 1}\rho_i$, where $\rho_i$ represents the output of the $i$'th oracle query. As algorithms with a $k$-copy memory can first entangle the output states from multiple oracle queries before measurement, we also sometimes say that such algorithms have the resource of \textit{entangled/coherent multi-copy measurements}, or refer to such algorithms as \textit{coherent multi-copy algorithms}. In contrast, we call any algorithm that only has access to a $1$-copy quantum memory a single-copy quantum algorithm. As algorithms with only a single-copy quantum memory cannot perform entangling measurements on multiple oracle outputs, we sometimes refer to these algorithms as \textit{incoherent single-copy algorithms.}

Given this, when we refer to a memory-constrained quantum algorithm, we always refer to an algorithm which only has access to a single-copy quantum memory. And whenever we refer to an algorithm without a memory constraint, we refer to an algorithm with a $k$-copy memory for some $k\geq 2$.

\section{Limitations of interactive proofs for verifying learning and testing}\label{s:limitations}

Our first main contribution is a broadly applicable no-go result that puts limits on when and how much a resource-constrained verifier can profit from interacting with an unconstrained but untrusted prover. We begin by abstractly stating our result for testing problems. We then comment on its extensions to more general learning problems, before demonstrating its implications for concrete many-vs-one distinguishing problems. 

\labelRestate{theorem:no-go-interactive-many-vs-one-distinguishing-intro-version}
\thmLimitations*

Before giving the proof, we stress that in \RefRestate{theorem:no-go-interactive-many-vs-one-distinguishing-intro-version}
we do not require the class $\mathsf{A}$ to be able to simulate the actions of $P$ 
with access to any $\mathsf{O}_P(x)$.
Rather, $\mathsf{A}$ only needs the ability to simulate the behavior of $P$ in an interaction where $V$ has oracle access to an arbitrary $x\in\mathcal{X}$, but $P$ specifically has access to the oracle $\mathsf{O}_P(x_A)$ -- i.e.,  the oracle for the accept instance. With this in mind, the proof of~\Cref{theorem:no-go-interactive-many-vs-one-distinguishing-intro-version}, which generalizes ideas from Refs.\ \cite{mutreja2022pac-verification, caro2023verification}, and whose reasoning is illustrated in \Cref{fig:no-go-template}, is given below.

\begin{proof}[\RefRestate{theorem:no-go-interactive-many-vs-one-distinguishing-intro-version}] Assuming a pair $(V,P)$ as in the statement of the theorem, we construct a distinguishing algorithm $D\in\mathsf{A}$ that solves the many-vs-one distinguishing task as follows: Using their query access to $\mathsf{O}_V(x)$, $D$ simulates the interaction between $V$, given access to $\mathsf{O}_V(x)$, and $P$, given access to $\mathsf{O}_P(x_A)$.
If the simulated $V$ either aborts the interaction or outputs ``reject'', then $D$ outputs ``reject''. Otherwise, $D$ outputs ``accept''.
To see that $D$ indeed successfully solves the task, consider the two possible cases. On the one hand, if $x=x_A$, then by the completeness guarantee of $(V,P)$, the simulated verifier $V$ does not abort and correctly outputs ``accept'' with probability $\geq 1-\delta$. On the other hand, if $x\in\mathcal{X}_R$, then by the soundness guarantee of $(V,P)$, the simulated verifier $V$ does not abort and incorrectly outputs ``accept'' with probability $<\delta$. So, in both cases, $D$ outputs the correct answer to the many-vs-one distinguishing task with probability $\geq 1-\delta$.
\end{proof}

\begin{figure}
    \centering 
  \includegraphics[width=0.88\textwidth]{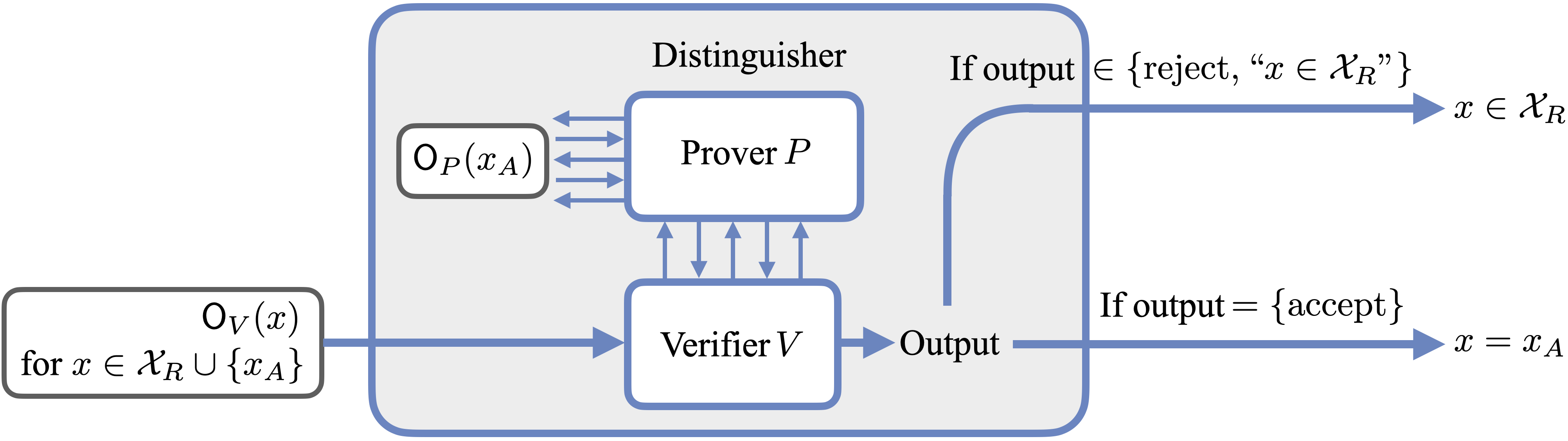} 
    
    \caption[Illustration of proof of \Cref{theorem:no-go-interactive-many-vs-one-distinguishing-intro-version}]{\textbf{Illustration of the proof of \Cref{theorem:no-go-interactive-many-vs-one-distinguishing-intro-version}}: By simulating the interaction between the verifier $V$ and the prover $P$, where $V$ is given oracle access to $x$ via $\mathsf{O}_v(x)$ and $P$ is given oracle access to $x_A$ via $\mathsf{O}_P(x_A)$, we construct a distinguisher for the original many-vs-one distinguishing task. 
    }\label{fig:no-go-template} 
\end{figure}

Having established the above result on limitations of interactive proofs for many-vs-one distinguishing tasks, we comment on two natural extensions of~\RefRestate{theorem:no-go-interactive-many-vs-one-distinguishing-intro-version}.

\textbf{Extension to computational complexity:} Firstly, whereas the focus in \Cref{theorem:no-go-interactive-many-vs-one-distinguishing-intro-version} is on query complexity, we straightforwardly obtain a version for computational complexities when taking the computational cost of simulating $P$ into account. Informally, let us assume that there exists an interactive proof protocol for solving a specific many-vs-one distinguishing task in which the verifier $V\in\mathsf{A}$ in isolation needs $C_V$ computational time, and for which there exists some algorithm in $\mathsf{A}$ that can simulate all actions of the honest prover, with access to the accept instance oracle $\textsf{O}_P(x_A)$, using $C_P$ computational time. Analogously to the proof of \RefRestate{theorem:no-go-interactive-many-vs-one-distinguishing-intro-version}, this implies a distinguisher $D\in\mathsf{A}$ which can solve the many-vs-one distinguishing task with $C_V + C_P$ computational time. As such, if a lower bound of $C_{\mathsf{A}}$ computational time exists for any algorithm in $\mathsf{A}$, then one must have $C_V + C_P \geq C_{\mathsf{A}}$ so $C_V\geq C_{\mathsf{A}} - C_P$.

\textbf{Extension to learning problems:} While \RefRestate{theorem:no-go-interactive-many-vs-one-distinguishing-intro-version} is phrased for many-vs-one distinguishing problems, it often also implies similar limitations on interactive proofs for learning. Informally, this is the case if: 
\begin{enumerate}
    \item The learning problem admits an underlying ``easier-than-learning'' many-vs-one distinguishing task -- i.e.,  a many-vs-one distinguishing task such that any valid solver for the learning problem gives rise to a solver for the many-vs-one distinguishing task. Such underlying distinguishing problems often exist, and are used to prove lower bounds for the corresponding learning problem.
    \item \RefRestate{theorem:no-go-interactive-many-vs-one-distinguishing-intro-version} applies to the underlying many-vs-one distinguishing task.
\end{enumerate}
Intuitively, if an interactive proof system existed for the learning problem, then by point 1 above there would exist an interactive proof system for the distinguishing task, whose query complexity is lower bounded via point 2. To be more quantitative, assume that an interactive proof for solving the learning problem requires $m_l$ queries by the verifier, solving the distinguishing problem when given a valid solution to the learning problem requires $m_{l\rightarrow d}$ queries by the verifier, and that solving the distinguishing problem in isolation requires $m_d$ verifier queries. 
\Cref{theorem:no-go-interactive-many-vs-one-distinguishing-intro-version} then implies that
\begin{equation}\label{eq:learning_to_testing}
    m_l + m_{l\rightarrow d}\geq m_d\, ,
\end{equation}
and therefore that $m_l \geq m_d - m_{l\rightarrow d}$, which can be non-trivial whenever $m_{l\rightarrow d}$ is sufficiently small. We give concrete examples of such reductions in Corollaries~\ref{cor:no-advantage-qst} and~\ref{cor:agnostic_parity_learning}.

With this in mind, in order to illustrate the scope of \RefRestate{theorem:no-go-interactive-many-vs-one-distinguishing-intro-version} and its extensions, we start by noting that, as \RefRestate{theorem:no-go-interactive-many-vs-one-distinguishing-intro-version} constitutes a generalization of ideas from Refs.\ \cite{mutreja2022pac-verification, caro2023verification}, it can in particular be used to recover \cite[Theorem 8]{mutreja2022pac-verification} and \cite[Theorem 15]{caro2023verification}.
To demonstrate the further reach of \Cref{theorem:no-go-interactive-many-vs-one-distinguishing-intro-version} and its extensions, we now highlight its implications for a wide variety of concrete problems and settings.

\subsection{Implications for quantum memory-constrained verifiers}\label{sec:nogo-implications-quantum}

\subsubsection{Many-vs-one distinguishing tasks}

We first demonstrate that \RefRestate{theorem:no-go-interactive-many-vs-one-distinguishing-intro-version} can be applied to any setting in which a quantum memory-constrained verifier wants to delegate a many-vs-one distinguishing task to a memory-unconstrained but untrusted quantum prover via a \textit{classical} communication channel. We also give a variety of concrete examples of problems for which a memory-constrained verifier faces query complexity limitations due to the memory constraint, and might have hoped to gain some advantage via delegation to a memory-unconstrained server.

\labelRestate{cor:no-go-delegating-memory}
\corLimitationsmemory*

\begin{proof} 
    To prove the above corollary via \RefRestate{theorem:no-go-interactive-many-vs-one-distinguishing-intro-version} we have to show that, for any many-vs-one distinguishing task $(x_A,\mathcal{X}_R,\mathcal{X})$, a quantum memory-constrained algorithm $\mathcal{A}$ that can only make incoherent single-copy measurements (but otherwise has no other computational constraints) can always simulate the actions of an honest prover $P$, with access to $\mathsf{O}_P(x_A)$ and the ability to make coherent multi-copy measurements, when interacting with $V$. We do this by considering a variety of cases:
    \begin{enumerate}
        \item A first trivial case is when the honest prover $P$ never queries its oracle, and never makes any coherent multi-copy measurements. In this case, the honest prover $P$'s actions can clearly be simulated by even a randomized classical algorithm $\mathcal{A}$ (which has no computational or classical memory constraints). 
        \item A second slightly less trivial case is when the honest prover $P$ never queries its oracle, but does at some points during the interaction with $V$ make a coherent multi-copy measurement. Given that $V$ and $P$ communicate via classical communication, the only way this can occur is if the honest prover $P$ at some point prepares some multi-register quantum state conditioned on the classical messages received from $V$, and then later performs some coherent multi-copy measurement also conditioned on classical messages received from $V$. In this case, again even a randomized classical algorithm could simulate (albeit computationally inefficiently) the actions of the known honest prover $P$, through a classical simulation of the state preparation and measurement process.
        \item Next we consider the case in which the honest prover $P$ does query its oracle $\mathsf{O}_P(x_A)$. First, we note that the most general possibility is that $\mathsf{O}_P(x_A)$ is a quantum channel which can be queried with a quantum state, and returns a quantum state. Next, we note that as the communication from $V$ to $P$ is purely classical, any time $P$ queries its oracle with a quantum state, it must have prepared this state itself conditioned on classical messages from $V$. Similarly, because $P$ can only send classical information to $V$, the most general case is that it sends back classical results of coherent-multi copy measurements on registers that contain states it either prepared, or received from oracle queries to $\mathsf{O}_P(x_A)$. As in the previous case, $\mathcal{A}$ can classically simulate any state preparations done by $P$ (and obtain classical descriptions of the states). Additionally, as $x_A$ is a known fixed instance, $\mathcal{A}$ can also classically simulate the action of $x_A$ on the states prepared by $P$ (and obtain classical descriptions of the output states). Finally, $\mathcal{A}$ can also classically simulate the (coherent multi-copy) measurements performed on the states that $P$ stores in its registers (as $\mathcal{A}$ has classical descriptions of all of these states). Hence, $\mathcal{A}$ can always simulate -- albeit inefficiently -- the actions of the honest prover $P$, with access to $\mathsf{O}_P(x_A)$ and the ability to make coherent multi-copy measurements, when interacting with $V$. As such, \RefRestate{theorem:no-go-interactive-many-vs-one-distinguishing-intro-version} applies.
    \end{enumerate} 
\end{proof}

The crucial aspect in the proof of \Cref{cor:no-go-delegating-memory} is that the communication from $V$ to $P$ is classical. In fact, even if we allow $P$ to send quantum states to $V$ (which have to fit into $V$'s memory), a variant of the above reasoning still applies. Namely, a memory-constrained quantum $\mathcal{A}$ can first classically (inefficiently) simulate $P$'s (possibly multi-copy) quantum processing to obtain a classical description of the state that $P$ would send, and then (inefficiently) prepare a copy of that state to be sent to $V$. Hence, \Cref{theorem:no-go-interactive-many-vs-one-distinguishing-intro-version} applies and we have justified the following observation.

\begin{observation}[Quantum communication from prover to verifier does not help]\label{observation:quantum-prover-to-verifier-doesnt-help}
    The conclusion of \Cref{cor:no-go-delegating-memory} remains valid even if the communication from prover to verifier can be quantum, as long as the communication from verifier to prover remains classical.
\end{observation}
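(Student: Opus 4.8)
The plan is to reduce the claim to \Cref{theorem:no-go-interactive-many-vs-one-distinguishing-intro-version} in exactly the manner used to prove \Cref{cor:no-go-delegating-memory}, again taking $\mathsf{A}$ to be the class of incoherent single-copy quantum algorithms (with no constraints on classical time or memory). All that needs to be re-examined is the hypothesis of the theorem: that, for every $x\in\mathcal{X}$, the actions of the honest prover $P$ equipped with $\mathsf{O}_P(x_A)$, when interacting with $V$ holding $\mathsf{O}_V(x)$, can be simulated by some algorithm in $\mathsf{A}$. I would argue this remains true even when the messages $P$ sends to $V$ are allowed to be quantum, provided each such message fits into $V$'s constrained memory.

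The structural point I would lean on is that, since all communication \emph{from $V$ to $P$} is classical, the contents of $P$'s quantum registers at any stage are a function only of (i) the classical transcript $P$ has received from $V$ so far, (ii) the states $P$ prepares on its own (which, given that transcript, it prepares up to its own internal randomness), and (iii) the outputs of queries $P$ has made to $\mathsf{O}_P(x_A)$ on inputs that $P$ itself prepared. Because $x_A$ is a single fixed, known instance, a simulator $\mathcal{A}\in\mathsf{A}$ can maintain a full classical description of $P$'s (possibly many-register, entangled) state throughout: at each step it uses its unbounded classical computation to update the description under $P$'s state preparations, under the known channel $\mathsf{O}_P(x_A)$ whenever $P$ queries its oracle, and under $P$'s coherent multi-copy measurements (classically computing the outcome distribution, sampling an outcome, and conditioning the description). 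Whenever the honest protocol has $P$ send a quantum message to $V$, $\mathcal{A}$ reads off the corresponding state from its classical description and prepares a fresh copy of it — inefficiently, e.g.\ by sampling from a pure-state decomposition and running the preparing unitary — then hands it to the simulated $V$. Since $\mathcal{A}$ runs $V$ internally (which lies in $\mathsf{A}$, hence respects the single-copy constraint on $\mathsf{O}_V$) and only ever needs quantum memory to run $V$ and to hold the message registers $V$ consumes — which fit in $V$'s memory by assumption — the overall simulator $\mathcal{A}$ still lies in $\mathsf{A}$. Then \Cref{theorem:no-go-interactive-many-vs-one-distinguishing-intro-version} applies verbatim, giving the same query lower bound and hence the observation for all the many-vs-one distinguishing tasks listed in \Cref{cor:no-go-delegating-memory}.

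The part I expect to require the most care is confirming that $\mathcal{A}$'s \emph{quantum} memory footprint genuinely never exceeds what a single-copy verifier is permitted: one must check that "the state $P$ sends fits into $V$'s memory" really does ensure $\mathcal{A}$ can prepare and deliver that state (which can be done with no coherent ancillas beyond the message qubits themselves, via a sampled pure-state decomposition) while interleaving with $V$'s own computation, so that no extra coherent register is ever needed on top of $V$'s workspace. Everything else — the classical simulation of $P$'s arbitrarily powerful quantum processing, multi-copy measurements included — is immediate because $\mathsf{A}$ imposes no bound on classical resources, and the whole construction, like the proof of \Cref{cor:no-go-delegating-memory}, is allowed to be computationally inefficient.
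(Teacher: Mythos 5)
Your proposal is correct and takes essentially the same route as the paper's own (much terser) justification: a single-copy $\mathcal{A}$ classically simulates $P$'s multi-copy processing (possible because communication from $V$ to $P$ is classical and $x_A$ is fixed and known), obtains a classical description of whatever state $P$ would send, and inefficiently prepares that state to feed to the simulated $V$, after which \Cref{theorem:no-go-interactive-many-vs-one-distinguishing-intro-version} applies. The extra detail you add -- the explicit bookkeeping of $P$'s registers across rounds and the remark about the memory footprint staying within $V$'s single-copy budget -- is a helpful fleshing out of what the paper leaves implicit, but does not change the argument.
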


We briefly note here that the conclusions of Corollary~\ref{cor:no-go-delegating-memory} \textit{do not} hold in the setting when the communication from $V$ to $P$ is also allowed to be via a quantum channel. In this case, one could consider a situation in which $V$ repeatedly queries its oracle $\mathsf{O}_V$, each time obtaining an unknown state and sending it to the prover. If the honest prover $P$ then has to collect all these states and perform a coherent multi-copy measurement, then the honest $P$ \textit{could not} be simulated by a single-copy algorithm $\mathcal{A}$, even without any computational constraints. In particular, because the states it should act on are \textit{unknown} (they come from $V$'s oracle queries), a single-copy algorithm cannot prepare and compute on classical representations of these states, and it cannot store them quantumly with its limited memory. We discuss this setting in much greater depth in Section~\ref{s:go_results}.

Finally, using the single-copy lower bound for stabilizer testing recently proved in \cite{hinsche2024singlecopystabilizertesting}, we can establish a limitation on the potential advantage in delegating stabilizer testing: 
\begin{corollary}[Limited advantage in delegating stabilizer testing]\label{corollary:no-go-stabilizer-testing}
Any memory-constrained quantum verifier that interacts with an untrusted but memory-unconstrained quantum prover via a classical communication channel to solve the stabilizer testing problem with success probability $\geq 2/3$ has to use at least $\Omega(\sqrt{n})$ copies of the unknown state.
\end{corollary}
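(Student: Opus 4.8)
The plan is to obtain \Cref{corollary:no-go-stabilizer-testing} as another instantiation of \Cref{theorem:no-go-interactive-many-vs-one-distinguishing-intro-version}, combining the simulation argument already carried out in the proof of \Cref{cor:no-go-delegating-memory} with the recent single-copy lower bound for stabilizer testing of \cite{hinsche2024singlecopystabilizertesting}. Concretely, let $\mathsf{A}$ be the class of incoherent single-copy quantum algorithms (with no restriction on running time or classical memory), and suppose $(V,P)$ is an interactive proof for stabilizer testing with $V\in\mathsf{A}$, a coherent multi-copy honest prover $P$, and a \emph{classical} communication channel between them. The goal is to show that $V$ must consume $\Omega(\sqrt{n})$ copies of the unknown state.

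First, I would package the hard instance of stabilizer testing as a (possibly randomized, in the sense of \Cref{def:solve_random_problem}) many-vs-one distinguishing task, following the construction underlying the lower bound of \cite{hinsche2024singlecopystabilizertesting}: a fixed, explicitly known reference stabilizer state as the accept instance $x_A$, and a reject ensemble $\mathcal{X}_R$ supported on states $\epsilon$-far from every stabilizer state, arranged so that no incoherent single-copy algorithm using $o(\sqrt{n})$ copies can tell $x_A$ apart from a sample of $\mathcal{X}_R$. Since any algorithm solving stabilizer testing in particular solves this distinguishing task, the single-copy complexity parameter $m$ appearing in \Cref{theorem:no-go-interactive-many-vs-one-distinguishing-intro-version} is then $\Omega(\sqrt{n})$.

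Second, I would check the simulation hypothesis of \Cref{theorem:no-go-interactive-many-vs-one-distinguishing-intro-version}, which is exactly the content of the case analysis in the proof of \Cref{cor:no-go-delegating-memory}. Because the communication from $V$ to $P$ is classical and the accept instance $x_A$ is a fixed, known state, any quantum state that $P$ feeds into $\mathsf{O}_P(x_A)$ is one that $P$ prepared itself conditioned on classical messages from $V$; hence an algorithm in $\mathsf{A}$ can (inefficiently) maintain classical descriptions of all of $P$'s registers, apply the known channel $\mathsf{O}_P(x_A)$ to them, and classically evaluate $P$'s coherent multi-copy measurements, all \emph{without issuing a single query to $V$'s real oracle}. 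Thus the honest prover's behaviour on $x_A$ is simulable in $\mathsf{A}$, so \Cref{theorem:no-go-interactive-many-vs-one-distinguishing-intro-version} applies and forces $V$ to make at least $m=\Omega(\sqrt{n})$ copy queries. Translating between the stated success probability $\geq 2/3$ and the $\geq 1-\delta$ bookkeeping of the theorem is routine (standard amplification) and only affects constants.

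The step I expect to be the crux is the first one: stabilizer testing is natively a many-vs-\emph{many} promise problem, so one must ensure that the hard instance distribution of \cite{hinsche2024singlecopystabilizertesting} genuinely fits the many-vs-one template of \Cref{theorem:no-go-interactive-many-vs-one-distinguishing-intro-version} — in particular that one can single out a \emph{fixed} accept instance (whose associated prover oracle is classically simulable, as needed above) while retaining the $\Omega(\sqrt{n})$ single-copy lower bound. If the lower bound of \cite{hinsche2024singlecopystabilizertesting} is phrased as a distribution-vs-distribution statement on both sides, a short additional argument — e.g., conditioning on a typical stabilizer state, or appealing to a randomized-distinguishing-problem version of the no-go theorem — would be needed to obtain the precise form required here.
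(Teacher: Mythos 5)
Your approach matches the paper's. The paper omits this proof, explicitly stating it is ``entirely analogous to that of \Cref{cor:no-go-delegating-memory}'': apply \Cref{theorem:no-go-interactive-many-vs-one-distinguishing-intro-version} with the simulation argument of \Cref{cor:no-go-delegating-memory} (classical messages from $V$ plus a known $\mathsf{O}_P(x_A)$ let an incoherent single-copy algorithm inefficiently simulate the honest prover), together with the $\Omega(\sqrt{n})$ incoherent single-copy lower bound of \cite{hinsche2024singlecopystabilizertesting}. Your description of the reduction and of where the crux lies is accurate.

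One small mismatch worth flagging: you posit $x_A$ as a fixed stabilizer state with $\mathcal{X}_R$ the far-from-stabilizer states, but \Cref{prob:stabilizer-testing} in the paper defines stabilizer testing as the many-vs-one task $\bigl(\mathds{1}_2^{\otimes n}/2^n,\,\mathrm{Stab}_n,\,\mathcal{S}\bigr)$, i.e.\ the accept instance is the maximally mixed state and the reject set is the stabilizer states — the opposite assignment. This does not change the argument (either choice of $x_A$ has a fixed classical description, so the prover-with-$\mathsf{O}_P(x_A)$ remains simulable in $\mathsf{A}$), but your worry about whether the \cite{hinsche2024singlecopystabilizertesting} hard instance fits the paper's specific many-vs-one template is legitimate and not addressed by the paper, which simply omits the proof; your suggestion of passing to the randomized many-vs-one version (\Cref{def:solve_random_problem}) is the natural fix. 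Also, for a state copy oracle like $\mathsf{O}_P(\rho)$ there is nothing to ``feed in''; the simulation is even simpler than the channel-oracle case you describe, since the simulator just appends classically known copies of $x_A$ to its internal bookkeeping.
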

In contrast to \Cref{cor:no-go-delegating-memory}, we do not entirely rule out a potential advantage from delegation in this task, but instead establish a limitation. The best known single-copy stabilizer testing algorithm requires $O(n)$ copies \cite{hinsche2024singlecopystabilizertesting}, leaving room for potential advantage when compared to the $\Omega(\sqrt{n})$ lower bound.

The proof of \Cref{corollary:no-go-stabilizer-testing} is entirely analogous to that of \Cref{cor:no-go-delegating-memory} and is hence omitted.

\subsubsection{Learning problems}\label{sss:limitations_learning_memory}

\Cref{cor:no-go-delegating-memory} has focused on the limitations of interactive proofs for quantum many-vs-one distinguishing problems in which quantum memory is a relevant resource.
Next, to illustrate the extension of \Cref{theorem:no-go-interactive-many-vs-one-distinguishing-intro-version} to learning problems as discussed above, we turn our attention to quantum learning problems. 
Here, we give concrete examples of limitations of interactive proofs for quantum learning problems in which quantum memory is a relevant resource.
While we formulate these results for verifiers and provers that communicate classically, they remain valid even if the prover can send quantum messages to the verifier (compare \Cref{observation:quantum-prover-to-verifier-doesnt-help}).

We begin with the problem of full quantum state tomography, as defined in Problem~\ref{prob:qst} and discussed in Example~\ref{example:quantum-state-tomography}. In contrast to its many-vs-one distinguishing counterpart, quantum state certification, here our results do not completely eliminate the possibility of an advantage from delegation. However, we still establish strong limitations on how much of an advantage is possible. For comparison see Table~\ref{table:advantages-via-IPs}.

\begin{corollary}[Limited advantage delegating quantum state tomography]\label{cor:no-advantage-qst} 
    Any memory-constrained quantum verifier that interacts with an untrusted but memory-unconstrained quantum prover via a \textit{classical} communication channel to solve the qudit state tomography problem to accuracy $\varepsilon$ (in trace distance) with success probability $\geq 2/3$ has to use at least $\Omega(d^{3/2}/\varepsilon^2)$ many copies of the unknown qudit state.
\end{corollary}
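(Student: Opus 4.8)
The plan is to instantiate the learning-to-testing extension of \Cref{theorem:no-go-interactive-many-vs-one-distinguishing-intro-version} (cf.\ \eqref{eq:learning_to_testing}) with the underlying distinguishing task being quantum state certification, and then to exploit the known single-copy lower bound for that task. First I would recall that quantum state tomography to trace-distance accuracy $\varepsilon$ is an ``easier-than-learning'' parent of an appropriate many-vs-one distinguishing problem: fixing $x_A$ to be (a copy oracle for) the maximally mixed state $\id/d$ and letting $\mathcal{X}_R$ consist of states that are $\Omega(\varepsilon)$-far from $\id/d$ in trace distance, any valid tomography output $\hat\rho$ immediately decides this distinguishing task --- the verifier simply checks whether $\Tr{\lvert \hat\rho - \id/d\rvert}$ is small or large. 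Crucially, this classification step consumes \emph{zero} additional oracle queries: it is a purely classical computation on the already-produced hypothesis $\hat\rho$. Hence in the notation preceding \eqref{eq:learning_to_testing} we have $m_{l\to d} = 0$, so the extension yields $m_l \ge m_d$, where $m_d$ is the number of copies a single-copy (incoherent) quantum algorithm needs to solve this distinguishing task in isolation.

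The second ingredient is that $m_d = \Omega(d^{3/2}/\varepsilon^2)$ for an incoherent single-copy tester. Here I would invoke the known single-copy lower bound for quantum state certification / mixedness testing against the maximally mixed state --- this is exactly the instance-optimal single-copy bound of \cite{chen2022tight} (see also \cite{buadescu2019quantum,chen2022exponential}), which gives $\Theta(d^{3/2}/\varepsilon^2)$ copies for single-copy measurements, as opposed to the $\Theta(d/\varepsilon^2)$ achievable with two-copy coherent measurements. Since the verifier in the hypothesized interactive proof is memory-constrained (single-copy), and --- by the reasoning of \Cref{cor:no-go-delegating-memory} --- the honest prover's actions when it holds oracle access precisely to the accept instance $x_A = \id/d$ can be simulated by an unconstrained single-copy algorithm (because $x_A$ is a \emph{known, fixed} state, so the prover's multi-copy processing on copies of it can be classically simulated, inefficiently, and fed into the interaction), the hypotheses of \Cref{theorem:no-go-interactive-many-vs-one-distinguishing-intro-version} are met for the distinguishing task. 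Combining, $m_l \ge m_d = \Omega(d^{3/2}/\varepsilon^2)$, which is the claim; success probability $\ge 2/3$ is the standard constant that can be boosted, matching the constant in the cited lower bound.

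The one point requiring a little care --- and the step I expect to be the main obstacle --- is matching the two promises. The tomography problem as posed in Problem~\ref{prob:qst} gives the verifier an \emph{arbitrary} unknown state, whereas the distinguishing task promises $x\in\{x_A\}\cup\mathcal{X}_R$. This is not an issue for the reduction direction we need: a tomography algorithm that works on all states in particular works on the promised subset, and on that subset its output decides the distinguishing task correctly with probability $\ge 2/3$. One must only ensure the thresholds line up --- i.e.\ pick the far-set radius in $\mathcal{X}_R$ to be a large enough constant multiple of $\varepsilon$ (say $3\varepsilon$) so that an $\varepsilon$-accurate $\hat\rho$ is unambiguously closer to $\id/d$ (distance $\le \varepsilon$) in the yes-case and unambiguously far (distance $\ge 2\varepsilon$) in the no-case --- and then verify that the cited single-copy lower bound still reads $\Omega(d^{3/2}/\varepsilon^2)$ at this rescaled radius, which it does since the bound is stated up to constants in $\varepsilon$. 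I would also remark, as the corollary statement and \Cref{observation:quantum-prover-to-verifier-doesnt-help} note, that the argument is unaffected if the prover is additionally permitted to send quantum messages to the verifier, since the prover-to-verifier direction never obstructs the simulation used in \Cref{theorem:no-go-interactive-many-vs-one-distinguishing-intro-version}.
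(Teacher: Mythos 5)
Your proposal is correct and takes essentially the same route as the paper's own proof: reduce tomography to mixedness testing with $m_{l\to d}=0$, invoke the $\Omega(d^{3/2}/\varepsilon^2)$ single-copy lower bound of \cite{chen2022tight}, and apply the learning-to-testing extension of \Cref{theorem:no-go-interactive-many-vs-one-distinguishing-intro-version} together with \Cref{cor:no-go-delegating-memory}. The paper handles the threshold-matching the same way you do, by running tomography at accuracy $\varepsilon/3$ and deciding via the trace distance of the hypothesis to $\id/d$, and observes (as you do) that the constant rescaling of $\varepsilon$ does not affect the asymptotic bound.
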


As the copy complexity of single-copy qudit state tomography is given by $\Theta(d^3/\varepsilon^2)$ \cite{kueng2017low, chen2023whendoesadaptivity}, \Cref{cor:no-advantage-qst} demonstrates that a memory-constrained verifier can gain at most a quadratic improvement in the $d$-dependence from classically interacting with a memory-unconstrained prover.
In particular, for $n$-qubit states, we have $d=2^n$ and thus the verifier requires an exponential number of copies even in the interactive proof setting with classical communication.

\begin{proof}
    Recall that solving the many-vs-one distinguishing task of qudit mixedness testing with accuracy $\varepsilon$ and success probability $\geq 2/3$ from incoherent single-copy access requires $\Omega(d^{3/2}/\varepsilon^2)$ copies \cite{chen2022tight}.
    Thus, given \Cref{cor:no-go-delegating-memory} and using the language introduced around \Cref{eq:learning_to_testing}, to prove \Cref{cor:no-advantage-qst} we want to argue that any algorithm solving qudit state tomography gives rise to an algorithm solving qudit mixedness testing, and to characterize the query overhead $m_{l\to d}$.
    This, however, is obvious even with $m_{l\to d}=0$. Namely, to obtain the distinguishing algorithm, first run the state tomography algorithm with accuracy $\varepsilon/3$; if the output state is $(\varepsilon/3)$-close (in trace distance) to the maximally mixed state, output ``maximally mixed''; if the output state is $(2\varepsilon/3)$-far (in trace distance) from the maximally mixed state, output ``far from maximally mixed''.
    This finishes the proof.
\end{proof}

Analogous arguments and reductions can be applied to show that, as a consequence of \RefRestate{theorem:no-go-interactive-many-vs-one-distinguishing-intro-version} an incoherent single-copy quantum verifier $V$ classically communicating with a coherent multi-copy quantum prover cannot gain any quantum query/sample complexity advantage over the best known lower bounds (see cited references) for what $V$ could achieve on their own for any of the following learning tasks:
\begin{itemize}
    \item Purity estimation \cite{chen2022exponential},
    \item (Pauli) threshold decision/search and (Pauli) shadow tomography \cite{chen2022exponential, chen2024optimalpauli},
    \item Unitarity estimation \cite{chen2022exponential, chen2023unitarity},
    \item Symmetry classification \cite{aharonov2022quantum, chen2022exponential},
    \item Learning polynomial-time quantum processes~\cite{huang2022quantum-advantage},
    \item Pauli transfer matrix learning~\cite{caro2023learning}.
\end{itemize}

\subsection{Implications for verifiers with restricted oracles}\label{sec:nogo-implications-testing}

\subsubsection{Many-vs-one distinguishing tasks}

\Cref{sec:nogo-implications-quantum} has demonstrated the implications of \RefRestate{theorem:no-go-interactive-many-vs-one-distinguishing-intro-version} for limitations of interactive proofs in which verifier and prover have the same kind of data access but the verifier has less quantum memory than the prover. That is, the difference between verifier and prover was one of data processing capabilities. 
Here, we now focus on settings in which the verifier and the prover differ only in that the latter has access to a stronger data oracle than the former.
Our focus will again be on query complexity. As before, we first consider interactive proofs for many-vs-one distinguishing tasks.

\labelRestate{cor:no-advantage-delegating-stronger-oracle}
\corLimitationsoracles*

\begin{proof} Lets first begin by assuming that $V$ is computationally unbounded. Specifically, we assume that $V$ has no memory or computational constraints, and that the only resource constraint on $V$ is a weaker form of oracle access than what is available to the prover~$P$ (which results in $V$ requiring more oracle queries than $P$ to solve the many-vs-one distinguishing task at hand). As $x_A$ is a known fixed instance, any computationally unbounded algorithm can simulate any query to $\mathsf{O}_P(x_A)$, and therefore also the actions of a known honest prover with access to $\mathsf{O}_P(x_A)$. As such, \RefRestate{theorem:no-go-interactive-many-vs-one-distinguishing-intro-version} applies to such a computationally unbounded verifier. Now we note that any other verifier with some computational constraints could not achieve a better query complexity than the computationally unbounded verifier we just assumed. As such, the statement of \RefRestate{theorem:no-go-interactive-many-vs-one-distinguishing-intro-version} holds.
\end{proof}

\subsubsection{Learning problems}\label{sec:nogo-implications-learning}

Having established these implications on interactive proofs for testing given weaker oracle access, we now give examples of how 
\RefRestate{theorem:no-go-interactive-many-vs-one-distinguishing-intro-version} 
implies limitations on interactive proofs for a learning problem, via the reduction to many-vs-one distinguishing tasks described earlier. We start with agnostic parity learning, as defined in Problem~\ref{prob:agnostic_parity_learning}.

\begin{corollary}[Limitations of interactive proofs for agnostic parity learning]\label{cor:agnostic_parity_learning} 
    Consider the task of agnostic parity learning, described in Problem~\ref{prob:agnostic_parity_learning}. Consider a verifier $V$ that only has access to an SQ-oracle $\mathsf{O}_{V}^{\mathrm{SQ}}$ with inverse polynomial accuracy, and that interacts with a prover $P$ who has access to a random example oracle $\mathsf{O}_P^{\mathrm{EX}}$. If $(V,P)$ is an interactive proof system for agnostic parity learning, then $V$ requires at least $2^{\Omega(n)}$ statistical queries.
\end{corollary}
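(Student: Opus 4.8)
The idea is to apply the learning-to-testing extension of \Cref{theorem:no-go-interactive-many-vs-one-distinguishing-intro-version} summarized around \eqref{eq:learning_to_testing}, with the classical statistical-query lower bound for parities \cite{kearns1998efficient} supplying the hard underlying many-vs-one distinguishing task. Concretely, I would use the distinguishing problem $(x_A,\mathcal{X}_R,\mathcal{X})$ in which $\mathcal{X}$ is the set of distributions over $\{0,1\}^n\times\{0,1\}$, the accept instance $x_A$ is the ``pure noise'' distribution (uniform $x$ together with an independent uniform label $y$), and the reject set $\mathcal{X}_R=\{D_S:\emptyset\neq S\subseteq[n]\}$ consists of the noiseless parity distributions $D_S$, where $D_S$ draws $x$ uniformly and sets $y=\chi_S(x)$. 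This is exactly the distinguishing problem underlying the parity SQ lower bound, and it is ``easier than'' agnostic parity learning.

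\emph{A free reduction from learning to testing.} First I would show that $m_{l\to d}\leq 1$ in the language of \eqref{eq:learning_to_testing}: given any valid agnostic-parity-learning output $h$ on an instance $x\in\{x_A\}\cup\mathcal{X}_R$, a single statistical query estimating the agreement probability $\Pr[h(x)=y]$ to constant tolerance already solves the distinguishing task. Indeed, on $x_A$ \emph{every} hypothesis has true error exactly $1/2$, while on $D_S$ the optimal parity has error $0$, so any valid agnostic output has error at most $\epsilon$; for $\epsilon$ below a small constant the two cases are separated by this one query, irrespective of the multiplicative accuracy parameter $\alpha$.

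\emph{The SQ lower bound, the simulation, and assembling the pieces.} For $m_d$ I would reproduce the standard Fourier argument: expanding a bounded query function over the characters of $(x,y)$, its expectation under $D_S$ differs from that under $x_A$ only through the single coefficient pairing the character $\chi_S$ in $x$ with $(-1)^y$ in the label, and Parseval bounds by $\mathrm{poly}(n)$ the number of $S$ for which this coefficient can exceed an inverse-polynomial tolerance. Hence the fixed oracle strategy that always answers each query with its value under $x_A$ is consistent with $x_A$ and with all but $\mathrm{poly}(n)$ of the $D_S$, so no SQ algorithm of inverse-polynomial tolerance can distinguish $x_A$ from a worst-case $D_S$ with success probability bounded away from $1/2$ using fewer than $m_d=2^{\Omega(n)}$ queries. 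To invoke \Cref{theorem:no-go-interactive-many-vs-one-distinguishing-intro-version} I also need the honest prover's behaviour, when it holds $\mathsf{O}_P^{\mathrm{EX}}(x_A)$, to be simulable within the verifier's class; this is immediate because $x_A$ is a fixed, fully specified distribution, so a (computationally unbounded) SQ-equipped algorithm can internally generate i.i.d.\ samples from $x_A$, perfectly simulating $\mathsf{O}_P^{\mathrm{EX}}(x_A)$, and then simulate all of $P$'s subsequent processing. Plugging into \eqref{eq:learning_to_testing} gives $m_l+1\geq m_l+m_{l\to d}\geq m_d=2^{\Omega(n)}$, hence $m_l=2^{\Omega(n)}$.

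The only step carrying real content is the classical SQ lower bound for the planted-parity distinguishing problem, which is a known result; the rest is a matter of choosing the distinguishing task so that the learning-to-testing conversion costs essentially nothing and so that its accept instance is benign enough for the honest-prover simulation of \Cref{theorem:no-go-interactive-many-vs-one-distinguishing-intro-version} to be trivial. A minor subtlety to check is insensitivity to the agnostic parameters: it suffices to fix $\epsilon$ to a small constant, and $\alpha$ never enters because no hypothesis beats error $1/2$ on $x_A$.
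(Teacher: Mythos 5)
Your proposal is correct and follows essentially the same route as the paper: reduce agnostic parity learning to the underlying parity-testing distinguishing task at a cost of one additional SQ, observe that the honest prover with an example oracle for the fixed accept instance can be simulated by an SQ verifier (so Theorem~\ref{theorem:no-go-interactive-many-vs-one-distinguishing-intro-version} applies), and invoke the $2^{\Omega(n)}$ SQ lower bound for the distinguishing task. The paper packages the simulation step by citing Corollary~\ref{cor:no-advantage-delegating-stronger-oracle} and cites~\cite{kearns1998efficient} for the SQ lower bound rather than re-deriving it via Fourier analysis as you sketch, and its learning-to-testing reduction is phrased with explicit thresholds ($\epsilon=1/4$, $\tau=1/8$) and an intermediate observation that the output parity must equal the planted one, but your cleaner loss-gap observation (error $\leq\epsilon$ on $D_S$ vs.\ exactly $1/2$ on $x_A$, with $\alpha$ irrelevant since $\min_{m'}\ell(m',D_S)=0$) yields $m_{l\to d}=1$ just as directly. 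One cosmetic difference: the paper's reject set is $\{D_s : s\in\{0,1\}^n\}$, which includes $s=0^n$, while you exclude the empty parity; this has no bearing on the argument.
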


\begin{proof}
    We start by showing that if $V$ can solve the agnostic parity learning problem, then $V$ can solve the parity testing problem with a single extra statistical query -- i.e.,  parity testing provides a suitable ``easier than learning'' many-vs-one distinguishing task for agnostic parity learning with $m_{l\to d}=1$. To do this, we start by noting that if $V$ is given a bit string $s\in\{0,1\}^n$, as well as access to the SQ oracle $\textsf{O}^{SQ}_V(D)$ for some unknown distribution $D$, then $V$ can estimate (up to inverse polynomial precision $\tau$) the misclassification probability $\ell (D_s,D)$ defined in Eq.~\eqref{eq:misclassification_prob} via a single query to $\textsf{O}^{SQ}_V(D)$. More specifically, $V$ simply queries $\textsf{O}^{SQ}_V(D)$ with the indicator function $\phi(x,y) = \mathds{1}_{\mathrm{par}_s\neq y}$ and with the desired inverse polynomial tolerance $\tau$. 
    Now, assume $V$ can solve the agnostic parity problem, it then does the following to solve the parity testing task when given access to $\textsf{O}_V^{SQ}(D)$ for some $D$:
    \begin{enumerate}
        \item Solve the agnostic parity learning problem defined by $\textsf{O}_V^{SQ}(D)$, with accuracy parameter $\epsilon=\nicefrac{1}{4}$, and obtain a valid solution $s$.
        \item Evaluate $\ell (D_s,D)$, up to accuracy $\tau = \frac{1}{8}$, via a single query to $\textsf{O}_V^{SQ}(D)$. Call the obtained estimate $\hat{\ell}(D_s,D)$.
        \item If $\hat{\ell}(D_s,D)\in [\nicefrac{3}{8},\nicefrac{5}{8}]$, then output ``accept'' (i.e.,  $D$ is the uniform distribution). Else, if $\hat{\ell}(D_s,D)\leq \nicefrac{1}{8}$, output ``reject'' (i.e.,  $D$ is a parity distribution).
    \end{enumerate}
    To see that this works, note that if the unknown distribution $D$ is a parity distribution -- i.e.,  $D = D_{s'}$ for some $s'\in\{0,1\}^n$ -- then by the agnostic learning condition Eq.~\eqref{eq:agnostic_criteria}, and the fact that $\epsilon=\nicefrac{1}{4}$, we know that $\ell (D_s,D_{s'}) \leq \nicefrac{1}{4}$. But, parity functions have the property that 
    \begin{equation}
        \ell (D_s,D_{s'}) = \begin{cases} \nicefrac{1}{2} \text{ if } s\neq s', \\ 0 \text{ if }s = s'.
        \end{cases} \, 
    \end{equation}
    As a result,  $\ell (D_s,D_{s'}) \leq \nicefrac{1}{4}$ implies that $s=s'$ (and therefore that $\ell (D_s,D) = 0$). This further implies that the estimation obtained in Step 2 above will satisfy $\hat{\ell}(D_s,D)\leq \nicefrac{1}{8}$, and therefore the output in Step 3 will be correct. On the other hand, if the unknown distribution is the uniform distribution -- i.e.,  $D = \mathcal{U}_{n+1}$ -- then for all $s\in\{0,1\}^n$ we have $\ell (D_s,\mathcal{U}) = \nicefrac{1}{2}$. Consequently, $\hat{\ell}(D_s,D)$ will satisfy $\hat{\ell}(D_s,D)\in [\nicefrac{3}{8},\nicefrac{5}{8}]$ and again the output in Step 3 will be correct.  
    
    As we already have from \Cref{cor:no-advantage-delegating-stronger-oracle} an SQ query complexity lower bound for the verifier, for any interactive proof for parity testing, the stated result then follows from the fact that a learner that can solve agnostic parity learning can solve parity testing with a single extra statistical query.
\end{proof}

Similar reasoning and reductions can be used to show that verifiers with restricted oracles cannot gain any advantages in query complexity via interactive proofs for delegation for any of the following problems: 

\begin{enumerate}
    \item Learning Clifford circuits when $V$ has quantum process statistical query access~\cite{WadhwaDoosti} to the unknown circuit and $P$ has query access.
    \item Pauli shadow tomography when $V$ has quantum statistical query access to the unknown state and $P$ has access to copies~\cite{arunachalam2023role}. 
    \item Learning displacement amplitudes when the $P$ has access to copies of the unknown state and its conjugate state, but $V$ only has access to copies of the state~\cite{king2024exponential}.
\end{enumerate}

\subsection{Implications for NISQ verifiers}\label{sec:no-go-nisq}

For our final example application of \Cref{theorem:no-go-interactive-many-vs-one-distinguishing-intro-version},  
we consider the problem of Pauli shadow tomography -- as defined in Problem~\ref{problem:pauli_shadow_tomography} and discussed in Example~\ref{example:Pauli-shadow-tomography} -- for the case in which a $\mathsf{NISQ}$ verifier interacts with a $\mathsf{BQP}$ prover~\cite{chen2023complexity}. 
Here, we follow the definitions of \cite{chen2023complexity} and consider $\mathsf{NISQ}_\lambda$ algorithms, where $\lambda$ denotes the noise strength of single-qubit depolarizing noise.
Our next result puts strong limitations on how much a $\mathsf{NISQ}$ verifier can gain from interacting with a $\mathsf{BQP}$ prover. 

\begin{corollary}[Limited advantage to delegating Pauli shadow tomography from $\mathsf{NISQ}$ to $\mathsf{BQP}$]\label{cor:no_advantage_PST_nisq}
    Any $\mathsf{NISQ}_\lambda$ verifier that interacts with an untrusted $\mathsf{BQP}$ prover via a classical or quantum communication channel to solve the Pauli state tomography problem to accuracy $1/3$ with success probability $\geq 2/3$ has to use at least $\Omega ((1-\lambda)^{-n})$ many copies of the unknown state.
\end{corollary}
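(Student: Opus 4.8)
The plan is to derive this corollary from the no-go theorem \Cref{theorem:no-go-interactive-many-vs-one-distinguishing-intro-version} via the learning-to-testing reduction summarized around \Cref{eq:learning_to_testing}, in close analogy with \Cref{cor:no-advantage-qst}. First I would isolate an ``easier-than-learning'' many-vs-one distinguishing task underlying Pauli shadow tomography (\Cref{example:Pauli-shadow-tomography}): namely state Pauli spike detection, whose accept instance $x_A$ is the maximally mixed state $2^{-n}I^{\otimes n}$ and whose reject set consists of spiked states such as $2^{-n}(I^{\otimes n}+P)$ for non-identity Pauli strings $P$, for which $\lvert\mathrm{tr}(\rho P)\rvert = 1$. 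These are exactly the instances on which the $\mathsf{NISQ}_\lambda$ copy-complexity lower bound of \cite{chen2023complexity} rests: any $\mathsf{NISQ}_\lambda$ algorithm needs $\Omega((1-\lambda)^{-n})$ copies to tell $x_A$ apart from a uniformly random reject instance. Since that bound is information-theoretic in flavour, it persists for $\mathsf{NISQ}_\lambda$ devices augmented with arbitrary classical pre- and post-processing, and I would take $\mathsf{A}$ to be this (slightly enlarged) class, so that $V\in\mathsf{A}$ and, in the notation of \Cref{eq:learning_to_testing}, $m_d = \Omega((1-\lambda)^{-n})$.

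Next I would verify the two remaining ingredients. The reduction overhead is $m_{l\to d} = 0$: a valid Pauli shadow to accuracy $1/3$ decides spike detection with no further queries — accept iff every predicted absolute Pauli expectation value of a non-identity Pauli is at most $1/3$, these being $\approx 0$ for $x_A$ and $=1$ for every reject instance. For the simulability hypothesis of \Cref{theorem:no-go-interactive-many-vs-one-distinguishing-intro-version}, I would argue — exactly as in the proof of \Cref{cor:no-go-delegating-memory} — that the actions of the honest $\mathsf{BQP}$ prover $P$, when restricted to the accept-instance oracle $\mathsf{O}_P(x_A)$ and communicating classically with $V$, can be reproduced inside $\mathsf{A}$. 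The crucial point is that $x_A = 2^{-n}I^{\otimes n}$ is \emph{exactly} preparable even on a noisy device (prepare a fresh qubit, measure it in the computational basis, discard the outcome) and carries no information; hence a classical transcript together with these classically-describable copies are the only inputs to $P$, so $P$'s noiseless polynomial-time quantum processing can be simulated — inefficiently, but exactly — by the unbounded classical side-computation in $\mathsf{A}$. Invoking \Cref{theorem:no-go-interactive-many-vs-one-distinguishing-intro-version} for spike detection and combining with $m_{l\to d} = 0$ then gives $m_l \geq m_d = \Omega((1-\lambda)^{-n})$ for the verifier's copy complexity in any interactive proof with classical communication, and this also covers quantum communication from prover to verifier via a noisy-device variant of \Cref{observation:quantum-prover-to-verifier-doesnt-help}.

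The main obstacle is quantum communication \emph{from verifier to prover}, which — as flagged in the discussion following \Cref{cor:no-go-delegating-memory} — is not covered by the generic no-go theorem, since a $\mathsf{NISQ}_\lambda$ member of $\mathsf{A}$ cannot classically simulate the honest prover's noiseless processing of quantum messages it receives from $V$. To close this gap I would argue directly about verification rather than about simulating an honest prover: a dishonest prover can always claim the all-zeros shadow (``all non-identity Pauli expectation values vanish''), and to reject this with the required soundness probability when the unknown state is a spiked reject instance, $V$ must itself detect the spike; since this dishonest prover's messages are trivial to reproduce, $V$ together with a simulation of that prover becomes a $\mathsf{NISQ}_\lambda$-plus-classical spike detector, which again requires $\Omega((1-\lambda)^{-n})$ copies by \cite{chen2023complexity}. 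Making this ``trivial prover'' argument fully rigorous for arbitrary (multi-round) protocols — in particular ensuring that an honest execution is still accepted, so that abort behaviour does not leak information the way a spurious rejection would — is the delicate point that the write-up must handle with care.
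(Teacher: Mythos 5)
Your reduction to Pauli spike detection with $m_{l\to d}=0$ is correct and matches the paper's first step, but you have missed the key observation the paper uses to handle quantum communication in both directions, and this causes you to manufacture a gap and then try to patch it with an incomplete argument. The crucial point, which the paper makes explicitly at the start of its proof, is that the $\mathsf{NISQ}_\lambda$ model of \cite{chen2023complexity} is formulated so that noise is applied \emph{at the oracle}: a $\mathsf{NISQ}_\lambda$ algorithm querying the state oracle receives copies of $\rho$ already corrupted by local depolarizing noise of strength $\lambda$. The paper therefore recasts the NISQ restriction as \emph{oracle} access $\mathsf{O}_\lambda(\rho)$ rather than as a limitation on the verifier's processing, and proves the stronger claim for an arbitrarily powerful quantum verifier whose only handicap is this noisy oracle. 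Once you do this, you are no longer in the memory-constrained setting of \Cref{cor:no-go-delegating-memory} but in the oracle-mismatch setting of \Cref{cor:no-advantage-delegating-stronger-oracle}: the verifier is a full quantum machine with unbounded memory, so it can simulate the honest prover's entire quantum processing of states it sends over, even under two-way quantum communication. There is no gap to close, and the quantum-communication case follows from the exact same invocation of \Cref{theorem:no-go-interactive-many-vs-one-distinguishing-intro-version} (via \Cref{cor:no-advantage-delegating-stronger-oracle}), together with the $\mathsf{BQP}$ noisy-oracle lower bound of \cite[Lemma F.3]{chen2023complexity}.

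Your fallback ``trivial prover'' argument is not just delicate but does not straightforwardly fit the no-go theorem's template. The theorem's distinguisher is built by simulating the \emph{honest} prover with access to $\mathsf{O}_P(x_A)$; substituting a fixed ``always claim all-zeros'' prover is only valid if its messages coincide, as a distribution over transcripts, with those of the honest prover on every possible message stream from $V$ — which for a multi-round protocol with quantum messages from $V$ to $P$ is exactly the thing you cannot guarantee without the processing power to simulate what the honest prover would do with those states. Completeness for the resulting distinguisher on $x=x_A$ is then not automatic: $V$ may reject because the prover deviated from the honest transcript, not because of anything about $x$. So the argument you flag as ``the delicate point'' is in fact the whole remaining difficulty, and the paper avoids it entirely by the oracle reframing. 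I'd rewrite the proof around that reframing and drop the trivial-prover route.
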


We note that Ref.~\cite{huang2022foundationslearningnoisyquantum} exhibited a $\mathsf{NISQ}_\lambda$ algorithm that solves the absolute value version of Pauli shadow tomography from $(1-\lambda)^{-\Theta(n)}$ copies, which was proved to be qualitatively optimal in \cite{chen2023complexity}. Thus, \Cref{cor:no_advantage_PST_nisq} shows that the interaction with an untrusted $\mathsf{BQP}$ prover cannot help a $\mathsf{NISQ}$ verifier significantly improve 
upon the query complexity that they could have achieved in isolation, and they in particular still face an exponential lower bound.

\begin{proof} 
    We start by noting that in Ref.~\cite{chen2023complexity}, it is assumed that in the NISQ model, when a $\mathsf{NISQ}_\lambda$ algorithm queries a copy oracle $\mathsf{O}(\rho)$ for an unknown state, it \textit{does not} receive a perfect copy of the unknown state $\rho$, but rather a copy with a local depolarizing channel of strength $\lambda$ \textit{already applied}. Alternatively, one can imagine this depolarizing channel occurs immediately as any $\mathsf{NISQ}_\lambda$ algorithm stores the unknown perfect state $\rho$ in its quantum register, and that a natural way to formalize this is by replacing the $\mathsf{NISQ}_\lambda$ algorithm's access to the perfect copy oracle $\mathsf{O}(\rho)$ with a noisy copy oracle~$\mathsf{O}_\lambda(\rho)$, which when queried provides noisy copies of $\rho$ (corrupted by a single application of local depolarizing channel of strength $\lambda$).
    
    Given this definitional aspect of the $\mathsf{NISQ}_\lambda$ model, in order to prove the statement of the Corollary it is sufficient to prove that any $\mathsf{NISQ}_\lambda$ verifier with access to a \textit{noisy} copy oracle $\mathsf{O}_V = \mathsf{O}_\lambda$, which interacts with an untrusted $\mathsf{BQP}$ (or even arbitrarily powerful quantum) prover and solves the Pauli shadow tomography problem to accuracy $1/3$ with success probability $\geq 2/3$, has to use at least $\Omega ((1-\lambda)^{-n})$ queries. We will do this by proving the \textit{stronger} statement for a $\mathsf{BQP}$  (or even arbitrarily powerful quantum) verifier with access to a noisy copy oracle -- i.e., we will prove the statement for the more challenging setting in which the only constraint on the verifier is the noisy copy oracle. To this end, we start by noting that any algorithm that solves the Pauli shadow tomography problem to accuracy $1/3$ can immediately distinguish between the maximally mixed state and states of the form $\frac{1}{2^n}(\mathds{1}_2^{\otimes n} + P)$ for a non-identity $n$-qubit Pauli $P$, a simplified version of state Pauli spike detection. Thus, in the language introduced around \Cref{eq:learning_to_testing}, we 
    have an underlying ``easier-than-learning'' many-vs-one distinguishing task with $m_{l\to d}=0$, and to prove our desired statement it is sufficient to prove that that any $\mathsf{BQP}$ verifier, with access to a \textit{noisy} copy oracle $\mathsf{O}_V = \mathsf{O}_\lambda$, which interacts with an untrusted $\mathsf{BQP}$ prover, who has access to a noiseless copy oracle, and solves the many-vs-one distinguishing problem above has to use at least $\Omega ((1-\lambda)^{-n})$ queries. To do this, we can use Corollary~\ref{cor:no-advantage-delegating-stronger-oracle}, which applies for many-vs-one distinguishing problems when the only difference between prover and verifier is the strength of the oracle access. All that is needed is an appropriate lower bound on the many-vs-one distinguishing task that holds for any $\mathsf{BQP}$ algorithm with access to the noisy copy oracle $\mathsf{O}_\lambda$. Precisely such a lower bound is proven in \cite[Lemma F.3]{chen2023complexity}, where it is shown that any $\mathsf{BQP}$ algorithm requires $\Omega ((1-\lambda)^{-n})$ many queries to a noisy copy oracle $\mathsf{O}_\lambda$ to solve this many-vs-one distinguishing task.
\end{proof}

As seen in the proof, because quantum state oracle access in the $\mathsf{NISQ}$ definition of \cite{chen2023complexity} is phrased in terms of a noisy oracle access, \Cref{cor:no_advantage_PST_nisq} can be viewed as a consequence of \Cref{cor:no-advantage-delegating-stronger-oracle} about delegation of quantum testing and learning under an oracle mismatch.
In particular, the proof exemplifies the following observation: 
Whenever one can prove a query complexity lower bound for an interactive quantum testing or learning proof in which both the verifier and prover are $\mathsf{BQP}$ (or even arbitrarily powerful quantum) machines, but the verifier has only a noisy copy oracle $\mathsf{O}_\lambda$, then this lower bound directly carries over to IPs with a $\mathsf{NISQ}_\lambda$ verifier and a $\mathsf{BQP}$ prover, even with quantum communication.

\section{Power of interactive proofs for verifying quantum learning and testing}\label{s:go_results}

To apply \Cref{theorem:no-go-interactive-many-vs-one-distinguishing-intro-version}, the actions of the honest prover $P$ with access to $\textsf{O}_P(x_A)$, interacting with $V$ with access to $O_V(x)$, must be simulatable by an algorithm with the same resource constraints as $V$. While the many examples in the previous section meet this criterion, this section explores cases where meaningful interactive protocols under resource constraints can still be constructed.
To gain intuition for the sort of settings in which the assumption of \Cref{theorem:no-go-interactive-many-vs-one-distinguishing-intro-version} is violated, we consider the following two examples:

\begin{example}[Interactive uniformity testing with memory constraints] 
    Consider the uniformity testing task from \Cref{example:uniformity_testing}, for distributions with support size $k$. 
    Assume that the verifier only has $\log(k)$ bits of memory. 
    Hence, the verifier can only store a single sample from the unknown distribution. 
    Next, imagine an interactive proof protocol in which $V$ repeatedly obtains a single sample from its oracle \(\mathsf{O}_V\), sends it to the memory-unconstrained prover $P$, and clears its memory. 
    Assume, that the honest prover stores all the samples sent by $V$, for example to calculate some statistic (such as the number of collisions) from which one can decide uniformity. 
    Then, an algorithm with $V$'s memory constraint \textit{cannot} simulate the actions of \(P\) interacting with \(V\), even though it could simulate the fixed accept oracle $\textsf{O}_P(x_A = \mathcal{U})$. 
    As such, the no-go result from Theorem~\ref{theorem:no-go-interactive-many-vs-one-distinguishing-intro-version} does not immediately apply. Indeed in Appendix~\ref{app:uniformity_ip} we show that one \textit{can} in fact construct an interactive proof protocol for memory-constrained uniformity testing, which allows a memory-constrained verifier to surpass memory imposed query complexity lower bound.
\end{example}

\begin{example}[Interactive purity testing with memory constraints and quantum communication]\label{example:purity_testing_quantum_comm} 
    Consider the purity testing task from \Cref{example:purity-testing-informal}, but with a memory-constrained quantum verifier that can only store a single copy of the unknown quantum state, and that interacts via quantum communication with a memory-unconstrained prover. 
    Similar to the previous example, 
    one can set up 
    an interactive proof protocol in which $V$ repeatedly obtains a single quantum state from its oracle, sends it to the memory-unconstrained prover $P$, and clears its memory. 
    The honest prover could store all the samples sent by $V$, 
    to perform a multi-copy coherent measurement on copies sent by $V$ and copies obtained from its own oracle. 
    Then, an algorithm with a memory constraint 
    like that of $V$
    \textit{cannot} simulate the actions of the honest prover when interacting, even though it could simulate the fixed accept oracle $\textsf{O}_P(x_A = \nicefrac{\mathds{1}_2^{\otimes n}}{2^n})$. 
\end{example}

The above examples can be generalized to a wide variety of testing/learning problems in which a memory-constrained verifier interacts with a memory-unconstrained prover. 
In \Cref{ss:proofs_from_VBQC} we show how  \Cref{example:purity_testing_quantum_comm} can be generalized to other interactive proofs for a quantum testing/learning problem in which a memory-constrained verifier interacts with a memory-unconstrained prover via \textit{quantum communication}. 
For such scenarios, we will show that the verifier can gain from the interaction with the prover.
This is in stark contrast with the results of the previous section:
\Cref{theorem:no-go-interactive-many-vs-one-distinguishing-intro-version} 
prohibits 
interactive proofs for many quantum testing problems with quantum-memory-constrained verifiers, when 
interacting via \textit{classical communication} (at least from $V$ to $P$).

\subsection{Interactive proofs for quantum learning/testing via verified blind quantum computing}\label{ss:proofs_from_VBQC}

A natural method for developing interactive proofs for quantum testing and learning, in the presence of a resource mismatch, is to employ \textit{verified blind quantum computing} (VBQC) protocols~\cite{gheorghiu2018verification, fitzsimonsUnconditionallyVerifiableBlind2017}.
VBQC protocols allow a client to securely delegate quantum computations to a remote server, ensuring the server learns nothing about the computations while the client can verify their correctness. Of key importance for this work are the existence of universal VBQC schemes (such as the one from~\cite{fitzsimonsUnconditionallyVerifiableBlind2017}) in which:
\begin{enumerate}
\item The client does not need to know the state of the input qubits to the computation. In particular, they could come from a third party, or from oracle queries.
\item The client does not need to be able to store the entire input state to the computation. In particular, the client can receive the input state one qubit at a time from a third party or oracle, each time operating on the qubit and sending it to the prover, before receiving the next qubit.
\end{enumerate}
At a high level, for any VBQC protocol which satisfies the above conditions, we can obtain an interactive proof for a learning or testing problem by letting the memory-constrained verifier delegate the desired coherent multi-copy learning/testing algorithm to the prover. The verifier can provide the (unknown) multi-copy input state to the prover by successively querying its oracle, sending the state to the prover, and clearing its memory. Observation~\ref{lem:interactive-quantum-proofs-vbqc} below summarizes the interactive protocol one obtains via this approach, by using the universal VBQC protocol from Ref.~\cite{fitzsimonsUnconditionallyVerifiableBlind2017}. We prove this observation in Appendix~\ref{app:VBQC}.

\labelRestate{lem:interactive-quantum-proofs-vbqc}
\obsgenericVBQC*

\begin{remark} We stress that one could obtain generic interactive proof protocols for learning and testing with memory constraints, from \textit{any} universal VBQC protocol which allows the client to delegate computations on unknown states, and whose memory requirements are within the constraints of the client (See Ref.~\cite{gheorghiu2018verification} for a comprehensive review of VBQC protocols). However, the complexity of the interactive proof -- i.e. the query complexity for the verifier, and computational and communication complexities of both client and verifier -- will depend on the details of the VBQC protocol, and may differ from the those given in Observation~\ref{lem:interactive-quantum-proofs-vbqc}, which are derived from using the specific MBQC-based universal VBQC scheme from Ref.~\cite{fitzsimonsUnconditionallyVerifiableBlind2017}.
\end{remark}

We also note that VBQC protocols are typically designed for delegating \textit{universal} quantum computations. However, many testing and learning tasks can be addressed with specialized algorithms. Adapting VBQC protocols to these specific algorithms can simplify interactive proofs and reduce resource requirements. To demonstrate this point, consider \textit{Bell sampling} \cite{montanaro2017learning,hangleiter2024bellsamplingquantumcircuits}, a popular 2-copy measurement subroutine used in quantum algorithms for tasks like purity testing, stabilizer state learning, and Pauli shadow tomography \cite{huang2021information, king2024triply, chen2024optimalpauli,montanaro2017learning}.
Consider a verifier $V$ who is restricted to incoherent single-copy measurements and who would like to solve one of these tasks, say purity testing, via Bell sampling. When interacting with $P$, they only need to delegate Bell sampling across 2 copies of their state, which, only requires a Clifford computation on these two copies followed by a computational basis measurement. The Clifford nature of the target computation allows for a simplified protocol \cite{gheorghiu2018verification}.

The VBQC approach to interactive proofs for quantum testing and learning has the advantage of working almost out-of-the box. However, it also comes with two drawbacks.
\begin{enumerate}
    \item First, to make use of VBQC protocols in our context, $V$ needs to know how to solve the problem at hand if they had coherent multi-copy quantum processing. This is because VBQC protocols allow the delegation of a specific desired quantum computation, say the execution of some quantum circuit, and the protocol depends on that circuit.  
    \item Second, in such VBQC-based IPs for testing/learning, all the quantum data comes from $\mathsf{O}_V$, the oracle $\mathsf{O}_P$ is not queried at all. So, while $V$ can beat the single-copy lower bound through interaction with $P$ this way, it can never surpass the multi-copy lower bound.
\end{enumerate}

We now describe ways of alleviating these drawbacks.

\subsection{Black-box interactive proofs for quantum learning/testing}\label{ss:black_box}

The first drawback becomes relevant when $V$ does not only lack the quantum memory resources that $P$ has access to, but also lacks the expertise required to use those resources for testing or learning. In other words, even if $V$ had the same resources as $P$, $V$ might not know how to successfully employ them on the available data. Thus, alleviating the first drawback will lead to interactive proof protocols that allow the verifier to delegate the resource-intensive quantum processing of data as well as the challenge of identifying the correct quantum processing to apply.
Therefore, in removing 
the first drawback, we aim to construct protocols in which the only information that $V$ needs is: 
\begin{enumerate}
    \item The description of the problem -- e.g., the tuple $(x_A,\mathcal{X}_R,\mathcal{X})$ for a many-vs-one distinguishing problem.
    \item A promise that the honest prover can solve the problem given $m$ copies of the unknown quantum state.
\end{enumerate}
In other words, we want $V$ to be able to treat the prover as a ``problem-solving black box''.
As such, we call an IP protocol a black-box IP protocol whenever $V$ is only required to know the information listed above, and not a description of the quantum algorithm for solving the problem. We give a concrete example of such a protocol below, for the problem of 
purity testing, as defined in Problem~\ref{prob:purity_testing} and discussed in Example~\ref{example:purity-testing-informal}. 

\begin{theorem}[Black-box Interactive proofs for purity testing (formal version of \Cref{theorem:interactive-purity-testing-black-box-intro-version})]\label{theorem:interactive-purity-testing-black-box-formal-version}
    There exists a black-box interactive proof system $(V,P)$ for solving purity testing with success probability $1-\delta$, between an incoherent single-copy $V$ and a coherent multi-copy $P$, efficiently communicating via a quantum channel, such that $V$ and $P$ both are computationally efficient and such that $V$ uses at most $\mathcal{O}(\log(1/\delta))$ copies.
    More precisely, the complexities are bounded as follows:
    \begin{itemize}
        \item $V$ uses $\mathcal{O}(\log(1/\delta))$ oracle queries and $\tilde{\mathcal{O}}\left(\log\log(d)\log^2(1/\delta)\right)$ classical and quantum computation time.
        \item $P$ uses no oracle queries and $\tilde{\mathcal{O}}(\log(1/\delta))$ classical and quantum computation time.
        \item The total communication consists of $\tilde{\mathcal{O}}(\log(1/\delta))$ qudits sent from $V$ to $P$ and $\mathcal{O}(\log(1/\delta))$ classical bits sent from $P$ to $V$.
    \end{itemize}
\end{theorem}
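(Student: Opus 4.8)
The plan is to make $V$ a near-passive conduit and push all the coherent work onto $P$, checking $P$ through disguised traps. Recall (Example~\ref{example:purity-testing-informal}, Definition~\ref{def:solve_random_problem}) that purity testing is the randomized many-vs-one problem with accept instance $x_A=\mathds{1}_d/d$ and reject instances Haar-random pure states. In the honest protocol, $V$ repeatedly queries $\mathsf{O}_V$ for one copy of the unknown $\rho$ at a time, immediately forwards it to $P$ over the quantum channel, and clears its single-copy register; $P$, being coherent, groups the received states into pairs, runs a two-copy purity estimate (a \textsc{swap} test) on each pair, and returns the outcome bits. Completeness is then essentially the promise: $\mathcal{O}(1)$ copies let $P$ distinguish $\rho=\mathds{1}/d$ from pure, and $\Theta(\log(1/\delta))$ repetitions push the error below $\delta/2$. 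So the whole difficulty is soundness: since $V$ cannot itself perform any two-copy measurement, it has no direct handle on $P$'s reports, and I must stop a cheating $P'$ from flipping the verdict without tripping anything $V$ can check.

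The structural point that makes the protocol \emph{black-box} is that $\mathds{1}_d/d=\mathbb{E}_{\ket{\chi}\sim\nu}[\proj{\chi}]$ for any $1$-design $\nu$ (e.g.\ uniform computational-basis states). Hence a stream of fresh copies of $x_A=\mathds{1}/d$ is statistically indistinguishable, to any party lacking side information about the sampling, from a stream of independent $\nu$-random pure states, whereas fresh copies of a pure state are mutually identical. From $V$'s vantage point purity testing is therefore the task ``are my copies mutually identical or mutually (nearly) independent?'' --- a two-copy comparison question $P$ can settle, and which $V$ reads off purely from the description $(x_A,\mathcal X_R,\mathcal X)$ without knowing any dedicated algorithm. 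Into the sequence of honest pairs $V$ secretly interleaves $\Theta(\log(1/\delta))$ decoys of two kinds: a \texttt{SAME}-trap is two copies of one freshly sampled $\nu$-random pure state (honest answer ``identical''), and a \texttt{DIFF}-trap is two independently sampled $\nu$-random pure states (honest answer ``different''). $V$ checks $P$'s reports against these known answers --- every \texttt{SAME}-trap must return ``identical'', a constant fraction of the \texttt{DIFF}-traps must return ``different'' --- aborting on any violation, and otherwise reads the verdict off $P$'s reports on the honest pairs (``reject/pure'' if all ``identical'', ``accept/mixed'' otherwise).

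To make this sound against an unbounded, fully coherent $P'$, I need $P'$ unable to tell honest pairs from the matching traps. In the maximally-mixed case an honest pair is exactly $(\mathds{1}/d)^{\otimes 2}$ and a \texttt{DIFF}-trap is $\proj{\chi_1}\otimes\proj{\chi_2}$ for $V$'s secret $\chi_1,\chi_2$; marginalizing over $V$'s randomness both equal $(\mathds{1}/d)^{\otimes 2}$, and this agreement persists jointly across all pairs by independence, so no multi-copy strategy separates them. In the pure case an honest pair is $(\proj{\psi})^{\otimes 2}$ with a \emph{fixed} $\psi$ repeated across pairs, which $P'$ could otherwise spot by cross-pair comparison; to kill this, $V$ applies to each forwarded honest pair a fresh randomization $U_i^{\otimes 2}$ (the same $U_i$ on both copies, so the \textsc{swap}-test outcome is unchanged) from a suitable design, after which each honest pair --- marginally over $U_i$ and over the Haar-random $\psi$ --- is the symmetric-subspace state $\Pi_{\mathrm{sym}}/\binom{d+1}{2}$, precisely matching a (likewise randomized) \texttt{SAME}-trap, again jointly across pairs. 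Given this indistinguishability, a $P'$ whose reports on the honest pairs would make $V$ err must report incorrectly on a constant fraction of the corresponding traps, and with $\Theta(\log(1/\delta))$ traps of each type a Chernoff/martingale bound (using that trap locations are never revealed) shows $V$ aborts except with probability $<\delta$; the same counting yields completeness with probability $\geq 1-\delta$. One then reads off the complexities: $V$ makes one oracle call per forwarded honest copy, i.e.\ $\mathcal{O}(\log(1/\delta))$ queries; it sends $\tilde{\mathcal{O}}(\log(1/\delta))$ qudits and receives $\mathcal{O}(\log(1/\delta))$ classical bits; $P$ uses no oracle and $\tilde{\mathcal{O}}(\log(1/\delta))$ computation; and $V$'s own computation --- trap sampling, randomization, bookkeeping --- is $\tilde{\mathcal{O}}(\log\log(d)\log^2(1/\delta))$ on a careful accounting.

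\textbf{Main obstacle.} I expect the crux to be the soundness/indistinguishability argument against a coherent $P'$: the obvious danger is that $P'$ uses its multi-copy power to ``cluster'' the states it receives --- grouping the ones that come in identical pairs, and spotting which coincide across pairs --- and thereby locate the honest pairs. Ruling this out is exactly what forces \emph{both} trap types (so the honest pairs coincide in law with \texttt{SAME}-traps in the pure case and with \texttt{DIFF}-traps in the mixed case) \emph{and} the per-pair randomization of the honest stream. Making that randomization strong enough to match the traps \emph{exactly} after marginalizing, yet cheap enough to respect $V$'s tight budget --- which in particular rules out a full Clifford twirl and calls for lighter primitives --- is the delicate part, together with the standard care that an adaptive $P'$ gains nothing across rounds, which is handled by the martingale form of the tail bound.
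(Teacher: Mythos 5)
Your protocol captures the central idea of the paper's construction — randomly interleaving compute rounds with ``trap'' rounds whose contents the verifier knows, and enforcing soundness via the statistical indistinguishability (after a per-round randomization drawn from a suitable design) of compute rounds from traps: your \texttt{DIFF}-traps match compute rounds when $\rho=\mathds{1}/d$, your \texttt{SAME}-traps match them when $\rho$ is pure, and this is precisely the paper's $m$-round/$p$-round dichotomy. Packaging the work into two-copy \textsc{swap} tests rather than $m$-copy batches is also something the paper explicitly endorses as a variant.

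However, your decision rule has a genuine soundness gap. You have $V$ declare ``pure'' iff \emph{all} honest pairs report ``identical'' and otherwise declare ``mixed,'' with the only abort criterion being that every \texttt{SAME}-trap reports ``identical'' and a constant fraction of \texttt{DIFF}-traps report ``different.'' Now take $\rho$ pure. After your randomization, honest pairs and \texttt{SAME}-traps are jointly distributed as a tensor product of i.i.d.\ copies of $\Pi_{\mathrm{sym}}/\binom{d+1}{2}$ regardless of which pair is which — so a cheating $P'$ cannot tell them apart — while it \emph{can} identify \texttt{DIFF}-traps (they sit in $(\mathds{1}/d)^{\otimes 2}$) and answer those honestly. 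The dishonest strategy is then: pick one uniformly random pair among the indistinguishable honest-or-\texttt{SAME} set, answer ``different'' there, and ``identical'' everywhere else. With probability roughly $\tfrac{\#\text{honest}}{\#\text{honest}+\#\texttt{SAME}}$, a constant, the flipped pair is honest, no trap is violated, $V$ accepts the interaction, and $V$ outputs ``mixed'' — an invalid answer. That is a constant soundness error, not $<\delta$. The root cause is that your output depends on whether \emph{any single} honest report flipped, whereas the trap checks only catch a prover who flips \emph{many}. The paper closes exactly this gap with an additional \emph{consistency check} across compute rounds: $V$ aborts unless all compute-round answers agree, so a prover must flip \emph{every} compute round to change the verdict, which (by the same indistinguishability you invoke) amounts to guessing the entire random partition of rounds and has only exponentially small success probability. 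An alternative repair in your pair-based formulation would be to replace ``all vs.\ not all'' by a calibrated threshold (e.g.\ ``$>3/4$ identical'' vs.\ ``$\le 3/4$''), so that a cheating $P'$ must flip a constant fraction of honest pairs and, being unable to localize them among the \texttt{SAME}-traps, triggers an abort except with exponentially small probability. Without one of these two amplification mechanisms, the soundness claim does not hold.

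Two smaller remarks. First, your per-pair randomization $U_i^{\otimes 2}$ needs $U_i$ to be (at least an approximate) \emph{unitary $2$-design} for the marginal to equal $\Pi_{\mathrm{sym}}/\binom{d+1}{2}$; a mere $1$-design does not suffice there (the paper's remark about $1$-designs/random Paulis refers to a different hiding construction). Since you only ever act on two copies, a $2$-design is enough, which is cheaper than the $m$-design in the paper's batched version — this is a genuine efficiency observation worth keeping once the soundness rule is fixed. Second, ``$\mathcal{O}(1)$ copies let $P$ distinguish'' only achieves constant per-round error; to drive the honest per-round error below $\delta/N$ (so a union bound over $N$ rounds gives completeness $1-\delta$), the number of copies per round must itself grow like $\log(N/\delta)$, as the paper's parameter choice reflects.
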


As shown in Table~\ref{table:advantages-via-IPs}, we highlight that the interactive proof of \Cref{theorem:interactive-purity-testing-black-box-formal-version} allows the single-copy incoherent verifier $V$ to significantly surpass the $\Omega(2^{\nicefrac{n}{2}})$ query complexity lower bound that it would be subject to when trying to solve the problem on its own. 
Next, we give the proof of \Cref{theorem:interactive-purity-testing-black-box-formal-version}. 
Our interactive proof system will make use of a common construction in interactive proofs, in which the 
verifier alternates randomly between different types of interaction rounds, some of which are ``test'' rounds designed to detect cheating provers, and some of which are ``computation'' rounds designed to delegate the desired computation to the prover.  The key property required in the design of these different rounds is that they should all be indistinguishable from the perspective of the prover. We refer to Ref.~\cite{gheorghiu2018verification} for more examples and detailed discussion of such strategies.

\begin{proof}[\Cref{theorem:interactive-purity-testing-black-box-formal-version}]
    We begin with a high-level sketch of our interactive protocol. This will be followed by a more detailed description in pseudo-code, before we then prove completeness and soundness, and do a complexity analysis.

    In our protocol, the verifier $V$ alternates uniformly at random between compute and test rounds. In each round, $V$ provides the prover with $m$ copies of a quantum state, where $V$ knows that $m$ copies are sufficient for the prover to solve the purity testing problem (with a suitable success probability), and asks the prover to solve the purity testing task, and to send over a single bit corresponding to the solution of the distinguishing task. 
    The rounds differ in which state is sent by the verifier as follows:
    \begin{itemize}
        \item Computation round: The verifier sends $m$ copies of $U\rho U^\dagger$.
        \item Pure test round: The verifier sends $U \ket{0^n}\bra{0^n} U^\dagger$.
        \item Mixed test round: The verifier sends  the maximally mixed state $\mathds{1}/2^n$.
    \end{itemize}
    Here, $U$ is a random unitary drawn from a unitary $m$-design and $\rho$ is the unknown state obtained from the verifier oracle $\mathsf{O}_V$. For the test rounds, the verifier knows whether the state they sent is pure or mixed, respectively, and hence they can check the correctness of the prover's answer and catch cheating attempts. However, from the perspective of the prover, the computation rounds are statistically indistinguishable from either the pure or mixed test round (depending on whether the unknown $\rho$ is pure or maximally mixed). Hence, if the number of rounds is sufficiently large, then by the indistinguishability of the rounds, any prover that attempts to cheat a relevant fraction of the time will get caught on a non-negligible fraction of test rounds.
    Additionally, $V$ will reject if the prover answers the computation rounds inconsistently, which means that provers that want to maliciously influence $V$'s outcome do indeed need to cheat often.
    
    We give pseudo-code for the interaction between the verifier $V$ and the honest prover $P$ in \Cref{alg:black-box-purity-testing}, written down for a general $\delta\in (0,1)$.
    The proof now consists in showing that this IP protocol is complete and sound, and that it has the claimed complexities.

    \begin{algorithm}[htp!]
        \caption{IP for purity testing with quantum communication}\label{alg:black-box-purity-testing}
            \begin{algorithmic}[1]
                \Require Verifier copy oracle $\mathsf{O}_V(\rho)$;  
                confidence parameter $\delta\in (0,1)$; function $m:(0,1)\to\mathbb{N}$ such that $P$ can solve the task with success probability $\geq 1-\eta$ from $m(\eta)$ copies.
                \Ensure ``maximally mixed'' or ``pure'' or ``abort''
                \State Set $N=\left\lceil\max\left\{72\ln(6/\delta), 4 \log_2\left(2/\delta\right)\right\}\right\rceil$, $\tilde{\delta} = \nicefrac{\delta}{2N}$, and $m=m(\tilde{\delta})$. 
                \For{$1\leq i\leq N$}
                    \State $V$ picks $s_i\in \{m,p,c\}$ uniformly at random.
                    \If{$s_i=m$}\Comment{Mixed test round}
                        \State One by one, $V$ prepares $m$ copies of $\frac{\mathds{1}}{d}$ and sends them to $P$.
                    \ElsIf{$s_i=p$} \Comment{Pure test round}
                        \State $V$ draws a unitary $U_i$ (privately) at random from a qudit unitary $m$-design.
                        \State One by one, $V$ prepares $m$ copies of $U_i\ket{0}\bra{0}U_i^\dagger$ and sends them to $P$.
                    \ElsIf{$s_i=c$} \Comment{Compute round}
                        \State $V$ draws a unitary $U_i$ (privately) at random from a qudit unitary $m$-design.
                        \State One by one, $V$ queries $\mathsf{O}_V(\rho)$ to obtain $m$ copies of $\rho$, then applies $U_i$, and sends $U_i\rho U_i^\dagger$ to $P$.
                    \EndIf
                    \State $V$ asks the prover to distinguish, with success probability $\geq 1-\Tilde{\delta}$, whether the $m$ copies that they received from $V$ came from a maximally mixed or a pure state, and to send single bit $b_i$ encoding the answer (``maximally mixed'' or ``pure'') back to $V$.
                    \If{$s_i=m$}
                        \State If $b_i=$``maximally mixed'', $V$ sets $p_i=1$. If $b_i=$``pure'', $V$ sets $p_i=0$.
                    \ElsIf{$s_i=p$}
                        \State If $b_i=$``Haar-random'', $V$ sets $p_i=1$. If $b_i=$``pure'', $V$ sets $p_i=0$.
                    \ElsIf{$s_i=c$}
                        \State $V$ stores $b_i$ in classical memory.
                    \EndIf
                \EndFor
                \If{$\exists 1\leq i\leq N$ s.t.~$s_i\in\{m,p\}$ and $p_i=0$} \Comment{Check that the prover passed all tests}
                    \State $V$ outputs ``abort''.
                \ElsIf{$s_i=c=s_j \not\Rightarrow b_i=b_j$} \Comment{Check that the prover was consistent in all compute rounds}
                    \State $V$ outputs ``abort''.
                \Else
                    \State $V$ outputs $b_i$ for some $1\leq i\leq N$ with $s_i=c$.
                \EndIf
            \end{algorithmic}
    \end{algorithm}

    \noindent\textbf{Completeness:}
    In any single round, the honest prover $P$ by assumption succeeds with probability $\geq 1-\tilde{\delta}$ in Step~13. Thus, the probability that there is a round in which $P$ fails is at most $N\Tilde{\delta}= \delta/2 < \delta$.
    Conditioned on this event of $P$ succeeding in every round, by definition of how $V$ acts in Steps 22-28, $V$ does not abort the interaction with $P$, and $V$ outputs a valid solution to the purity testing problem.
    It remains to arguet about the quantum time and sample efficiency of the honest prover. To this end, observe that (e.g., via SWAP tests \cite{buhrman2001quantum}), $P$ can use multi-copy measurements to efficiently solve the purity testing problem with success probability $\geq 1-\eta$ from $\mathcal{O}(\log(1/\eta))$ copies, so $m(\eta)=C\log(1/\eta)$ for a sufficiently large constant $C>0$ suffices.
    This shows completeness.

    \noindent\textbf{Soundness:}
    First, we observe that, with high probability, every type of round appears at least $N/4$ many times, by a simple concentration argument. More precisely, we find
    \begin{align}
        \Pr [\exists s\in\{m,p,c\}: ~|\{1\leq i\leq N~|~s_i\neq s\}|< \nicefrac{N}{4}]
        &\leq \sum_{s=1}^3 \Pr[|\{1\leq i\leq N~|~s_i\neq s\}|< \nicefrac{N}{4}]\\
        \nonumber
        &= 3 \Pr[\mathrm{Binom}(N, 1/3)< \nicefrac{N}{4}]\\
        \nonumber
        &\leq 3 \Pr[\nicefrac{N}{3} - \mathrm{Binom}(N, 1/3) > \nicefrac{N}{12}]\\
        \nonumber
        &\leq 3\exp\left( - \frac{2\cdot (N/12)^2}{\sum_{k=1}^N (1-0)^2}\right)\\
        \nonumber
        &= 3\exp\left(-N/72\right)\\
        \nonumber
        &\leq \frac{\delta}{2}\, ,
    \end{align}
    where we have used a union bound followed by Hoeffding's inequality as well as our choice of $N$.
    For the remainder of the soundness proof, we condition on the high-probability event $E=\{\forall s\in\{m,p,c\}: ~|\{1\leq i\leq N~|~s_i\neq s\}|\geq  \nicefrac{N}{4}\}$.
    We want to bound the probability of $V$ incorrectly accepting the interaction with a dishonest prover $P'$ and outputting an invalid solution to the purity testing problem. That is, by the decision rule that $V$ uses, we want to bound
    \begin{align*}
        \Pr\left[P'\mathrm{~answers~all~rounds~with~}s_i=c\mathrm{~incorrectly} ~\wedge~ P'\mathrm{~answers~all~rounds~with~}s_i\in\{m,p\}\mathrm{~correctly}  ~\Big|~ E\right]\, .
    \end{align*}    
    To do so, notice the following: If $\rho=\frac{\mathds{1}}{2^n}$, then the state preparation procedures for $s_i=m$ and for $s_i=c$ in fact prepare copies of the same quantum state.
    Therefore, in this case we may upper bound our probability of interest as 
    \begin{align*}
        &\Pr\left[P'\mathrm{~answers~all~rounds~with~}s_i=c\mathrm{~incorrectly} ~\wedge~ P'\mathrm{~answers~all~rounds~with~}s_i\in\{m,p\}\mathrm{~correctly}  ~\Big|~ E\right]\\
        &\leq \Pr\left[P'\mathrm{~answers~all~rounds~with~}s_i=c\mathrm{~incorrectly} ~\wedge~ P'\mathrm{~answers~all~rounds~with~}s_i=m\mathrm{~correctly}  ~\Big|~ E\right]\\
        &\leq \Pr\left[\mathrm{in~every~round~with~}s_i\in\{m,c\},~P'\mathrm{correctly~distinguishes~whether~}s_i=m\mathrm{~or~}s_i=c~|~ E\right]\, .
    \end{align*}
    Now, because even an unbounded quantum prover $P'$ cannot correctly distinguish between $s_i=m$ and $s_i=c$ with probability strictly bigger than $1/2$, this final probability can be upper bounded by the probability of guessing a single correct assignment out of $\binom{N/2}{N/4}$ many possible assignments uniformly at random.
    Using $\binom{n}{k} \geq (\nicefrac{n}{k})^k$, we see that $\binom{N/2}{N/4}\geq 2^{N/4}$, thus the relevant probability is at most $2^{-N/4}\leq\delta/2$ by our choice of $N$.  
    
    Similarly, if $\rho=\ket{\psi}\bra{\psi}$ is a pure state, then, since each $U_i$ is drawn independently from a unitary $m$-design, the state preparation procedures for $s_i=p$ and for $s_i=c$ are indistinguishable to any (even unbounded) $P'$ from $m$ copies per round.
    So, we can argue as before and bound the relevant probability (conditioned on $E$) by $\delta/2$.
    As the bound is the same in both cases, we conclude that 
    \small
    \begin{align*}
        \Pr\left[P'\mathrm{~answers~all~rounds~with~}s_i=c\mathrm{~incorrectly} ~\wedge~ P'\mathrm{~answers~all~rounds~with~}s_i\in\{m,p\}\mathrm{~correctly}  ~\Big|~ E\right]
        \leq \frac{\delta}{2}\, .
    \end{align*}
    \normalsize
    Applying Bayes' rule and a union bound, we have proved the desired soundness.

    \noindent\textbf{Complexity analysis:} The verifier queries its oracle at most $N=\mathcal{O}(\log(1/\delta))$ many times.
    The total amount of quantum computation time on the verifier side is bounded by $Nm=\tilde{\mathcal{O}}(\log(1/\delta))$ times the worst-case quantum computation for the state preparations in Steps 5, 8, and 11. 
    Step 5 is clearly efficient.
    For Steps 8 and 11, we note that the above proof goes through when replacing the exact $m$-designs in Steps 8 and 11 by approximate $m$-designs with a small enough constant approximation parameter. (The probability of $V$ accepting and producing a false output will still decay exponentially in $N$, only the base will change from $1/2$ to $1/2 + \Delta$ if we consider a $\Delta$-approximate $m$-design.) 
    Assuming the unknown state to be an $n$-qubit state, i.e. $d=2^n$, such approximate $m$-designs can be implemented in depth $\tilde{\mathcal{O}}\left(\log (n) \log(1/\delta)\right)$ \cite{schuster2024randomunitariesextremelylow}. Thus, the overall quantum time complexity of $V$ (measured in terms of number of two-qubit gates) is bounded by $\tilde{\mathcal{O}}\left(\log (n) \log^2(1/\delta)\right)$.
    The classical time complexity of $V$ is dominated by checking the conditions in Steps 22 and 24, which can be done in time $\mathcal{O}(N)=\mathcal{O}(\log(1/\delta))$.
    
    The honest prover never queries its oracle.    
    The total amount of quantum computation time on the prover side is given by $N=\mathcal{O}(\log(1/\delta))$ times the quantum computation time of Step 13 in \Cref{alg:black-box-purity-testing}. As this step can, e.g., be realized by executing $m/2=\tilde{\mathcal{O}}(\log(1/\delta))$ many SWAP tests and checking whether one of them rejects, this is clearly quantumly 
    efficient. 
    The total amount of quantum communication is given by $Nm=\tilde{\mathcal{O}}(\log(1/\delta))$ qudits sent from $V$ to $P$.
    The total amount of classical communication is given by $N=\mathcal{O}(\log(1/\delta))$ classical bits sent from $P$ to $V$.
    Fixing $\delta=1/3$ then gives the complexity and communications bounds from \Cref{theorem:interactive-purity-testing-black-box-intro-version}.
\end{proof}

Let us make two comments about the protocol given above.
First, in \Cref{alg:black-box-purity-testing}, the number of rounds of each type is a random variable and, by a standard concentration argument, we argued that, with high probability, each type of round appears sufficiently often. Alternatively, one may consider a verifier that plays each round the same (deterministic) number of times, but randomizes the order in which rounds are played. The resulting protocol will lead to an interactive proof with the the same complexity scalings as in \Cref{theorem:interactive-purity-testing-black-box-formal-version}. The proof becomes simpler (since the aforementioned concentration argument is no longer required), but we chose the presentation above to avoid notational clutter.

Second, in \Cref{alg:black-box-purity-testing}, we have chosen a prover that uses $m$ registers of quantum memory simultaneously merely for ease of presentation.
However, as there is an honest prover whose processing in Step 13 consists in separate two-copy SWAP tests followed by classical post-processing, one can straightforwardly adapt the above procedure so that $P$ only ever uses two registers of quantum memory (plus non-memory auxiliary quantum registers for the SWAP tests). 
In fact, this way one can obtain a protocol in which, in the spirit of the quantum one-time pad \cite{mosca2000privatequantumchannelscost}, the verifier only uses random single-copy Paulis (which form a $1$-design) rather than an $m$-design.

From the above algorithm and proof, it is be clear that the ideas behind the interactive proof protocol, namely the alternation between indistinguishable test rounds for soundness and computational rounds for completeness, could in principle be applied to other problems. 
To apply such an approach, one needs to (a) design suitable tests for detecting dishonest provers, and (b) develop an efficient ``hiding strategy'' that $V$ can use to make the test and computation rounds indistinguishable to the prover. 
In particular, we can use these ideas to construct black-box IPs for unitarity testing (\Cref{prob:unitarity_testing}), stabilizer testing (\Cref{prob:stabilizer-testing}), and Clifford testing (\Cref{prob:clifford_testing}).
Instead of giving full proofs, let us sketch the tests and hiding strategies for these cases:
\begin{itemize}
    \item Unitarity testing can be solved with multi-copy measurements through purity testing on copies of the Choi state. To prepare one such copy in an IP protocol, $P$ sends half of a maximally entangled state to $V$; $V$ then either applies the maximally depolarizing channel (test round), or applies the identity channel pre- and post-processed by a unitary channel drawn from an (approximate in relative-error) design (test round), or queries the unknown channel pre- and post-processed by a unitary channel drawn from an (approximate in relative-error) design (computation round); afterwards, $V$ sends the quantum system back to $P$.
    \item Stabilizer testing can be solved with multi-copy measurements using Bell (difference) sampling \cite{gross2021schur, grewal2024improved}. Beyond the change from SWAP testing to Bell sampling, the main difference in an IP protocol for stabilizer testing compared to the purity testing protocol is that it suffices to use random Cliffords instead of unitaries drawn from a design.
    \item Clifford testing can be solved through stabilizer testing on copies of the Choi state. Thus, an IP protocol for Clifford testing proceeds similarly to that sketched above for unitarity testing, only that the verifier uses random Cliffords instead of unitaries drawn from a design.
\end{itemize}

\subsection{Interactive proofs for outperforming coherent multi-copy provers}\label{ss:outperforming_provers}

In the interactive proof systems discussed above, the incoherent single-copy verifier $V$ makes use of the coherent multi-copy prover $P$ to overcome single-copy quantum query complexity lower bounds. However, as $P$'s oracle is never used, $V$ has to provide all the quantum data for $P$, and thus $V$'s quantum query complexity in all these protocols cannot outperform the multi-copy lower bound for the relevant problem. To overcome this limitation, we could instead imagine proof systems like those discussed in Section~\ref{ss:trivial} -- i.e. a proof system where $V$ delegates the entire learning algorithm to $P$, including the quantum data requirements, and then simply asks $P$ to send over the solution. For such a proof system to be sound, $V$ then only needs to use their data oracle to decide the validity of the candidate solution sent by the prover. As we have seen in Examples~\ref{example:trivial-boolean} and~\ref{example:trivial-verification-quantum}, in certain settings such as realizable learning of Boolean functions and certain sets of quantum states, such an approach is straightforward. In this section we go beyond these straightforward settings, and provide a variety of interactive proof protocols for specific problems, which allow the verifier to surpass even the query complexity requirements of an isolated prover, via validation of a claimed solution. 

\subsubsection{Interactive proofs for state tomography}\label{sss:tomography}

For our first example, we consider quantum state tomography as defined in Problem~\ref{prob:qst} and discussed in \Cref{example:quantum-state-tomography}.  More specifically, we construct an interactive proof protocol in which the verifier first asks the prover to solve the quantum state tomography problem using their own copies of the unknown quantum state, before then interacting with the prover to verify that this is indeed a valid solution. In this way, the verifier is able to use less quantum state copies then even the coherent multi-copy prover would require to solve the problem on its own.

\begin{theorem}[Interactive quantum state tomography (formal version of \Cref{theorem:interactive-state-tomography-intro-version})]
    \label{theorem:interactive-state-tomography-formal-version}
    There exists an interactive proof system $(V,P)$ for state tomography in trace distance with success probability $\geq 1-\delta$, with quantum communication between an incoherent single-copy $V$ and a coherent multi-copy $P$, such that $V$ uses at most $\mathcal{O}\left(\frac{d \log \delta^{-1}}{\varepsilon^2}\right)$ copies of the unknown \(d\)-dimensional quantum state.
    More precisely, the complexities are as follows:
    \begin{itemize}
        \item $V$ uses $\mathcal{O}\left(\frac{d \log \delta^{-1}}{\varepsilon^2}\right)$ copies.
        \item $P$ uses $\Theta\left(\frac{d^2 \log \delta^{-1}}{\varepsilon^2}\right)$ copies.
    \end{itemize}
\end{theorem}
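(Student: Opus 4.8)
The plan is to instantiate the generic ``certificate of solution validity'' template of \Cref{observation:trivial-ip-from-valid-solution-decision}: first have $P$ perform full quantum state tomography of the unknown $\rho$ using its own coherent multi-copy access -- which costs $\Theta(d^2\log\delta^{-1}/\varepsilon^2)$ copies by the optimal collective tomography algorithms of \cite{haah2016sample,odonnell2016efficient} -- and report a classical description of a candidate density matrix $\hat\rho$; then have $V$ verify that $\hat\rho$ is genuinely $\varepsilon$-close to $\rho$ in trace distance. The only nontrivial point is the verification step. A single-copy incoherent $V$ cannot decide $\varepsilon$-closeness of $\rho$ to a worst-case \emph{known} $\hat\rho$ on its own without paying $\Omega(d^{3/2}/\varepsilon^2)$ copies, since the choice $\hat\rho=\mathds{1}/d$ turns this into incoherent qudit mixedness testing, for which \cite{chen2022tight} gives an $\Omega(d^{3/2}/\varepsilon^2)$ lower bound. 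Hence the verification must itself be delegated, which is where quantum communication enters and why the no-go \Cref{theorem:no-go-interactive-many-vs-one-distinguishing-intro-version} does not obstruct us.

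Concretely, after receiving $\hat\rho$, the verifier checks classically that $\hat\rho$ is a valid density matrix (aborting otherwise) and then invokes a coherent multi-copy subroutine $\mathcal{C}_{\hat\rho}$ that estimates $\|\rho-\hat\rho\|_1$ to additive accuracy $\varepsilon/4$ with success probability $\geq 1-\delta/4$ using $m=\mathcal{O}(d\log\delta^{-1}/\varepsilon^2)$ copies of $\rho$; such a subroutine exists via the collective-measurement state-certification / trace-distance estimation techniques of \cite{buadescu2019quantum,chen2022tight}, whose copy complexity is $\mathcal{O}(d/\varepsilon^2)$ and which we amplify to confidence $1-\delta/4$. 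Since $V$ lacks the quantum memory to run $\mathcal{C}_{\hat\rho}$ but does know the circuit implementing it (it depends only on $\hat\rho$, which $V$ now holds), $V$ delegates the execution of $\mathcal{C}_{\hat\rho}$ to $P$ using a universal VBQC protocol exactly as in \Cref{lem:interactive-quantum-proofs-vbqc}: $V$ queries its oracle $m$ times, feeding the copies of $\rho$ to $P$ one copy (indeed one qubit) at a time through the VBQC encoding, operating on each incoming qubit and forwarding it. At the end $V$ recovers the claimed estimate $\hat t$ of $\|\rho-\hat\rho\|_1$; it aborts if the VBQC verification rejects, and otherwise accepts and outputs $\hat\rho$ iff $\hat t\leq 3\varepsilon/4$.

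Completeness and soundness then follow by the standard composition argument. For completeness, the honest $P$ runs tomography to trace-distance accuracy $\varepsilon/3$ (so $\|\rho-\hat\rho\|_1\leq\varepsilon/3$) and follows the VBQC protocol honestly; by VBQC completeness $V$ does not abort, the recovered estimate satisfies $\hat t\leq\varepsilon/3+\varepsilon/4<3\varepsilon/4$, so $V$ accepts, and $\hat\rho$ is a valid solution. A union bound over the $O(\delta)$ failure events of the two phases gives success $\geq 1-\delta$. For soundness, fix any cheating $P'$: if the $\hat\rho$ it sends is not a valid density matrix or has $\|\rho-\hat\rho\|_1>\varepsilon$, then either $P'$ deviates from the VBQC protocol -- and VBQC soundness forces $V$ to abort except with probability $<\delta/2$ -- or $P'$ behaves as if honestly running $\mathcal{C}_{\hat\rho}$, in which case $\hat t>\varepsilon-\varepsilon/4=3\varepsilon/4$ with probability $\geq 1-\delta/4$ and $V$ rejects; so the probability that $V$ both fails to abort and outputs the invalid $\hat\rho$ is $<\delta$. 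The copy counts are as claimed: $V$ uses only the $m=\mathcal{O}(d\log\delta^{-1}/\varepsilon^2)$ copies of the delegated verification, while $P$ uses the $\Theta(d^2\log\delta^{-1}/\varepsilon^2)$ copies of the tomography phase and no copies in the verification phase (which runs on $V$'s copies).

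I expect the main obstacle to be isolating the verification subroutine with precisely the right guarantees: a coherent multi-copy procedure that estimates trace distance to a \emph{known} state -- or at least tolerantly tests ``$\varepsilon/3$-close'' versus ``$\varepsilon$-far'' -- with $\mathcal{O}(d/\varepsilon^2)$ copy complexity and a clean enough circuit description to be plugged into VBQC, together with a careful two-phase soundness analysis against a $P'$ that is adversarial in \emph{both} the tomography and the verification phases. The rank-$k$ extension then goes through verbatim: replace the tomography phase by the optimal rank-$k$ tomography of \cite{odonnell2016efficient} (costing $\widetilde{\mathcal{O}}(dk/\varepsilon^2)$ copies for $P$) and the certification subroutine by its rank-$k$ counterpart, whose copy complexity drops to $\mathcal{O}(k/\varepsilon^2)$ since the relevant support is $k$-dimensional, yielding a verifier copy complexity $\mathcal{O}(k\log\delta^{-1}/\varepsilon^2)$ independent of $d$.
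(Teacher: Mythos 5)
Your proposal follows essentially the same route as the paper's proof: have $P$ run optimal collective tomography and send $\hat\rho$, then have $V$ verify the claim by delegating, via the VBQC observation (\Cref{lem:interactive-quantum-proofs-vbqc}), a coherent multi-copy closeness test between $\rho$ and the now-known $\hat\rho$. The one place where your write-up is slightly stronger than what is actually invoked: the paper does not use a trace-distance \emph{estimator} but the tolerant \emph{testing} result of \cite[Theorem 1.6]{buadescu2019quantum}, distinguishing $\|\rho-\hat\rho\|_1\le 0.99\varepsilon$ from $\|\rho-\hat\rho\|_1>\varepsilon$ from $\mathcal{O}(d/\varepsilon^2)$ copies of each; as you yourself note, your protocol only ever thresholds the estimate once, so it reduces to exactly such a tolerant test and the estimation framing is unnecessary (and, as stated, not obviously supplied by the cited references at the claimed complexity). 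Your rank-$k$ remark likewise matches the paper's, which swaps in \cite[Corollary 1.5]{buadescu2019quantum} for the verification step.
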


We highlight here that, as shown in Table~\ref{table:advantages-via-IPs}, the number of copies used by $V$ in \Cref{theorem:interactive-state-tomography-formal-version} improves upon both the optimal single-copy state tomography lower bound of $\Omega(\nicefrac{d^3}{\varepsilon^2})$ \cite{chen2023whendoesadaptivity} and the optimal multi-copy state tomography lower bound of $\Omega(\nicefrac{d^2}{\varepsilon^2})$ \cite{haah2016sample, odonnell2016efficient}. 

\begin{proof}
    We again begin our proof with a high-level description of the protocol. This is then followed by pseudo-code, proofs of comlpeteness and soundness, and a complexity analysis.

    Let \(\delta_V=\delta_P=\frac{\delta}{2}\).
	The interactive protocol follows the idea outlined before the theorem statement. 
	First, $V$ asks $P$ to perform full state tomography with accuracy $0.99 \varepsilon$ and success probability $\geq 1-\delta_P$, and to send the obtained hypothesis $\hat{\rho}$ satisfying $\norm{\rho-\hat{\rho}}_1\leq 0.99 \varepsilon$. Through multi-copy measurements, $P$ can achieve this from $\Theta\left(\frac{d^2 \log \delta_P^{-1}}{\varepsilon^2}\right)$ copies of $\rho$ \cite{haah2016sample, odonnell2016efficient}.
	Then, by \cite[Theorem 1.6]{buadescu2019quantum} $\mathcal{O}\left(\frac{d \log \delta_V^{-1}}{\varepsilon^2}\right)$ copies of $\rho$ and $\hat{\rho}$, respectively, suffice to, with success probability $\geq 1-\delta_V$, distinguish the case $\norm{\rho-\hat{\rho}}_1\leq 0.99\varepsilon$ from $\norm{\rho-\hat{\rho}}_1>\varepsilon$ using multi-copy measurements and promised that one of the two cases holds. 
    Thus, by \Cref{lem:interactive-quantum-proofs-vbqc}, $V$ interacting with $P$ can solve the same decision task using their own state oracle $\mathcal{O}\left(\frac{d \log \delta_V^{-1}}{\varepsilon^2}\right)$ times (and additionally preparing $\mathcal{O}\left(\frac{d \log \delta_V^{-1}}{\varepsilon^2}\right)$ copies of $\hat{\rho}$), again with success probability $\geq 1-\delta_V$. 
	Finally, \(V\) rejects if she obtained the outcome ``$\norm{\rho-\hat{\rho}}_1>\varepsilon$'' and aborts the interaction; otherwise, she accepts the interaction and outputs the hypothesis $\hat{\rho}$.

    We give pseudo-code for this procedure in \Cref{alg:state-tomography}.
    Let us analyze the completeness and soundness of this protocol. 
    
    \textbf{Completeness:} Completeness is easy to see: If $P$ is honest, then $\norm{\rho-\hat{\rho}}_1\leq 0.99 \varepsilon$ holds with probability $\geq 1-\delta_P \geq 1 - \delta$. In this case, by the completeness of the interactive proof version of Theorem 1.6 from Ref.\ \cite{buadescu2019quantum} obtained via \Cref{lem:interactive-quantum-proofs-vbqc}, $V$ obtains the ``$\norm{\rho-\hat{\rho}}_1\leq 0.99\varepsilon$'' outcome with probability $\geq 1-\delta_V -\delta_P=1-\delta$.
    Therefore $V$ accepts and outputs $\hat{\rho}$, which indeed satisfies $\norm{\rho-\hat{\rho}}_1\leq 0.99\varepsilon\leq \varepsilon$. 

    \begin{algorithm}[htp!]
        \caption{IP for state tomography with quantum communication}\label{alg:state-tomography}
            \begin{algorithmic}[1]
                \Require Verifier copy oracle $\mathsf{O}_V(\rho)$; 
                confidence parameter $\delta\in (0,1)$; accuracy parameter $\varepsilon\in (0,1)$; function $m:(0,1)\to\mathbb{N}$ such that $P$ can solve quantum state certification to accuracy $\varepsilon$ with success probability $\geq 1-\eta$ from $m(\eta)$ copies.
                \Ensure $\hat{\rho}$ s.t. $\norm{\rho - \hat{\rho}}_1\leq\varepsilon$ or ``abort''
                \State Set $\delta_V=\delta_P=\delta/2$ and $m=m(\delta_P)$.
                \State $V$ asks the prover to perform full state tomography with accuracy $0.99 \varepsilon$ and success probability $\geq 1-\delta_P$, and to send (a classical description of) the obtained density matrix $\hat{\rho}$ back to $V$.
                \State $V$ prepares $\mathcal{O}\left(m \log(1/\delta_V)\right)$ copies of $\hat{\rho}$.
                \State $V$ obtains $\mathcal{O}\left(m \log(1/\delta_V)\right)$ copies of $\rho$ by querying $\mathsf{O}_V(\rho)$.
                \State Using the copies of $\rho$ and $\hat{\rho}$, $V$ interacts with the prover to distinguish between $\norm{\rho-\hat{\rho}}_1\leq 0.99\varepsilon$ and $\norm{\rho-\hat{\rho}}_1>\varepsilon$ with success probability $\geq 1-\delta_V$, via VBQC-based interactive proof for learning (compare \Cref{lem:interactive-quantum-proofs-vbqc}).
                \If{Step 5 produces ``$\norm{\rho-\hat{\rho}}_1>\varepsilon$''}
                    \State $V$ outputs ``abort''
                \Else
                    \State $V$ outputs $\hat{\rho}$.
                \EndIf
            \end{algorithmic}
    \end{algorithm}

    \textbf{Soundness:}
    We now turn to the soundness analysis.
    For any $P$, the probability that $V$ accepts the interaction and outputs an \emph{invalid} hypothesis state $\hat{\rho}$ is upper bounded by the probability that, in the second step, $V$ obtains the ``$\norm{\rho-\hat{\rho}}_1\leq 0.99\varepsilon$'' outcome despite $\norm{\rho-\hat{\rho}}_1>\varepsilon$. By the soundness of the interactive proof version of Theorem 1.6 from \cite{buadescu2019quantum} obtained via \Cref{lem:interactive-quantum-proofs-vbqc}, this happens with probability at most $\delta_V<\delta$.

    \textbf{Complexity analysis:}
    The complexity bounds have already been given in the initial description of the protocol, relying on \Cref{lem:interactive-quantum-proofs-vbqc} and results from \cite{haah2016sample, odonnell2016efficient,buadescu2019quantum}.
\end{proof}

We can obtain a similar result for rank-$k$ tomography. The verification protocol has the same structure as the one given above, with two minor differences: On the one hand, $P$ uses $\Theta\left(\frac{kd \log\delta^{-1}}{\varepsilon^2}\right)$ copies to perform rank-$k$ tomography \cite{odonnell2016efficient}, while $V$ makes use of the algorithm at the core of \cite[Corollary 1.5]{buadescu2019quantum} to verify from whether the hypothesis $\hat{\rho}$ sent by $P$ is close, in trace distance, to the unknown state $\rho$. By an analogous reasoning as in the above proof, $V$ can do this using $O(k\log\delta^{-1}/\epsilon^2)$ many queries to her oracle.

\subsubsection{Interactive proofs for agnostic rank-$k$ state tomography}\label{sss:rank_k}

We now consider the problem of agnostic rank-$k$ quantum state tomography with respect to the trace distance, as defined in Problem~\ref{prob:agnostic_rank_k_quantum_state}. To recap, here the problem is as follows: Given an arbitrary quantum state $\rho$, output some rank-$k$ quantum state $\sigma$ that is a close-to-optimal rank-$k$ quantum state approximation to $\rho$, with distance measured by the Schatten $1$-norm. As per the following theorem, there exists an interactive proof protocol which allows a memory constrained-verifier to use even fewer copies than are be required by a coherent multi-copy algorithm. 

\begin{theorem}[Interactive agnostic rank-$k$ state tomography in trace distance (formal version of Theorem~\ref{theorem:agnostic-rank-k-tomography})]
There exists an interactive proof system $(V,P)$ for interactive agnostic rank-$k$ quantum state tomography in trace distance with success probability $\geq 1-\delta$, with quantum communication between an incoherent single-copy $V$ and a coherent multi-copy $P$, such that $V$ uses at most \(\mathcal{O}(k^2\varepsilon^{-4}\log \delta^{-1})\) copies of the unknown state. More precisely, the complexities are as follows:
\begin{itemize}
    \item $V$ uses \(\mathcal{O}(k^2\varepsilon^{-4}\log \delta^{-1})\) copies.
    \item $P$ uses \(\mathcal{O}\left( d^{2}k^2\varepsilon^{-2}\log \delta^{-1}\right)\) copies. 
\end{itemize}
\end{theorem}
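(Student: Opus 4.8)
*
The plan is to instantiate the delegate-then-verify template of \Cref{observation:trivial-ip-from-valid-solution-decision}, lifted to an incoherent single-copy verifier via the VBQC-based protocol of \Cref{lem:interactive-quantum-proofs-vbqc}, exactly mirroring the proof of \Cref{theorem:interactive-state-tomography-formal-version}. First, $V$ would ask $P$ to solve agnostic rank-$k$ tomography (Problem~\ref{prob:agnostic_rank_k_quantum_state}) on $P$'s own copies of $\rho$ to accuracy $\varepsilon/c$ (for a suitable constant $c$) and confidence $1-\delta/3$, and to send back a classical description of the resulting rank-$k$ density matrix $\sigma$. An honest $P$ can do this with $\mathcal{O}(d^2 k^2 \varepsilon^{-2}\log \delta^{-1})$ copies, e.g.\ by running full state tomography to a suitable accuracy and classically computing a near-optimal rank-$k$ approximation of the estimate (cf.\ \cite{haah2016sample, odonnell2016efficient, buadescu2019quantum}). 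All of $V$'s remaining effort then goes into \emph{verifying} that the received $\sigma$ is valid, i.e.\ that $\|\sigma-\rho\|_1 \le \alpha\, \opt_k(\rho) + \varepsilon$ where $\opt_k(\rho) = \min_{\tau \text{ rank-}k} \|\tau - \rho\|_1$; by \Cref{lem:interactive-quantum-proofs-vbqc}, any coherent multi-copy algorithm that decides this can be executed by the incoherent single-copy $V$ at a copy cost equal to that algorithm's copy complexity, with the coherent processing delegated to $P$. So the entire problem reduces to designing a coherent multi-copy ``validity decider'' for $\sigma$ that uses only $\mathcal{O}(k^2 \varepsilon^{-4}\log \delta^{-1})$ copies of $\rho$ --- crucially, dimension-independent.

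I would build this decider from two ingredients. The first is an estimate of $\opt_k(\rho)$: the best rank-$k$ trace-norm approximant of $\rho$ is governed by the tail of its eigenvalue spectrum, so $\opt_k(\rho)$ equals, up to a universal constant, $\sum_{i>k}\lambda_i(\rho) = 1 - \sum_{i\le k}\lambda_i(\rho)$. The top-$k$ eigenvalue mass $\sum_{i\le k}\lambda_i(\rho)$ can be estimated to additive accuracy $\varepsilon$ by weak Schur sampling --- measuring $\rho^{\otimes n}$ in the Schur basis and reading off the first $k$ rows of the resulting Young diagram --- and the standard concentration of this measurement shows that $n = \poly(k)\cdot \varepsilon^{-2}\log\delta^{-1}$ copies suffice, dimension-independently, since only the top $k$ eigenvalues (not the length-$d$ tail) must be resolved (cf.\ the spectrum-estimation toolkit of \cite{odonnell2016efficient}); the constant-factor slack in relating this to $\opt_k(\rho)$ is absorbed into $\alpha$. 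The second ingredient is a tolerant test of whether $\|\sigma-\rho\|_1$ lies below the threshold $\alpha\widehat{\opt} + \varepsilon/2$ or above $\alpha\widehat{\opt} + \varepsilon$. Here the key structural fact is that $\rho - \sigma$ has at most $k$ negative eigenvalues --- any vector in its negative eigenspace must overlap the $k$-dimensional subspace $\mathrm{supp}(\sigma)$ --- so $\|\rho - \sigma\|_1 = 2\,\tr{(\sigma-\rho)_+}$ is controlled by a rank-$\le k$ object, whence the trace-distance-estimation machinery behind \cite{buadescu2019quantum} applies with complexity depending on $k$ but not $d$. The tolerant (threshold) nature of the test, rather than testing equality, is what produces the $\varepsilon^{-4}$ in place of $\varepsilon^{-2}$; altogether this sub-test uses $\mathcal{O}(k^2\varepsilon^{-4}\log\delta^{-1})$ copies. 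The decider accepts iff the tolerant test reports the small side.

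The interactive proof is then assembled exactly as in \Cref{theorem:interactive-state-tomography-formal-version}: $V$ receives $\sigma$ from $P$, streams $\mathcal{O}(k^2\varepsilon^{-4}\log\delta^{-1})$ single copies of $\rho$ (one at a time) through $P$ under the VBQC protocol of \Cref{lem:interactive-quantum-proofs-vbqc} so as to run the decider, and accepts and outputs $\sigma$ iff the decider accepts. For completeness, an honest $P$ returns $\sigma$ with $\|\sigma-\rho\|_1 \le \opt_k(\rho) + \varepsilon/c \le \alpha\,\opt_k(\rho) + \varepsilon/4$, so under the accuracy guarantees above the decider accepts; a union bound over the failure probabilities of $P$'s tomography, the VBQC rounds, and the two decider ingredients keeps the total error below $\delta$. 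For soundness, VBQC soundness ensures that any cheating $P'$ either causes $V$ to run the decider faithfully or gets caught except with small probability, and the faithfully-run decider accepts an invalid $\sigma$ (one with $\|\sigma-\rho\|_1 > \alpha\,\opt_k(\rho)+\varepsilon$) only if one of its two estimates is badly off --- again a $\delta$-bounded event. The copy counts are as claimed: $V$ uses $\mathcal{O}(k^2\varepsilon^{-4}\log\delta^{-1})$ copies (the ones fed into the decider) and $P$ uses $\mathcal{O}(d^2 k^2 \varepsilon^{-2}\log\delta^{-1})$ copies (its own tomography).

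The genuinely technical point --- and the one I expect to be the main obstacle --- is the trace-distance sub-test: certifying, from few copies of $\rho$ together with a classical description of the rank-$k$ state $\sigma$, whether $\|\sigma-\rho\|_1$ is below or well above a given threshold, with a copy complexity polynomial in $k$ and \emph{independent of $d$}, and realized by an explicit coherent multi-copy measurement circuit so that it can be delegated via VBQC. Note that $V$ cannot shortcut this by asking $P$ for, say, a basis of the relevant $\le k$-dimensional subspace: such auxiliary data cannot be trusted, since a dishonest $P'$ could choose it to conceal that $\sigma$ is far from $\rho$, so $V$ must extract the relevant low-rank quantities from $\rho$ itself. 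Carefully combining the ``at most $k$ negative eigenvalues'' structure with the polynomial-estimation / Schur-sampling tools underlying \cite{buadescu2019quantum} to land exactly at the $k^2\varepsilon^{-4}$ scaling in the tolerant regime, and verifying that the resulting circuit composes soundly with the VBQC lift, is where the real work lies; the $\opt_k$ estimation and the completeness/soundness bookkeeping are comparatively routine and follow \Cref{theorem:interactive-state-tomography-formal-version}.
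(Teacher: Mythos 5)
Your proposal takes a genuinely different route from the paper and, as you yourself flag, leaves a hole at its load-bearing step. You reduce the verification to (i) estimating $\opt_k(\rho)$ via the top-$k$ eigenvalue mass, and (ii) a $d$-independent tolerant test of whether $\|\sigma-\rho\|_1$ is below a threshold, using $\mathcal{O}(k^2\varepsilon^{-4}\log\delta^{-1})$ copies. Part (i) is what the paper does as well (Step~3 of \Cref{alg:agnostic-trace-tomography}, via \cite[Corollary 1.8]{odonnell2016efficient}). Part (ii), however, is not a known primitive and your sketch does not supply one. The observation that $\rho-\sigma$ has at most $k$ negative eigenvalues is correct (Weyl interlacing), but it does not place $(\sigma-\rho)_+$ inside a \emph{known} $k$-dimensional subspace: its support depends on $\rho$, overlaps $\mathrm{supp}(\sigma)$ but is not contained in it, so one cannot simply restrict measurements to $\mathrm{supp}(\sigma)$. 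And the trace-distance estimation of \cite{buadescu2019quantum} attains $d$-free complexity only when \emph{both} states are (close to) rank $k$; here $\rho$ is arbitrary. So the decider you need is precisely what is missing. A secondary discrepancy is that you plan to absorb a constant into $\alpha$, ending up with $\alpha>1$, whereas the paper achieves the tight $\alpha=1$ (the output is $\varepsilon$-close to the optimum $2\sum_{i>k}\alpha_i$).

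The paper circumvents this obstacle rather than solving it: the verifier never estimates $\|\sigma-\rho\|_1$ directly. Instead it checks the O'Donnell--Wright surrogate $\|U'\mathrm{diag}(\alpha'_{1:k})U'^{\dagger}-\rho\|_1 \le \sqrt{2k}\,\|\mathrm{diag}(\alpha')-U'^{\dagger}\rho U'\|_2 + \|R\|_1$, where $R$ is the bottom-right $(d-k)\times(d-k)$ corner of $U'^{\dagger}\rho U'$. This surrogate is what makes the decider $d$-independent: the squared Frobenius term unpacks to $\mathrm{pur}(\rho')+\mathrm{pur}(\rho)-2\,\mathrm{tr}[\mathrm{diag}(\alpha')U'^{\dagger}\rho U']$, i.e.\ a purity plus a rotated overlap, and $\|R\|_1=1-\mathrm{tr}[\Pi\, U'^{\dagger}\rho U']$ is a single projector expectation, all estimable to the required accuracy $\tilde{\varepsilon}_2=\Theta(\varepsilon^2/k)$ from $\mathcal{O}(k^2\varepsilon^{-4}\log\delta^{-1})$ copies. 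Completeness is then an argument that the honest Schur-sampling prover makes this surrogate small (via \cite[Theorem 1.5]{odonnell2016efficient}); soundness follows because the surrogate is a genuine upper bound on the trace distance. Crucially, this requires the prover to commit to the full pair $(U',\alpha')$, not merely the rank-$k$ state $\sigma$ --- discarding that extra structure, as your protocol does, is exactly what makes direct verification of $\sigma$ intractable.
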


To prove the above result we will actually first construct -- in Lemma~\ref{lem:agnostic-rank-k-psd} -- an interactive protocol for the closely related problem of agnostic rank-$k$ PSD tomography, as defined in Problem~\ref{prob:agnostic_rank_k_PSD}. In particular, for this problem the learning algorithm does not have to output a normalized quantum state, but rather a rank-$k$ positive semi-definite (PSD) matrix that is a close-to-optimal rank-$k$ PSD approximation to the unknown state. To this end, we begin by establishing some required notation, as well as the technical Lemmas~\ref{lem:mirsky} and~\ref{lem:technical-agnostic-rank-k}. With this in hand we then provide the IP protocol for agnostic rank-$k$ PSD tomography in Lemma~\ref{lem:agnostic-rank-k-psd}, before finally using this to give a proof of Theorem~\ref{theorem:agnostic-rank-k-tomography}.

For a vector \(\alpha=(\alpha_1,\dots ,\alpha_d)\) and an integer \(1\leq k\leq d\) we use the notation \(\alpha_{1:k}=(\alpha_1,\dots ,\alpha_k, 0\dots ,0)\) and \(\alpha_{(k+1):d}=(0\dots ,0,\alpha_{k+1},\dots ,\alpha_d)\).
For a PSD matrix \(A=U \operatorname{diag}(\alpha)U^{\dagger}\) with eigenvalues \(\alpha_1\geq \alpha_2\geq \dots \geq \alpha_d\), we denote by \(A_{1:k}=U \operatorname{diag}(\alpha_{1:k})U^{\dagger}\) the rank-\(k\) PSD matrix obtained by truncating the eigenvalues of \(A\).
Throughout this section we will without loss of generality always assume eigenvalues to be in non-increasing order. We then have the following technical lemmas:

\begin{lemma}[{\cite[Theorem 5]{mirskySYMMETRICGAUGEFUNCTIONS1960}}]
    \label{lem:mirsky}
    Let \(\norm{\, \cdot \,}\) be any unitarily invariant norm 
    and 
    let \(A, B\in\mathbb{C}^{d\times d}\) be complex matrices with singular values \(\alpha_1 \geq \alpha_2\geq \dots \geq\alpha_d\) and \(\beta_1 \geq \beta_2\geq\dots \geq\beta_d\), respectively.
    Then, 
    \begin{align}
        \norm*{A-B}\geq \norm*{\operatorname{diag}(\alpha-\beta)}\,.
    \end{align}
\end{lemma}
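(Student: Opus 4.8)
The plan is to obtain Mirsky's inequality by assembling three classical facts about unitarily invariant norms and eigenvalue/singular-value perturbation, rather than reproving each from scratch. \emph{Step 1 (reduce to symmetric gauge functions).} By von Neumann's theorem, every unitarily invariant norm on $\mathbb{C}^{d\times d}$ can be written as $\norm{X}=\Phi(\sigma(X))$, where $\sigma(X)=(\sigma_1(X),\dots,\sigma_d(X))$ lists the singular values in non-increasing order and $\Phi\colon\mathbb{R}^d\to\mathbb{R}_{\geq 0}$ is a symmetric gauge function (a norm invariant under permutations and sign changes of the coordinates). Since the singular values of $\operatorname{diag}(\alpha-\beta)$ are exactly the numbers $\lvert\alpha_i-\beta_i\rvert$, the claim becomes $\Phi\bigl(\lvert\alpha-\beta\rvert^{\downarrow}\bigr)\leq\Phi\bigl(\sigma(A-B)\bigr)$, where $\lvert\alpha-\beta\rvert^{\downarrow}$ denotes the non-increasing rearrangement of $(\lvert\alpha_i-\beta_i\rvert)_{i=1}^{d}$.

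\emph{Step 2 (singular-value weak majorization).} Next I would prove $\lvert\alpha-\beta\rvert^{\downarrow}\prec_{w}\sigma(A-B)$, i.e.\ $\sum_{i=1}^{k}\lvert\alpha-\beta\rvert^{\downarrow}_{i}\leq\sum_{i=1}^{k}\sigma_i(A-B)$ for every $k$. The cleanest route is the Jordan--Wielandt Hermitian dilation: set $\widetilde A=\left(\begin{smallmatrix}0&A\\A^{\dagger}&0\end{smallmatrix}\right)$ and $\widetilde B=\left(\begin{smallmatrix}0&B\\B^{\dagger}&0\end{smallmatrix}\right)$, which are Hermitian $2d\times 2d$ matrices with eigenvalues $\pm\alpha_i$ and $\pm\beta_i$ respectively, and with $\widetilde A-\widetilde B$ having eigenvalues $\pm\sigma_i(A-B)$. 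Lidskii's majorization theorem for Hermitian matrices gives $\lambda^{\downarrow}(\widetilde A)-\lambda^{\downarrow}(\widetilde B)\prec\lambda(\widetilde A-\widetilde B)$, where $\lambda^{\downarrow}(\cdot)$ denotes eigenvalues in non-increasing order. Reading off the entries, both sides are sign-symmetric vectors whose absolute-value parts are $\lvert\alpha-\beta\rvert^{\downarrow}$ and $\sigma(A-B)$, each entry repeated twice; passing from majorization of a vector to weak majorization of its sorted absolute values (apply the majorization inequality with the convex maps $t\mapsto(\lvert t\rvert-s)_{+}$ for $s\geq 0$, together with the Hardy--Littlewood--P\'olya characterization of $\prec_{w}$) then yields $\lvert\alpha-\beta\rvert^{\downarrow}\prec_{w}\sigma(A-B)$.

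\emph{Step 3 (Fan dominance and conclusion).} Finally I would invoke the Fan dominance principle: symmetric gauge functions are monotone under weak majorization, so $u\prec_{w}v$ with $u,v\in\mathbb{R}_{\geq 0}^{d}$ implies $\Phi(u)\leq\Phi(v)$. Applying this with $u=\lvert\alpha-\beta\rvert^{\downarrow}$ and $v=\sigma(A-B)$ and combining with Step 1 gives $\norm{\operatorname{diag}(\alpha-\beta)}=\Phi(\lvert\alpha-\beta\rvert^{\downarrow})\leq\Phi(\sigma(A-B))=\norm{A-B}$, which is the assertion.

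The main obstacle is Step 2: the singular-value weak-majorization bound is essentially equivalent to Lidskii's theorem, whose proof (via the Wielandt minimax formula for partial sums of eigenvalues, or via Birkhoff's theorem that the doubly stochastic matrices have the permutation matrices as extreme points) is the one genuinely substantive ingredient, while von Neumann's theorem and Fan dominance are standard structural facts that can simply be cited. Alternatively, since the statement is precisely Theorem~5 of \cite{mirskySYMMETRICGAUGEFUNCTIONS1960}, one may just cite it directly and omit the argument entirely.
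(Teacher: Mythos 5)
The paper supplies no proof for this lemma: it is imported verbatim as Theorem~5 of Mirsky (1960) via a citation, so there is no in-paper argument to compare your derivation against. That said, your reconstruction is correct and follows the standard textbook route to Mirsky's inequality: reduce a unitarily invariant norm to a symmetric gauge function via von Neumann's theorem; pass to the $2d\times 2d$ Jordan--Wielandt dilations $\widetilde A,\widetilde B$, turning singular values of $A$, $B$, and $A-B$ into Hermitian eigenvalues; invoke Lidskii's majorization $\lambda^{\downarrow}(\widetilde A)-\lambda^{\downarrow}(\widetilde B)\prec\lambda(\widetilde A-\widetilde B)$; convert this to weak majorization of the sorted absolute values using the convex maps $t\mapsto(\lvert t\rvert-s)_{+}$ together with the Hardy--Littlewood--P\'olya characterization; and close with the Fan dominance principle. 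The descent from the $2d$-dimensional weak majorization to the $d$-dimensional one is legitimate because both vectors arise by doubling each entry, so restricting to even partial sums recovers the claim. As you note, Lidskii's theorem is the one genuinely substantive ingredient, and simply citing Mirsky directly --- which is what the paper does --- is the lightest-weight option when the result is used as a black box.
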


\begin{lemma}[Rank-\(k\) truncation and approximation]
    \label{lem:technical-agnostic-rank-k}
    Let \(p\in [1,\infty]\), \( \varepsilon\in (0,1/\sqrt{2})\), \(1\leq k\leq d\), and let \(\rho=U \operatorname{diag}(\alpha)U^{\dagger} \) be a \(d\)-dimensional quantum state and let \(A\in\mathbb{C}^{d\times d}\) be a positive semidefinite rank-\(k\) matrix with singular values \(a_1 \geq a_2 \geq \dots \geq a_k\) and \(a_i=0\) for \(i>k\).
    Moreover, let \(\sigma=W \operatorname{diag}(\beta)W^{\dagger}\) be a quantum state with \(\norm{\sigma-\rho}_p \leq \varepsilon\).  
    Then, it holds that
    \begin{align}
        \norm{\rho - A}_p^p \geq \sum_{i=k+1}^d \alpha_i^p\,,
        \hspace{3em}\text{and}\hspace{3em}
        \norm{\rho-\sigma_{1:k}}_p \leq 
        \norm*{\operatorname{diag}(\alpha_{(k+1):d})}_p 
        + 2 \varepsilon\,. 
    \end{align}
\end{lemma}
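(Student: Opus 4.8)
The plan is to prove the two inequalities separately, in both cases reducing everything to \Cref{lem:mirsky} together with the triangle inequality for Schatten $p$-norms; no heavier machinery is needed.

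For the first inequality I would apply \Cref{lem:mirsky} with the substitution $A\mapsto\rho$, $B\mapsto A$. Since $\rho$ is positive semidefinite, its singular values coincide with its eigenvalues $\alpha_1\geq\dots\geq\alpha_d$, while the singular value vector of the rank-$k$ matrix $A$ is $a=(a_1,\dots,a_k,0,\dots,0)$. Mirsky's inequality then gives $\norm{\rho-A}_p\geq\norm{\operatorname{diag}(\alpha-a)}_p$, and for $p<\infty$ one bounds $\norm{\operatorname{diag}(\alpha-a)}_p^p=\sum_{i=1}^d|\alpha_i-a_i|^p\geq\sum_{i=k+1}^d\alpha_i^p$ using $a_i=0$ for $i>k$; raising the chain of inequalities to the $p$-th power yields $\norm{\rho-A}_p^p\geq\sum_{i=k+1}^d\alpha_i^p$. (The endpoint $p=\infty$ follows by the usual limiting interpretation, and in any case the only downstream use of the lemma is at $p=1$.)

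For the second inequality the key intermediate fact — itself a direct consequence of \Cref{lem:mirsky} — is that $\sigma_{1:k}$ is a best rank-$\leq k$ approximation of $\sigma$ in every Schatten $p$-norm: for any $B$ with $\operatorname{rank}(B)\leq k$, whose singular value vector $b$ therefore vanishes in its last $d-k$ entries, \Cref{lem:mirsky} applied to $\sigma$ and $B$ gives $\norm{\sigma-B}_p\geq\norm{\operatorname{diag}(\beta-b)}_p\geq\norm{\operatorname{diag}(\beta_{(k+1):d})}_p=\norm{\sigma-\sigma_{1:k}}_p$, the last equality by unitary invariance. Then I would chain three estimates: (i) the triangle inequality gives $\norm{\rho-\sigma_{1:k}}_p\leq\norm{\rho-\sigma}_p+\norm{\sigma-\sigma_{1:k}}_p\leq\varepsilon+\norm{\sigma-\sigma_{1:k}}_p$; (ii) since $\rho_{1:k}$ has rank $\leq k$, the best-approximation property gives $\norm{\sigma-\sigma_{1:k}}_p\leq\norm{\sigma-\rho_{1:k}}_p$; (iii) the triangle inequality again, together with $\rho-\rho_{1:k}=U\operatorname{diag}(\alpha_{(k+1):d})U^{\dagger}$ and unitary invariance, gives $\norm{\sigma-\rho_{1:k}}_p\leq\norm{\sigma-\rho}_p+\norm{\rho-\rho_{1:k}}_p\leq\varepsilon+\norm{\operatorname{diag}(\alpha_{(k+1):d})}_p$. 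Combining (i)--(iii) yields $\norm{\rho-\sigma_{1:k}}_p\leq\norm{\operatorname{diag}(\alpha_{(k+1):d})}_p+2\varepsilon$.

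Since every step is either an invocation of \Cref{lem:mirsky} or the triangle inequality, I do not expect any serious obstacle; the only points requiring mild care are extracting the best-rank-$k$-approximation statement from \Cref{lem:mirsky} (rather than quoting an Eckart--Young--Mirsky statement that is not itself proved here) and the $p=\infty$ interpretation of the first inequality. The hypothesis $\varepsilon\in(0,1/\sqrt2)$ plays no role in this argument and is presumably needed only where \Cref{lem:technical-agnostic-rank-k} is subsequently applied.
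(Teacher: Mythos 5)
Your proof of the first inequality is exactly the paper's argument. For the second inequality you take a genuinely different, though closely related, route. The paper also starts from the triangle inequality
\(\norm{\rho-\sigma_{1:k}}_p \leq \norm{\rho-\sigma}_p + \norm{\sigma-\sigma_{1:k}}_p\)
and then bounds \(\norm{\sigma-\sigma_{1:k}}_p=\norm{\operatorname{diag}(\beta_{(k+1):d})}_p\) by triangulating \emph{in the eigenvalue-tail space},
\(\norm{\operatorname{diag}(\beta_{(k+1):d})}_p\leq\norm{\operatorname{diag}(\alpha_{(k+1):d}-\beta_{(k+1):d})}_p+\norm{\operatorname{diag}(\alpha_{(k+1):d})}_p\),
and finally applying \Cref{lem:mirsky} once to get \(\norm{\operatorname{diag}(\alpha_{(k+1):d}-\beta_{(k+1):d})}_p\leq\norm{\operatorname{diag}(\alpha-\beta)}_p\leq\norm{\rho-\sigma}_p\leq\varepsilon\). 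You instead invoke the Eckart--Young--Mirsky optimality of \(\sigma_{1:k}\) (which you correctly derive from \Cref{lem:mirsky} rather than citing it unproved), compare against the rank-\(\leq k\) competitor \(\rho_{1:k}\), and run a second triangle inequality in matrix space, \(\norm{\sigma-\rho_{1:k}}_p\leq\norm{\sigma-\rho}_p+\norm{\rho-\rho_{1:k}}_p\). Both routes use \Cref{lem:mirsky} once, a triangle inequality twice, and unitary invariance, and both land on the same constant \(2\varepsilon\); the paper's version spares you the detour through \(\rho_{1:k}\) but needs the small observation that dropping the first \(k\) coordinates can only decrease the \(\ell_p\)-norm of \(\alpha-\beta\), while yours replaces that step with the (equally Mirsky-derived) best-approximation property. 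Your closing remarks — that \(p=\infty\) needs the usual limiting reading of \(\norm{\,\cdot\,}_p^p\), and that the hypothesis \(\varepsilon\in(0,1/\sqrt2)\) is never used in the proof — both also apply to the paper's own proof and are accurate.
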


Note: The first inequality in \Cref{lem:technical-agnostic-rank-k} in particular implies that an optimal rank-$k$ PSD approximation to $\rho$ in the trace distance is given by its rank-$k$ truncation $\rho_{1:k}$, and it achieves an approximation of accuracy $\sum_{i=k+1}^d \alpha_i$.
Consequently, an optimal rank-$k$ state approximation to $\rho$ is given by its normalized rank-$k$ truncation $\underline{\rho_{1:k}} = \frac{\rho_{1:k}}{\norm*{\rho_{1:k}}_1}$, and it achieves an approximation of accuracy $2\sum_{i=k+1}^d \alpha_i$ in Schatten $1$-norm.

\begin{proof}[Lemma~\ref{lem:technical-agnostic-rank-k}]
    The first inequality follows from \Cref{lem:mirsky}, the unitary invariance of Schatten \(p\)-norms and 
    \begin{align}
        \norm{\operatorname{diag}(\alpha-a)}_p^p
        =\norm{\alpha-a}_{\ell_p}^p
        =\sum_{i=1}^k \abs{\alpha_i -a_i}^p
        +\sum_{i=k+1}^d \abs{\alpha_i}^p
        \geq \sum_{i=k+1}^d \alpha_i^p\,,
    \end{align}
    where we dropped some non-negative contributions and used the positive semidefiniteness of \(\rho\) in the last step. 

    The second inequality similarly follows from \Cref{lem:mirsky} and 
    \begin{align}
        \norm*{\rho-\sigma_{1:k}}_p 
        &\leq \norm*{\rho-\sigma}_p 
        +\norm*{\sigma-\sigma_{1:k}}_p
        \nonumber\\
        &\leq \varepsilon 
        + \norm*{\operatorname{diag}(\beta_{(k+1):d})}_p  \nonumber \\ 
        &\leq \varepsilon 
        + \underbrace{\norm*{\operatorname{diag}(\alpha_{(k+1):d}-\beta_{(k+1):d})}_p}_{\leq \norm*{\operatorname{diag}(\alpha-\beta)}_p \stackrel{(\star)}{\leq} \norm*{\rho-\sigma}_p} 
        + \norm*{\operatorname{diag}(\alpha_{(k+1):d})}_p \nonumber\\
        &\leq 2\varepsilon + \left(\sum_{i=k+1}^d \alpha_i^p\right)^{\frac{1}{p}}\,, \nonumber
    \end{align}
    where \((\star)\) applies \Cref{lem:mirsky}.
\end{proof}

Given these technical lemmas, we can now construct an interactive proof protocol for agnostic rank-$k$ PSD tomography, which we will use as a crucial building block for construction of an interactive protocol for agnostic rank-$k$ state tomography which proves Theorem~\ref{theorem:agnostic-rank-k-tomography}.

\begin{lemma}[Interactive agnostic rank-\(k\) PSD tomography in trace-distance]\label{lem:agnostic-rank-k-psd}
    There exists an interactive proof system $(V,P)$ for interactive agnostic rank-$k$ PSD tomography in trace distance with success probability $\geq 1-\delta$, with quantum communication between an incoherent single-copy $V$ and a coherent multi-copy $P$, such that $V$ uses at most \(\mathcal{O}(k^2\varepsilon^{-4}\log \delta^{-1})\) copies of the unknown state. More precisely, the complexities are as follows:
    \begin{itemize}
        \item $V$ uses \(\mathcal{O}(k^2\varepsilon^{-4}\log \delta^{-1})\) copies.
        \item $P$ uses \(\mathcal{O}\left( d^{2} k^4\varepsilon^{-2}\log \delta^{-1}\right)\) copies.
    \end{itemize}
\end{lemma}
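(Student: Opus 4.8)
The plan is to instantiate the ``verify a claimed solution'' strategy of \Cref{observation:trivial-ip-from-valid-solution-decision}, using \Cref{lem:interactive-quantum-proofs-vbqc} to let the single-copy verifier $V$ delegate to $P$ the handful of multi-copy measurements needed for the verification step. Concretely, $V$ first asks $P$ to run a multi-copy full-state tomography routine on $P$'s own copies of the unknown $\rho$, to a fine trace-distance accuracy $\gamma = \Theta(\varepsilon/k^{2})$ --- which $P$ can do with $\Theta(d^{2}\gamma^{-2}\log\delta^{-1}) = \Ord{d^{2}k^{4}\varepsilon^{-2}\log\delta^{-1}}$ copies by~\cite{haah2016sample, odonnell2016efficient} --- and to send a classical description of the estimate $\hat\rho$. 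From $\hat\rho$, $V$ locally computes the projector $\Pi$ onto the top-$k$ eigenvectors of $\hat\rho$, sets the candidate output $A \defeq \hat\rho_{1:k} = \Pi\hat\rho\Pi$, and records $\widehat{\opt} \defeq \sum_{i>k}\hat\alpha_{i} = \norm{\hat\rho - A}_{1}$. It then remains for $V$ to check --- using only $\Ord{k^{2}\varepsilon^{-4}\log\delta^{-1}}$ of its own copies --- that $A$ is a valid agnostic rank-$k$ PSD solution for $\rho$, i.e.\ that $\norm{\rho - A}_{1} \le \opt + \varepsilon$ with $\opt = \sum_{i>k}\alpha_{i}(\rho)$ the optimal value identified in \Cref{lem:technical-agnostic-rank-k}.

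The verification rests on the trace-norm decomposition obtained from the triangle inequality together with $\Pi\hat\rho = A$: writing $\Pi^{\perp} = \id - \Pi$,
\begin{align*}
    \norm{\rho - A}_{1} \;\le\; \norm{\Pi\rho - A}_{1} + \norm{\Pi^{\perp}\rho}_{1} \;\le\; \sqrt{2k}\,\norm{\Pi\rho - A}_{2} + \norm{\Pi^{\perp}\rho}_{1}\, ,
\end{align*}
where the last step uses that $\Pi\rho - A$ has rank at most $2k$. Both summands are estimable by $V$ at a cost independent of $d$. For the first, $\norm{\Pi\rho - A}_{2}^{2} = \Tr{\Pi\rho^{2}} - 2\Tr{A\rho} + \Tr{A^{2}}$ is a low-degree functional of $\rho$ with $\Pi,A$ known classically, so $V$ estimates it to additive accuracy $\Theta(\varepsilon^{2}/k)$ by combining $2$-copy ``Bell-type'' measurements (delegated to $P$ via \Cref{lem:interactive-quantum-proofs-vbqc}) with direct single-copy measurements of the bounded observable $A$; this needs $\Ord{k^{2}\varepsilon^{-4}\log\delta^{-1}}$ copies of $\rho$ and dominates $V$'s budget. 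For the second, $\norm{\Pi^{\perp}\rho}_{1}$ is controlled through the leakage $q \defeq \Tr{\Pi^{\perp}\rho}$, a single projective expectation that $V$ estimates from $\Ord{\varepsilon^{-2}\log\delta^{-1}}$ single-copy measurements, together with the elementary bound $\norm{\Pi^{\perp}\rho}_{1} \le q + \norm{\Pi\rho - A}_{1}$. The verifier accepts and outputs $A$ iff the estimate of $\norm{\Pi\rho - A}_{2}$ is below $\Theta(\varepsilon/\sqrt{k})$ and the estimate of $q$ is below $\widehat{\opt} + \Theta(\varepsilon)$; otherwise it aborts. (If $\widehat{\opt} \ge 1-\varepsilon$ there is nothing to prove, since then even the zero matrix is valid; so $V$ may assume $q$ is bounded away from $1$, keeping benign any post-selection overhead in the delegated measurements.)

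Completeness follows because, for an honest $P$ with $\norm{\rho - \hat\rho}_{1} \le \gamma$, the identity $\Pi\hat\rho\Pi^{\perp} = 0$ gives $\norm{\Pi\rho - A}_{1} = \norm{\Pi(\rho - \hat\rho)}_{1} \le \gamma$, hence $\norm{\Pi\rho - A}_{2} \le \gamma \ll \varepsilon/\sqrt{k}$, while $q = \Tr{\Pi^{\perp}\hat\rho} + \Tr{\Pi^{\perp}(\rho-\hat\rho)} \le \widehat{\opt} + \gamma$; so with probability $\ge 1-\delta$ all checks pass, and the decomposition above together with $\widehat{\opt} \le \opt + \gamma$ (by \Cref{lem:mirsky} applied to the spectra of $\rho$ and $\hat\rho$) yields $\norm{\rho - A}_{1} \le \opt + \varepsilon$. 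Soundness follows because, for an arbitrary $P'$, conditioning on all of $V$'s estimators being accurate (which holds except with probability $<\delta$ after $\log\delta^{-1}$-fold repetition and a union bound), acceptance forces $\norm{\Pi\rho - A}_{2}$ to be $\Ord{\varepsilon/\sqrt k}$-small and $q - \widehat{\opt}$ to be $\Ord{\varepsilon}$-small; the displayed decomposition then bounds $\norm{\rho - A}_{1}$ by $\widehat{\opt} + \Ord{\varepsilon}$, and the remaining task is to show that passing the checks pins $\widehat{\opt}$ to within $\Ord{\varepsilon}$ of the true $\opt$. Here one observes that the sub-subspace part of check (i) forces $\abs{\Tr{\Pi\rho} - \Tr{A}} = \Ord{\varepsilon}$, i.e.\ $\widehat{\opt} = q \pm \Ord{\varepsilon}$, so the only way for $\widehat{\opt}$ to be inflated is for $\Pi$ itself to be a genuinely suboptimal rank-$k$ subspace; ruling this out requires a dimension-free certificate that $\Pi$ captures almost the maximal mass $\sum_{i\le k}\alpha_{i}(\rho) = \max_{\mathrm{rank}\text{-}k\ \Pi'}\Tr{\Pi'\rho}$, which one extracts from $P$ (e.g.\ an LP-duality witness $(t,Y)$ with $Y\succeq 0$, $\rho \preceq tI + Y$, and $kt+\Tr{Y}$ small) and verifies via further low-degree Hilbert--Schmidt estimates of $\rho$, at the price of a $\mathrm{poly}(k)$ overhead in accuracy that is absorbed into the copy counts above. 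Collecting all bounds gives the stated complexities for $V$ and $P$; \Cref{theorem:agnostic-rank-k-tomography} then follows by outputting the normalized truncation $A/\norm{A}_{1}$ and invoking \Cref{lem:technical-agnostic-rank-k} once more, at the cost of a constant factor in $\varepsilon$.

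I expect the main obstacle to be exactly this last point: verifying, with resources independent of $d$, that the rank-$k$ object reported by a possibly dishonest prover is a \emph{near-optimal} rank-$k$ approximation rather than merely a small one. The naive estimate $\norm{\Pi\rho\Pi^{\perp}}_{1} \le \sqrt{q(1-q)}$ on the cross term only yields $\norm{\rho - A}_{1} = \Ord{\sqrt{\opt}}$, which destroys the agnostic guarantee and is the reason one must pass to the rank-$2k$ Hilbert--Schmidt estimate and pay the $\sqrt{k}$ conversion loss (the source of the $\varepsilon^{-4}$ and $k^{2}$ dependence); and certifying near-optimality of the subspace inherently requires \emph{upper}-bounding a maximum over $k$-dimensional subspaces, so the dual-certificate check and its interplay with \Cref{lem:mirsky} and \Cref{lem:technical-agnostic-rank-k} is the technically heaviest --- and least mechanical --- part of the argument.
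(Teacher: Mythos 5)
Your overall architecture --- $P$ performs full tomography, $V$ builds $\Pi$ and $A=\Pi\hat\rho\Pi$ from the report, then $V$ verifies the trace-norm guarantee via a rank-$k$ Hilbert--Schmidt decomposition estimated from single- and two-copy measurements --- is in the same spirit as the paper's, which also routes through a rank-truncation inequality from O'Donnell--Wright and has $V$ estimate purity-type quantities and a projector expectation $p=\tr{\Pi\,U'\rho U'^\dagger}$ from its own copies. But there is a genuine gap at exactly the place you flag as the technically heaviest part, and the paper's resolution is a specific tool you do not invoke.

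The gap is the certification of near-optimality. You take $\widehat{\opt}=\sum_{i>k}\hat\alpha_i$ from the prover's reported $\hat\rho$, and then need to rule out a dishonest $P'$ inflating $\widehat{\opt}$ so that a bad $A$ still passes. You propose an LP-duality witness $(t,Y)$ with $Y\succeq 0$, $\rho\preceq tI+Y$, $kt+\Tr{Y}$ small, to be ``verified via further low-degree Hilbert--Schmidt estimates of $\rho$.'' As written this does not go through: an operator inequality $\rho\preceq tI+Y$ is a $d$-dimensional spectral condition, and it is not a finite set of bounded Hilbert--Schmidt functionals that a single-copy $V$ can estimate with $\poly(k)/\varepsilon^2$ copies independently of $d$. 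You would need a genuinely new idea to compress this certificate, and you rightly identify it as the hard part --- but then leave it undeveloped.

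The paper sidesteps the issue entirely. In Algorithm 3, Step 3, $V$ \emph{independently} estimates the truncated spectrum $\widehat{\alpha}_{1:k}$ of $\rho$ --- and hence the optimal loss $\opt=1-\sum_{i\le k}\alpha_i$ --- by running the $O(k^2/\tilde\varepsilon_1^2)$-copy spectrum-estimation protocol of \cite[Corollary 1.8]{odonnell2016efficient}, delegated to the prover via \Cref{lem:interactive-quantum-proofs-vbqc}. Because the quantum data enter through $V$'s own oracle and the computation is verified blind, this estimate is trustworthy even against a cheating prover, and it supplies the missing ``certificate of optimal loss'' directly. The soundness check then compares the estimated upper bound $\sqrt{2k}\|\mathrm{diag}(\alpha')-U'^\dagger\rho U'\|_2+(1-p)$ against $1-\sum_i\widehat\alpha_i$, both of which $V$ has computed from its own copies. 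That is the ingredient you are missing, and it is also what fixes the verifier's copy budget at $O(k^2\varepsilon^{-4}\log\delta^{-1})$: Step 3 costs $O(k^2\tilde\varepsilon_1^{-2})$ and the Frobenius-norm term needs accuracy $\tilde\varepsilon_2=\Theta(\varepsilon^2/k)$, giving $O(\tilde\varepsilon_2^{-2})=O(k^2/\varepsilon^4)$ copies for those estimates.

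Two smaller points. First, in your decomposition you move from $\|\Pi\rho-A\|_1$ to $\sqrt{2k}\,\|\Pi\rho-A\|_2$ ``since the rank is at most $2k$''; in fact $\Pi\rho-A=\Pi(\rho-\hat\rho)$ has rank at most $k$, so $\sqrt{k}$ suffices, though this is immaterial. Second, your claimed bound $\|\Pi^\perp\rho\|_1\le q+\|\Pi\rho-A\|_1$ is not justified: $\|\Pi^\perp\rho\|_1=\|\Pi^\perp\rho\Pi^\perp+\Pi^\perp\rho\Pi\|_1\le q+\|\Pi^\perp\rho\Pi\|_1$, and the cross term $\|\Pi^\perp\rho\Pi\|_1=\|\Pi\rho\Pi^\perp\|_1$ is not obviously controlled by $\|\Pi\rho-A\|_1$; the paper's corresponding trick is to absorb the cross block into the rank-$2k$ Hermitian piece $\Gamma = U'^\dagger\rho U'-R$ and apply Cauchy--Schwarz to the whole $\Gamma-\mathrm{diag}(\alpha'_{1:k})$, which is where the factor $\sqrt{2k}$ legitimately enters.
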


\begin{proof}
    The verification procedure, described with pseudo-code in \Cref{alg:agnostic-trace-tomography}, hinges upon the following inequality from \cite{odonnell2016efficient}, the derivation of which we repeat in the soundness argument: 
   \begin{equation}\label{eq:inequality_from_OW}
       ||U' \mathrm{diag}(
        \alpha_{1:k}'
        ) U'^{\dagger}- \rho||_1\leq \sqrt{2k}||\mathrm{diag}(\alpha')-U'^{\dagger}\rho U' ||_2+||R||_1
   \end{equation}
   where $R=(U^{\dagger}\rho U)_{d-k}$ is the lower-right $(d-k) \times (d-k)$ submatrix of $U^{\dagger}\rho U$. 
   Thus, for some given rank-k hypothesis $\rho_{1:k}'=U'\mathrm{diag}(\alpha'_{1:k})U'^{\dagger}$, if an estimate for the right hand side of this inequality is upper bounded by an estimate for the optimal loss, $\sum_{i=k+1}^d\alpha_i$, plus some tolerance $\epsilon$, this ensures that $\rho'$ satisfies the agnostic learning criterion 
   \begin{equation}
       ||\rho'-\rho||_1\leq \sum_{i=k+1}^d\alpha_i+\epsilon.
   \end{equation}
   The condition $V$ checks in Step 9 is doing precisely that. Steps 1-8 ensure that $V$ obtains, with high probability, all quantities needed to compute the relevant estimates. In particular, in order to estimate the first term on the RHS of \cref{eq:inequality_from_OW}, one needs estimates of $\tr{\rho^2}$, $ \tr{\mathrm{diag}(
   \alpha')^2}$, as well as $\tr{\mathrm{diag}(
   \alpha')U'^{\dagger}\rho U'}$, which correspond to the quantities $\widehat{\mathrm{pur}}, \mathrm{pur}'$, and $\hat{o}$ respectively. Similarly, $1-\hat{p}$ is an estimate for the second term in the inequality.
   Finally, Step 3 ensures that $V$ has an accurate estimate of the optimal achievable loss. In what follows we will show that $V$ can obtain these quantities by interacting with $P$ in a way that satisfies completeness and soundness.

\begin{algorithm}[htp!]
    \caption{IP for agnostic rank-$k$ PSD tomography in trace distance}\label{alg:agnostic-trace-tomography}
        \begin{algorithmic}[1]
            \Require Verifier copy oracle $\mathsf{O}_V(\rho)$ and prover copy oracle $\mathsf{O}_P(\rho)$, where $\rho$ has eigenvalues $\alpha_1\geq\alpha_2\geq\ldots\geq \alpha_d\geq 0$; rank $1\leq k\leq d$; confidence parameter $\delta\in (0,1)$; accuracy parameter $\varepsilon\in (0,1)$
            \vspace{.2em}
            
            \Ensure subnormalized rank-\(k\) state \(\sigma\) s.t. $\norm{\rho - \sigma}_1\leq \sum_{i>k}\alpha_i +\varepsilon$ or ``abort''.
            \vspace{.25em}
            \hrule
            \vspace{.25em}

            \State $\tilde{\varepsilon}_1\gets \nicefrac{\varepsilon}{10}$, $\tilde{\varepsilon}_2\gets\nicefrac{\varepsilon^2}{96 k}$, $f(\tilde{\varepsilon}_1,\tilde{\varepsilon}_2)\gets \sqrt{6k\tilde{\varepsilon}_2} + 2\tilde{\varepsilon}_1 + \tilde{\varepsilon}_2$, $\tilde{\delta}\gets \nicefrac{\delta}{5}$.
            
            \State \(V\) uses \Cref{lem:interactive-quantum-proofs-vbqc} to interact with the prover to get an estimate \(\widehat{\mathrm{pur}}\) of the purity of \(\rho\) to accuracy \(\tilde{\varepsilon}_1\) with success probability $\geq 1-\tilde{\delta}$ from \(\mathcal{O}(\tilde{\varepsilon}_1^{-2}\log \tilde{\delta}^{-1})\) queries to \(\mathsf{O}_V\). If this subroutine aborts, $V$ aborts overall.
            \vspace{.2em}
            
            \State \(V\) uses \Cref{lem:interactive-quantum-proofs-vbqc} to interact with the prover to  perform the protocol of \cite[Corollary 1.8]{odonnell2016efficient} and thereby get an estimate \(\widehat{\alpha}_{1:k}\) of the \(k\)-the largest eigenvalues of \(\rho\) to accuracy \(\tilde{\varepsilon}_1\) in (truncated) TV distance with success probability $\geq 1-\tilde{\delta}$ from \(\mathcal{O}(k^2\tilde{\varepsilon}_1^{-2}\log \tilde{\delta}^{-1})\) queries to \(\mathsf{O}_V\). If this subroutine aborts, $V$ aborts overall.
            \vspace{.2em}

            \State $V$ asks the prover to perform full state tomography to accuracy $\tilde{\varepsilon}_2$ and with success probability $\geq 1-\tilde{\delta}$ based on Schur sampling (compare \cite{odonnell2016efficient}) from \(\mathcal{O}\left( d^{2}\tilde{\varepsilon}_2^{-2}\log \tilde{\delta}^{-1}\right)\) queries to $\mathsf{O}_P(\rho)$, and to  send back the obtained unitary $U'$ and and spectrum \(\alpha'\) with \(\alpha_1' \geq \alpha_2' \geq \dots\). 
            \vspace{.2em}
            \State If $U'$ is not unitary or if $\alpha'$ is not a valid truncated spectrum of a density matrix in non-increasing order, $V$ aborts the interaction. 
            \vspace{.2em}

            \State \(V\) calculates \(\operatorname{pur}'_k=\operatorname{pur}(\rho') = \sum_{i=1}^d \alpha_i'^2\) exactly. 
            \vspace{.2em}

            \State \(V\) uses \(\mathcal{O}\left(\tilde{\varepsilon}_2^{-2}\log \tilde{\delta}^{-1}\right)\) many queries to \(\mathsf{O}_V(\rho)\) to obtain an estimate \(\widehat{o}\) of \(o=\operatorname{tr}\left[\mathrm{diag}(\alpha')U'\rho U'^{\dagger}\right]\) to accuracy \(\tilde{\varepsilon}_2\) with success probability $\geq 1-\tilde{\delta}$.
            \vspace{.2em}

            \State \(V\) uses further \(\mathcal{O}\left(\tilde{\varepsilon}_2^{-2}\log \tilde{\delta}^{-1}\right)\) many queries to \(\mathsf{O}_V(\rho)\) to get an estimate \(\widehat{p}\) of the expectation value \(p=\operatorname{tr}\left[\Pi U'\rho U'^{\dagger} \right] \) with \(\Pi=\sum_{i\leq k}\proj{i}\) to accuracy \(\tilde{\varepsilon}_2\) with success probability $\geq 1-\tilde{\delta}$.
            \vspace{.2em}
            \If{\(\sqrt{2k\left(\operatorname{pur}'+\widehat{\operatorname{pur}}-2\widehat{o}\right)} + 1 - \widehat{p}\leq 1 - \sum_{i=1}^k \widehat{\alpha}_i + f(\tilde{\varepsilon}_1,\tilde{\varepsilon}_2)\)}
                \vspace{.2em}
                
                \State $V$ outputs $\rho'_{1:k}=U'^{\dagger} \operatorname{diag}(\alpha'_{1:k})U'$.
                \vspace{.2em}

            \Else
                \State $V$ outputs ``abort''
            \EndIf
        \end{algorithmic}
\end{algorithm}
   
\textbf{Completeness:}
    When $V$ interacts with the honest prover $P$, then, by \Cref{lem:interactive-quantum-proofs-vbqc}, Steps 2 and 3 each succeed with probability $\geq 1-\tilde{\delta}$. 
    $V$ succeeds at calculating $\mathrm{pur}'$ with certainty, and Steps 7 and 8 each succeed with probability $1-\tilde{\delta}$.
    Thus, by a union bound, we see that with an overall probability of $\geq 1- 4\tilde{\delta}$, all estimates obtained by $V$ indeed have the desired accuracies. It remains to show that in this case, the check in Step 9 is passed (with high probability), and that $\rho_{1:k}'$ produced in Step 10 is indeed a $\varepsilon$-close to optimal rank-$k$ approximation. To prove the former, note that, if we denote the true spectrum by $\alpha$ ordered as $\alpha_1\geq \alpha_2\geq \ldots$, then
    \begin{align*}
        &\left\lvert \left(\sqrt{2k\left(\operatorname{pur}'+\widehat{\operatorname{pur}}-2\widehat{o}\right)} + 1 - \widehat{p} - \sum_{i=1}^k \widehat{\alpha}_i\right) - \left(\sqrt{2k}\norm{\rho' - \rho}_2 + 1 - p - \sum_{i=1}^k \alpha_i\right)\right\rvert \\
        &\leq  
        \sqrt{2k} \abs*{
            \sqrt{
                \operatorname{pur}'
                +\widehat{\operatorname{pur}}
                -2\widehat{o}
            } 
            - \sqrt{
                \operatorname{pur}'
                +\operatorname{pur}-2o
            }
        }  
        + \abs*{\widehat{p} - p} 
        + \abs*{\sum_{i=1}^k \widehat{\alpha}_i - \sum_{i=1}^k \alpha_i} \\
        &\leq \sqrt{2k}\sqrt{\left\lvert \left(\operatorname{pur}'+\widehat{\operatorname{pur}}-2\widehat{o}\right) - \left(\operatorname{pur}'+\operatorname{pur}-2o \right)\right\rvert} + \tilde{\varepsilon}_1 + \tilde{\varepsilon}_1\\
        &\leq \sqrt{6k\tilde{\varepsilon}_2} +2 \tilde{\varepsilon}_1\, ,
    \end{align*}
    where the first step is by triangle inequality, the second step used the inequality $|\sqrt{x}-\sqrt{y}|^2\leq |x-y|$ as well as that Steps 2 and 3 succeeded, the fourth step used that Steps 7 and 8 succeeded.
    Next, note that for the honest prover $P$ implementing the Schur sampling protocol from \cite{odonnell2016efficient} on $m$ many copies of $\rho$, according to the proof of \cite[Theorem 1.5]{odonnell2016efficient} we have 
    \begin{align*}
        \Ex\left[\sqrt{2k}\norm{\rho' - \rho}_2 + 1 - p\right] - \left(1 - \sum_{i=1}^k \alpha_i\right)
        &\leq 6\sqrt{\frac{kd}{m}} \, ,
    \end{align*}
    where the expectation is over the randomness in the Schur sampling.
    Consequently, the honest prover can ensure that both $\sqrt{2k}\norm{\rho' - \rho}_2 + 1 - p - \left( 1-\sum_{i=1}^k \alpha_i\right) \leq \tilde{\varepsilon}_2$ and, because of \cite[Theorem 1.2]{odonnell2016efficient}, also $\norm{\rho'-\rho}_1\leq\tilde{\varepsilon}_2$ hold simultaneously with success probability $\geq 1-\tilde{\delta}$ by using $\mathcal{O}(d^2\tilde{\varepsilon}_2^{-2}\log\tilde{\delta}^{-1})$ many queries to $\mathsf{O}_P(\rho)$.
    Then, by the above, we conclude that $\sqrt{2k\left(\operatorname{pur}'+\widehat{\operatorname{pur}}-2\widehat{o}\right)} + 1 - \widehat{p} - \left(1 - \sum_{i=1}^k \widehat{\alpha}_i\right)\leq \sqrt{6k\tilde{\varepsilon}_2} + 2\tilde{\varepsilon}_1 + \tilde{\varepsilon}_2 = f(\tilde{\varepsilon}_1,\tilde{\varepsilon}_2)$, and hence the check in Step 9 will be passed.

    To prove that the estimate produced by $V$ is sufficiently close to optimal, note that 
    \begin{align}
        \norm{\rho_{1:k}' - \rho}_1
        &\leq \sum_{i=k+1}^{d} \alpha_i + 2\norm{\rho' - \rho}_1\\
        &\leq  \sum_{i=k+1}^{d} \alpha_i + 2\tilde{\varepsilon}_2\\
        &\leq \sum_{i=k+1}^{d} \alpha_i + \varepsilon\, ,
    \end{align}
    where we used \Cref{lem:mirsky} in the first step; the second step is because we are in the high probability event in which the honest prover $P$ successfully performed state tomography to accuracy $\tilde{\varepsilon}_2$; and the last step is by our choice of $\tilde{\varepsilon}_2$. 
    Union bounding, we see that the check in Step 9 is passed and the estimate produced in Step 10 is sufficiently close to optimal with success probability $\geq 1-5\tilde{\delta}\geq 1-\delta$, by our choice of $\tilde{\delta}$.

    \textbf{Soundness:}
    To prove soundess, consider an arbitrary prover $P'$. 
    By \Cref{lem:interactive-quantum-proofs-vbqc}, in each of Step 2 and 3, the probability that $V$ accepts the interaction and ends up with a not sufficiently accurate estimate is at most $\tilde{\delta}$. Additionally, in each of Steps 7 and 8, the probability that $V$ ends up with a not sufficiently accurate estimate is at most $\tilde{\delta}$.
    So, with overall probability $\geq 1-4\tilde{\delta}\geq 1-\delta$, $V$ succeeds in Steps 2,3,7, and 8. We condition on this high probability event for the remainder of the soundness proof.
    Now, suppose the check in Step 9 is passed. Then, by the computation from the completeness proof, we know that
    \begin{align*}
        \sqrt{2k}\norm{\rho' - \rho}_2 + 1 - p - \left( 1- \sum_{i=1}^k \alpha_i\right)
        \leq f(\tilde{\varepsilon}_1,\tilde{\varepsilon}_2) + \sqrt{6k\tilde{\varepsilon}_2} + 2\tilde{\varepsilon}_1
        = 2\sqrt{6k\tilde{\varepsilon}_2} + 4\tilde{\varepsilon}_1 + \tilde{\varepsilon}_2\, .
    \end{align*}
    We now repeat the reasoning in \cite[Proof of Theorem 1.5]{odonnell2016efficient}.
    That is, we define $R = (\mathds{1}_d-\Pi)U'^\dagger \rho U'(\mathds{1}_d-\Pi)$ and $\Gamma = U'^\dagger \rho U' - R$. Then, by unitary invariance of the Schatten $1$-norm and by triangle inequality, 
    \begin{align*}
        \norm{\rho_{1:k}' - \rho}_1
        \leq \norm{\mathrm{diag}(\alpha_1',\ldots,\alpha_k',0,\ldots,0) - \Gamma}_1 + \norm{R}_1\, .
    \end{align*}
    By construction, both $\mathrm{diag}(\alpha_1',\ldots,\alpha_k',0,\ldots,0)$ and $\Gamma$ have rank at most $k$. Hence, 
    \begin{align*}
        \norm{\mathrm{diag}(\alpha_1',\ldots,\alpha_k',0,\ldots,0) - \Gamma}_1
        &\leq \sqrt{2k}\norm{\mathrm{diag}(\alpha_1',\ldots,\alpha_k',0,\ldots,0) - \Gamma}_2\\
        &\leq \sqrt{2k}\norm{\mathrm{diag}(\alpha') - U'^\dagger \rho U'}_2\\
        &= \sqrt{2k}\norm{\rho' - \rho}_2 \, ,
    \end{align*}
    where the first step is by Cauchy-Schwarz, the second step is because adding a matrix with disjoint sets of non-zero entries can only increase the Schatten $2$-norm, and the third step is by unitary invariance of the Schatten $2$-norm.
    Next, note that 
    \begin{align*}
        \norm{R}_1
        &= \tr{R}\\
        &= \tr{(\mathds{1}_d-\Pi)U'^\dagger \rho U'(\mathds{1}_d-\Pi)}\\
        &= \tr{(\mathds{1}_d-\Pi)U'^\dagger \rho U'}\\
        &= 1 - p\, ,
    \end{align*}
    where the first step holds because $R$ is positive semidefinite as a principal submatrix of a positive semidefinite matrix, the second step is by definition of $R$, the third step uses cyclicity of the trace and $(\mathds{1}_d-\Pi)^2=(\mathds{1}_d-\Pi)$, and the final step uses $\tr{U'^\dagger \rho U'}=\tr{\rho}=1$. Combining the above inequalities, we have that
    \begin{align*}
        \norm{\rho_{1:k}' - \rho}_1
        &\leq \sqrt{2k}\norm{\rho' - \rho}_2 + 1 - p\\
        &\leq \left( 1- \sum_{i=1}^k \alpha_i \right)+ 2\sqrt{6k\tilde{\varepsilon}_2} + 4\tilde{\varepsilon}_1 + \tilde{\varepsilon}_2\\
        &= \sum_{i=k+1}^d \alpha_i + \varepsilon
    \end{align*}
    by our choice of $\tilde{\varepsilon}_1$ and $\tilde{\varepsilon}_2$. This proves soundness.

    \textbf{Complexity analysis:}
    The verifier $V$ queries their own oracle $\mathsf{O}_V(\rho)$ at most \(\mathcal{O}(\tilde{\varepsilon}_1^{-2}\log \tilde{\delta}^{-1})\) times in Steps 2, at most \(\mathcal{O}(k^2\tilde{\varepsilon}_1^{-2}\log \tilde{\delta}^{-1})\) times in Step 3, and at most \(\mathcal{O}(\tilde{\varepsilon}_2^{-2}\log \tilde{\delta}^{-1})\) in Steps 7 and 8. Plugging in our chosen $\tilde{\varepsilon}_1$, $\tilde{\varepsilon}_2$, and $\tilde{\delta}$, we see that the verifier's query complexity is bounded by \(\mathcal{O}(k^2\varepsilon^{-4}\log \delta^{-1})\).

    The honest prover $P$ queries its own oracle $\mathsf{O}_P(\rho)$ at most \(\mathcal{O}\left( \frac{d^{2}\log \tilde{\delta}^{-1}}{\tilde{\varepsilon}_2^2}\right)\) times in Step 4. Again plugging in our choice of $\tilde{\varepsilon}_2$ and $\tilde{\delta}$, this yields a prover query complexity of \(\mathcal{O}\left( \frac{d^{2}k^2\log \delta^{-1}}{\varepsilon^4}\right)\).
\end{proof}

Given Lemma~\ref{lem:agnostic-rank-k-psd} we can finally prove Theorem~\ref{theorem:agnostic-rank-k-tomography}, by constructing an interactive proof protocol for agnostic rank-$k$ quantum state tomography. 

\begin{proof}[Theorem~\ref{theorem:agnostic-rank-k-tomography}]
    We will show that interactive agnostic rank-$k$ state tomography can be achieved by essentially the IP protocol from \Cref{alg:agnostic-trace-tomography}. In particular, it suffices to change Step 10 to output the normalized $\underline{\rho_{1:k}'}$ and to run the protocol for the accuracy parameter $\varepsilon/2$ instead of $\varepsilon$.
    To this end, note that we can follow the same completeness and soundness analysis as for the original protocol, it then only remains to use \Cref{lem:mirsky} to argue that the guarantee $\norm*{\rho_{1:k}' - \rho}_1\leq \sum_{i=k+1}^d \alpha_i + \varepsilon/2$ implies $\norm*{\underline{\rho_{1:k}'} - \rho}_1\leq 2\sum_{i=k+1}^d \alpha_i + \varepsilon$.

    To this end, note that
    \begin{align*}
        \norm*{\frac{\rho'_{1:k}}{\norm*{\rho'_{1:k}}_1} - \rho'_{1:k}}_1
        = \sum_{i=1}^k \widehat{\alpha}_i \left(\frac{1}{\sum_{i=1}^k \widehat{\alpha}_i} - 1\right)
        = 1 - \sum_{i=1}^k \widehat{\alpha}_i
        \leq 1 - \sum_{i=1}^k \alpha_i + \tilde{\varepsilon}_1
        = \sum_{i=k+1}^d \alpha_i + \tilde{\varepsilon}_1\, ,
    \end{align*}
    where the second-to-last inequality holds if Step 3 succeeds.
    Therefore, in the high probability event that Steps 3 succeeds, we get
    \begin{align*}
        \norm*{\frac{\rho'_{1:k}}{\norm*{\rho'_{1:k}}_1} - \rho}_1
        &\leq \norm*{\frac{\rho'_{1:k}}{\norm*{\rho'_{1:k}}_1} - \rho'_{1:k}}_1 + \norm*{\rho'_{1:k} - \rho}_1
        \leq \sum_{i=k+1}^d \alpha_i + \tilde{\varepsilon}_1 + \norm*{\rho'_{1:k} - \rho}_1\, .
    \end{align*}
    From this, it is now immediate that $\norm*{\rho_{1:k}' - \rho}_1\leq \sum_{i=k+1}^d \alpha_i + \varepsilon/2$ indeed implies the desired inequality $\norm*{\underline{\rho_{1:k}'} - \rho}_1\leq 2\sum_{i=k+1}^d \alpha_i + \varepsilon$.
\end{proof}

We conclude this subsection by discussing a variant of the IP protocol given above. Namely, in the above interactive procedure, the prover is asked to perform full state tomography, while the object of interest is actually ``only'' a rank-$k$ approximation to the state. This might seem unnatural as the prover is asked to solve a more demanding task than what seems intuitively necessary. 
We leave open the question of whether this is an artifact of our proof or whether it is a fundamental limitation.
Here, however, we already show that, if we are willing to pay the price of obtaining a weaker learning guarantee (in the sense of giving an interactive $\alpha$-agnostic learning protocol with $\alpha>1$), then it indeed suffices for the honest prover to perform agnostic rank-$k$ tomography. 
More precisely, we argue: There exists an interactive proof for $(2(\sqrt{2k}+1)$-agnostic rank-$k$ state state tomography in which $V$ uses the same number of copies as in \Cref{theorem:agnostic-rank-k-tomography}, but now the honest prover $P$ uses only \(\mathcal{O}\left( d k^5\varepsilon^{-2}\log \delta^{-1}\right)\) copies.
The interaction differs from \cref{alg:agnostic-trace-tomography} only in Step 4, where $V$ asks the prover to perform the protocol behind \cite[Corollary 1.6]{odonnell2016efficient}, and in Step 9. 
As we argued above, the check in Step 9 is based on the inequality
   \begin{equation*}
       ||U' \mathrm{diag}(
        \alpha_{1:k}'
        ) U'^{\dagger}- \rho||_1\leq \sqrt{2k}||\mathrm{diag}(\alpha')-U'^{\dagger}\rho U' ||_2+||R||_1.
   \end{equation*}
from \cite{odonnell2016efficient}.
Its not difficult to see that the following modification also holds:       \begin{align*}
       ||U' \mathrm{diag}(
        \alpha_{1:k}'
        ) U'^{\dagger}- \rho||_1
        &\leq \sqrt{2k}||\mathrm{diag}(\alpha_{1:k}')-U'^{\dagger}\rho U' ||_2+||R||_1 \\ &\leq \sqrt{2k} \left (||\mathrm{diag}(\alpha')-U'^{\dagger}\rho U' ||_2 + ||\mathrm{diag}(\alpha')-\mathrm{diag}(\alpha_{1:k}') ||_2 \right ) +||R||_1\\
        &= \sqrt{2k}\left(\sum_{i=k+1}^d \alpha'_i\right) +  \sqrt{2k} ||\mathrm{diag}(\alpha')-U'^{\dagger}\rho U' ||_2 +||R||_1 \, .
   \end{align*}
Recall from our previous proof, which borrowed from the proof of \cite[Theorem 1.5]{odonnell2016efficient}, that the honest prover with high probability achieves $\sqrt{2k} ||\mathrm{diag}(\alpha')-U'^{\dagger}\rho U' ||_2 +||R||_1\leq 6\sqrt{\frac{kd}{m}}+\sum_{i=k+1}^d \alpha'_i$. 
So, overall we see that the honest prover with high probability achieves
\begin{align}
       ||U' \mathrm{diag}(
        \alpha_{1:k}'
        ) U'^{\dagger}- \rho||_1
        \leq 6\sqrt{\frac{kd}{m}}+(1+\sqrt{2k})\left(\sum_{i=k+1}^d \alpha'_i\right)\, .
   \end{align}
Thus, if we replace the check in Step 9 to instead become a check of the inequality  
\begin{equation}\label{eq:condition_2k_agnostic}    
    \sqrt{2k}||\mathrm{diag}(\alpha_{1:k}')-U'^{\dagger}\rho U' ||_2+||R||_1 \leq (\sqrt{2k}+1)\left(\sum_{i=k+1}^d \alpha'_i \right) + \epsilon.
\end{equation}
via suitable estimates, the honest prover $P$ will pass the check (with high probability), and by the same reasoning that we used to establish soundness previously, here the verifier $V$ can now either reject or ensure the $(2\sqrt{2k}+1)$-agnostic learning guarantee against any prover.
The two main differences: First, $V$ no longer needs $\mathrm{pur'}$ to estimate the Frobenius norm term in \cref{eq:condition_2k_agnostic}; rather, the purity of the truncation, $\tr{\mathrm{diag}(\alpha_{1:k}')^2}$, suffices. And $V$ can compute this quantity when receiving only the largest $k$ eigenvalues in Step 4. 
Note that the honest prover $P$ in Step 4 still obtains a full unitary (not just $k$ eigenvectors) $U'$ sampled from the Keyl distribution and sends it to $V$. Therefore, $V$ can still perform the estimation in Step 8. 

\subsubsection{Interactive proofs for agnostic learning stabilizer states}\label{sss:agnostic-stabilizer}

Finally, we turn to the task of agnostic stabilizer state learning, which is defined in Problem~\ref{prob:gnostic_stabilizer_learning}. To recall the problem, let $\mathcal{S}_P$ denote the set of all pure states and $\mathrm{Stab}_n\subseteq\mathcal{S}_P$ denote the set of all (pure) $n$-qubit stabilizer states. Then, given access to copies of an unknown pure state  $\ket{\psi}\in\mathcal{S}_P$, the task here is to output a stabilizer state $\ket{S}\in\mathrm{Stab}_n$ which is sufficiently close to the optimal stabilizer state approximation, as measured by fidelity~\cite{grewal2024agnostic, chen2024stabilizerbootstrappingrecipeefficient}. More specifically, defining the loss function $\ell:\mathrm{Stab}_n\times \mathcal{S}_P\rightarrow\mathbb{R}_{\geq 0}$ via
\begin{align}
\ell(\ket{S},\ket{\psi}) = 1-F(\ket{S},\ket{\psi})
\end{align}
we wish to output a stabilizer state $\ket{S}$ satisfying
\begin{equation}\label{eq:recap_loss}
\ell(\ket{S},\ket{\psi}) \leq \alpha \min_{\ket{S}\in\mathrm{Stab}_n}\left[\ell(\ket{S},\ket{\psi})\right] + \epsilon.
\end{equation}
Below, we give an interactive proof system for this task, for the case of $\alpha = 8$, which allows the verifier to surpass even the query complexity requirements of a memory-unconstrained prover acting on its own.

As discussed in \Cref{ss:trivial}, for any agnostic learning problem, the key obstacle to ``verification via direct validation of candidate solutions'' is that one requires not only an estimate of the loss of the solution, but also an estimate of the \textit{optimal} loss over the whole model class. For the problem of agnostic stabilizer state learning, the loss $\ell(\ket{S},\ket{\psi})$ of any given hypothesis stabilizer state $\ket{S}$ can be estimated efficiently by simply measuring the projector $\ket{S}\bra{S}$ on the unknown state $\ket{\psi}$. As such, we can focus on the optimal loss, which we denote by $\ell^*(\ket{\psi})$. In this case we have
\begin{align}
\ell^*(\ket{\psi}) 
= \min_{S}\left[\ell(\ket{S},\ket{\psi})\right] 
=\min_S \left[1-F(\ket{S},\ket{\psi})\right]
= 1 - F_{\mathrm{Stab}}(\ket{\psi})\label{eq:opt_via_stab}\, ,
\end{align}
where $F_{\mathrm{Stab}}(\ket{\psi})$ is the \textit{stabilizer fidelity}~\cite{Bravyi2019simulationofquantum}. The idea behind the proof system we present below is that the stabilizer fidelity, and hence the optimal loss, can be bounded from above and below in terms of quantities that the verifier can efficiently estimate (using their own quantum state copies, and the help of the prover). These upper and lower bounds serve as a proxy for the optimal loss (also referred to as a certificate of loss in Ref.\ \cite{goldwasser2021interactive}). Ultimately, we show how they allow the verifier $V$ to check the validity of alleged hypothesis solution sent by the prover, by comparing the loss of the hypothesis to this proxy for the optimal loss.

\begin{theorem}[8-Agnostic stabilizer learning (formal version of Theorem~\ref{theorem:agnostic-stabilizer-state-learning-intro-version})]\label{theorem:agnostic-stabilizer-state-learning-formal-version}
There exists an interactive proof system $(V,P)$ for $8$-agnostic stabilizer state learning, between a single-copy verifier $V$ and a coherent multi-copy prover $P$, who communicate via a quantum communication channel. In particular,  $V$ is computationally efficient and:
\begin{enumerate}
\item $V$ uses $\mathcal{O}\left(\frac{\log(\delta^{-1})}{\epsilon^2}\right)$ oracle queries.
\item The honest prover $P$ uses $\mathcal{O}\left(n\log(\delta^{-1}) + \frac{\log(\delta^{-1})}{\epsilon^2}\right)$ oracle queries.
\item The total communication is efficient.
\end{enumerate}
\end{theorem}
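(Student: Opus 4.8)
The plan is to instantiate the ``verification via certificates of solution validity'' strategy of \Cref{observation:trivial-ip-from-valid-solution-decision}: the honest prover $P$ runs, on its own copies, a known coherent proper agnostic stabilizer-state learner and sends $V$ a classical description of the resulting stabilizer state $\ket{S}\in\mathrm{Stab}_n$, and $V$ then verifies that $\ket{S}$ is a valid $8$-agnostic solution. As noted before the theorem, deciding validity reduces to (i) estimating the loss $\ell(\ket{S},\ket{\psi})=1-\abs{\braket{S}{\psi}}^2$ of the received hypothesis and (ii) obtaining a lower bound on the optimal loss $\ell^*(\ket{\psi})=1-F_{\mathrm{Stab}}(\ket{\psi})$ that is tight up to a factor of $8$. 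Part (i) is easy for a single-copy $V$ using only $\mathsf{O}_V$: since $\ket{S}$ is a stabilizer state, $V$ efficiently computes a Clifford $C$ with $C\ket{S}=\ket{0^n}$, applies $C$ to a fresh copy of $\ket{\psi}$, measures in the computational basis, and records whether the outcome is $0^n$; by a Chernoff bound, $\mathcal{O}(\log(\delta^{-1})/\epsilon^2)$ such single-copy experiments give an estimate $\widehat{\ell}$ with $\abs{\widehat{\ell}-\ell(\ket{S},\ket{\psi})}\leq\epsilon/8$ except with probability $\leq\delta/3$. Part (ii) is where the coherent multi-copy prover and quantum communication are genuinely needed, since a single-copy verifier cannot estimate $F_{\mathrm{Stab}}(\ket{\psi})$ on its own.

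For part (ii), I would use the fact that Bell (difference) sampling \cite{gross2021schur, montanaro2017learning, grewal2024improved} furnishes a statistic $A(\ket{\psi})$ — an expectation over a constant number of Bell-difference samples of a bounded random variable, hence estimable to additive accuracy $\epsilon'$ with success probability $\geq 1-\delta/3$ from $\mathcal{O}(\log(\delta^{-1})/\epsilon'^2)$ copies via coherent $\mathcal{O}(1)$-copy measurements — that sandwiches the stabilizer fidelity as $F_{\mathrm{Stab}}(\ket{\psi})^{c}\leq A(\ket{\psi})\leq F_{\mathrm{Stab}}(\ket{\psi})$ for a constant $c$. Crucially, each sample is produced by a Clifford circuit on $\mathcal{O}(1)$ copies followed by a computational-basis measurement, so by \Cref{lem:interactive-quantum-proofs-vbqc} (which requires quantum communication) a single-copy $V$ can delegate the entire estimation to $P$ while feeding in its own copies of $\ket{\psi}$ one at a time; taking $\epsilon'=\Theta(\epsilon)$, this costs $V$ only $\mathcal{O}(\log(\delta^{-1})/\epsilon^2)$ queries to $\mathsf{O}_V$ and costs $P$ no oracle queries. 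Given the estimate $\widehat{A}$, $V$ sets $\widehat{L}:=1-(\widehat{A}+\epsilon')^{1/c}$; since $\widehat{A}+\epsilon'\geq A\geq F_{\mathrm{Stab}}^c$ one gets $\widehat{L}\leq\ell^*(\ket{\psi})$, and using $A\leq F_{\mathrm{Stab}}$ together with the concavity bound $x^{1/c}\leq 1-(1-x)/c$ on $[0,1]$ one gets $\widehat{L}\geq\ell^*(\ket{\psi})/c-\mathcal{O}(\epsilon)$. These two bounds only need to be quantitatively useful in the nontrivial regime $F_{\mathrm{Stab}}(\ket{\psi})>7/8$; there $A(\ket{\psi})$ is bounded below by a constant, so $x\mapsto x^{1/c}$ is Lipschitz and the additive error propagates benignly, while if $\ell^*(\ket{\psi})\geq 1/8$ then $8\ell^*(\ket{\psi})\geq 1\geq\ell(\ket{S},\ket{\psi})$ and any stabilizer state is already valid.

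The protocol is then: (1) $V$ runs the VBQC-delegated Bell-difference-sampling estimation with $P$ to obtain $\widehat{A}$ and hence $\widehat{L}$, aborting if the VBQC verification rejects; (2) $V$ asks $P$ to run a coherent (possibly computationally inefficient) proper agnostic stabilizer learner \cite{grewal2024agnostic, chen2024stabilizerbootstrappingrecipeefficient} on its own oracle and to send back a classical description of $\ket{S}$; (3) $V$ estimates $\widehat{\ell}\approx\ell(\ket{S},\ket{\psi})$ as in part (i); (4) $V$ outputs $\ket{S}$ if $\widehat{\ell}\leq 8\,\widehat{L}+\epsilon$ and aborts otherwise. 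For \emph{completeness}, conditioning on the VBQC sub-protocol, the fidelity estimation, and the prover's learner all succeeding (probability $\geq 1-\delta$ by a union bound), the honest prover returns $\ket{S}$ with $\ell(\ket{S},\ket{\psi})\leq\ell^*(\ket{\psi})+\mathcal{O}(\epsilon)$, and since $8\widehat{L}\geq 8\ell^*(\ket{\psi})/c-\mathcal{O}(\epsilon)\geq\ell^*(\ket{\psi})-\mathcal{O}(\epsilon)$ (using $c\leq 8$) the check passes and $V$ outputs a valid solution. For \emph{soundness}, the VBQC sub-protocol is sound, so $\widehat{A}$ (hence $\widehat{L}\leq\ell^*(\ket{\psi})$) is accurate except with probability $\leq\delta/3$, and part-(i) estimation uses only $V$'s side; thus except with probability $\leq\delta$, if $V$ accepts then $\ell(\ket{S},\ket{\psi})\leq\widehat{\ell}+\epsilon/8\leq 8\widehat{L}+\mathcal{O}(\epsilon)\leq 8\ell^*(\ket{\psi})+\epsilon$, i.e., $\ket{S}$ is $8$-agnostic. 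Finally, $V$ uses $\mathcal{O}(\log(\delta^{-1})/\epsilon^2)$ oracle queries in total and is computationally efficient (Clifford arithmetic plus the efficient VBQC client); $P$ uses $\mathcal{O}(n\log(\delta^{-1})+\log(\delta^{-1})/\epsilon^2)$ oracle queries (the cost of the agnostic stabilizer learner; step (1) requires no prover oracle queries); and the communication — $\mathcal{O}(\log(\delta^{-1})/\epsilon^2)$ VBQC rounds for $\mathcal{O}(n)$-size Clifford circuits plus one $\mathcal{O}(n^2)$-bit classical message — is efficient.

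The main obstacle is part (ii): identifying the appropriate estimable stabilizer-fidelity certificate $A$ and verifying that its sandwich exponent $c$ is at most $8$, which is exactly what pins down the agnostic factor $\alpha=8$ (a different known certificate would yield a different constant, and an honest prover achieving only $\alpha_P$-agnostic learning would instead require $c\leq 8/\alpha_P$). A secondary, more routine task is to confirm that the Bell-difference-sampling measurement genuinely fits the interface of \Cref{lem:interactive-quantum-proofs-vbqc} — it does, being Clifford and using only a constant number of copies, so the simplified Clifford-circuit VBQC scheme suffices — and to carry the $\mathcal{O}(\epsilon)$ slack through the chain of estimates so that the single threshold in step (4) simultaneously yields completeness and soundness.
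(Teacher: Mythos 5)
Your protocol has exactly the same skeleton as the paper's: $P$ runs the agnostic learner of \cite{chen2024stabilizerbootstrappingrecipeefficient} on its own oracle and sends $\ket{S}$; $V$ estimates $\ell(\ket{S},\ket{\psi})$ by projecting onto $\ket{S}$ with single-copy measurements; $V$ delegates a Bell-sampling estimate of a stabilizer-fidelity certificate via \Cref{lem:interactive-quantum-proofs-vbqc}; and $V$ accepts iff the loss is below a threshold built from that certificate. You also correctly flag the crux: one needs a Bell-sampling--estimable quantity that sandwiches $F_{\mathrm{Stab}}$ tightly enough to yield the constant $8$. Where you go wrong is the \emph{form} of the sandwich. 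You posit $F_{\mathrm{Stab}}^{c}\leq A\leq F_{\mathrm{Stab}}$ with $c\leq 8$, but the quantity the paper actually uses, $A_3(\ket{\psi})=\frac{1}{2^n}\sum_{P}\abs{\bra{\psi}P\ket{\psi}}^{6}$, does \emph{not} satisfy $A_3\leq F_{\mathrm{Stab}}$; the available bounds (Eq.~(11) of \cite{haug-efficient-quantum}) are
\begin{equation*}
  A_3(\ket{\psi})^{1/6}\;\geq\; F_{\mathrm{Stab}}(\ket{\psi})\;\geq\;\tfrac{4}{3}A_3(\ket{\psi})-\tfrac{1}{3},
\end{equation*}
i.e.\ one power bound with exponent $6$ and one \emph{affine} bound. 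Setting $\mathrm{LB}=1-A_3^{1/6}$ and $\mathrm{UB}=\tfrac{4}{3}(1-A_3)$, the paper shows $\mathrm{UB}/\mathrm{LB}$ is increasing in $A_3$ with limit $\tfrac{4/3}{1/6}=8$ as $A_3\to 1$, hence $\mathrm{UB}\leq 8\,\mathrm{LB}\leq 8\,\ell^*$. Your concavity step $x^{1/c}\leq 1-(1-x)/c$ is fine as far as it goes, but the complementary inequality $8\widehat{L}\gtrsim\ell^*$ that you need for completeness does not follow from $A\leq F_{\mathrm{Stab}}$ (which is false for $A_3$) — it follows from the affine bound, i.e.\ from $\ell^*\leq\mathrm{UB}$ together with $\mathrm{UB}\leq 8\,\mathrm{LB}$. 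Once you swap in these two bounds and carry $\mathrm{LB}$ and $\mathrm{UB}$ through the accounting, your threshold check goes through (the paper in fact checks $\widehat{\ell}\leq\widehat{u}+\tfrac{3}{5}\epsilon$ for an estimate $\widehat{u}$ of $\mathrm{UB}$ rather than $\widehat{\ell}\leq 8\widehat{L}+\epsilon$; the two are interchangeable up to constants). So: correct protocol structure, correct delegation pattern, correct identification of the bottleneck, but the concrete certificate you sketched does not satisfy the inequalities you claim for it, and the one that works is not a power sandwich of the form you wrote.
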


As shown in Table~\ref{table:advantages-via-IPs}, we highlight that even realizable stabilizer state learning is known to require $\Theta(n^2)$ copies with single-copy measurements and $\Theta(n)$ copies with multi-copy measurements \cite{montanaro2017learning, arunachalam2023optimal}. As such, the verifier in \Cref{theorem:agnostic-stabilizer-state-learning-formal-version} solves a more general agnostic learning problem with an $n$-independent (and thus largely improved) quantum sample complexity. 

\begin{algorithm}[htp!]
        \caption{IP for 8-agnostic stabilizer learning}\label{alg:agnostic-stabilizer}
            \begin{algorithmic}[1]
                \Require Verifier copy oracle $\mathsf{O}_V(|\psi\rangle)$; 
                confidence parameter $\delta\in (0,1)$; accuracy parameter $\varepsilon\in (0,1)$; 
                \Ensure $\ket{S}$ s.t. $F(\ket{S},\ket{\psi})\geq F_{\mathrm{stab}}(\ket{\psi}) - \varepsilon$ or ``abort''
                \State Set $\epsilon_1 = \epsilon_2 = \frac{\epsilon}{5}$, and $\epsilon_3 = \frac{3}{20}\epsilon$. 
                \State Set $\delta_1 = \delta_2=\delta_3 = \nicefrac{\delta}{3}$.
                \State $V$ asks $P$ to use the algorithm from Corollary 6.3 of Ref.~\cite{chen2024stabilizerbootstrappingrecipeefficient} to solve the 1-agnostic stabilizer learning problem to precision $\epsilon_1$, with success probability $\geq \delta_1$, with a promise that $F_{\mathrm{stab}}(\ket{\psi}) \geq \nicefrac{7}{8}$. $P$ should send the result $\ket{S}$ to $V$.
                \State $V$ makes $\mathcal{O}(\log(\delta_2^{-1})/\epsilon_2^2)$ queries to its oracle, and measures $\{\ket{S}\bra{S}, I -\ket{S}\bra{S}\} $ on each received state $\ket{\psi}$. From the measurement outcomes $V$ calculates an estimate $\hat{\ell}$ of the loss $\ell(\ket{S},\ket{\psi})$, to additive precision $\epsilon_2$, with probability of success $\geq1-\delta_2$.
                \State $V$ makes another $\mathcal{O}(\log(\delta_3^{-1})/\epsilon_3^2)$ queries to its oracle, and  delegates to the prover, via the generic VQBC protocol discussed previously in \Cref{lem:interactive-quantum-proofs-vbqc}, the estimation of $A_3(\ket{\psi})$ via Bell sampling. With probability $\geq1-\delta_3$, $V$ receives and accepts an estimate $\hat{a}$ for $A_3(\ket{\psi})$ with additive error at most $\epsilon_3$. 
                \State $V$ then calculates $\hat{u} = \min\left(1,\frac{4}{3}\left(1-\hat{a}\right)\right)$, which is an estimate for $\mathrm{UB}$.
                \If{$\hat{\ell}\leq \hat{u} + \nicefrac{3}{5}\varepsilon$}
                    \State $V$ outputs $\ket{S}$
                \Else
                    \State $V$ outputs ``abort".
                \EndIf
            \end{algorithmic}
    \end{algorithm}

\begin{proof} We start by presenting appropriate upper and lower bounds on the optimal loss $\ell^*(\ket{\psi})$, which we will make use of in the interactive proof protocol. To this end, consider the quantity 
\begin{equation}
    A_3(\ket{\psi}) = \frac{1}{2^{n}}\sum_{P\in\{ I,X,Y,Z\}^n}\left(\left|\bra{\psi}P\ket{\psi}\right|^{2}\right)^{3}\, .
\end{equation}
As per Eq. (11) in Ref.~\cite{haug-efficient-quantum}, we note that the stabilizer fidelity can be bounded in terms of this quantity via 
\begin{equation}
A_{3}\left(\ket{\psi}\right)^{1/6}\geq F_{\mathrm{Stab}}\left(\ket{\psi}\right)\geq\frac{4}{3}A_{3}\left(\ket{\psi}\right)-\frac{1}{3}.
\label{eq:upper-lower-bounds-f-stab}
\end{equation}
Additionally, note that, via Bell sampling, $A_3(\ket{\psi})$ can be estimated to additive precision $\epsilon$, with success probability $\geq 1-\delta$, using $\mathcal{O}(\log(\delta^{-1})/\epsilon^2)$ many copies of $\ket{\psi}$ \cite{gross2021schur, grewal-low-stabilizer, haug-efficient-quantum}. Using Eqs.~\eqref{eq:opt_via_stab} and~\eqref{eq:upper-lower-bounds-f-stab} we also see that
\begin{equation}\label{eq:upper_lower_A}
1-A_{3}\left(\ket{\psi}\right)^{1/6}\leq \ell^*(\ket{\psi}) \leq \frac{4}{3}\left(1 - A_3(\ket{\psi}) \right).
\end{equation}
For convenience we will from now on use the notation
\begin{align}
\mathrm{UB}(\ket{\psi})&=\frac{4}{3}\left(1 - A_3(\ket{\psi}) \right),\\
\mathrm{LB}(\ket{\psi})&= 1-A_{3}\left(\ket{\psi}\right)^{1/6},
\end{align}
to denote the upper and lower bounds on the optimal loss $\ell^*(\ket{\psi})$ in terms of $A_3(\ket{\psi})$.

With this in hand, as presented in Algorithm~\ref{alg:agnostic-stabilizer}, the interactive proof protocol proceeds as follows: Set $\epsilon_1 = \epsilon_2 = \frac{\epsilon}{5}$, and $\epsilon_3 = \frac{3}{20}\epsilon$. Set $\delta_1 = \delta_2=\delta_3 = \nicefrac{\delta}{3}$. Then:
\begin{enumerate}
    \item $V$ asks $P$ to use its oracle $\mathsf{O}_P(\ket{\psi})$ for the unknown state $\ket{\psi}$, and the multi-copy algorithm from Corollary 6.3 of Ref.~\cite{chen2024stabilizerbootstrappingrecipeefficient}, to solve the 1-agnostic stabilizer learning problem to precision $\epsilon_1$, with success probability $\geq\delta_1$, and a promise that $F_{\mathrm{Stab}}(|\psi\rangle)\geq \nicefrac{7}{8}$. $P$ should then send a classical description of its hypothesis $\ket{S}$ to $V$.
    \item $V$ makes $\mathcal{O}(\log(\delta_2^{-1})/\epsilon_2^2)$ queries to its oracle, and measures $\{\ket{S}\bra{S}, I -\ket{S}\bra{S}\} $ on each received state $\ket{\psi}$. From the measurement outcomes $V$ calculates an estimate $\hat{\ell}$ of the loss $\ell(\ket{S},\ket{\psi})$, to additive precision $\epsilon_2$, with probability of success $\geq1-\delta_2$.
    \item $V$ makes another $\mathcal{O}(\log(\delta_3^{-1})/\epsilon_3^2)$ queries to its oracle, and  delegates to the prover, via the generic VQBC protocol discussed previously in \Cref{lem:interactive-quantum-proofs-vbqc}, the estimation of $A_3(\ket{\psi})$ via Bell sampling. With probability $\geq1-\delta_3$, $V$ receives and accepts an estimate $\hat{a}$ for $A_3(\ket{\psi})$ with additive error at most $\epsilon_3$. 
    \item $V$ then calculates $\hat{u} = \min\left(1,\frac{4}{3}\left(1-\hat{a}\right)\right)$, which is an estimate for $\mathrm{UB}$. Note that $|\hat{a}-A_3|\leq \epsilon_3$ implies that $|\hat{u} - \mathrm{UB}|\leq \frac{4}{3}\epsilon_3$.
    \item Finally, $V$ checks that the estimated loss satisfies $\hat{\ell}\leq \hat{u} + \frac{3}{5}\epsilon$. If yes, $V$ accepts and outputs $\ket{S}$, otherwise $V$ aborts the interaction.
\end{enumerate}
Before proceeding to the analysis of the protocol, we state some useful facts. Additionally, in what follows we drop the explicit dependence of $\mathrm{UB},\mathrm{LB},A_3$ and $\ell^*$ on $\ket{\psi}$ for notational convenience. With this in mind, we note that $A_3\in[0,1]$, and that 
\begin{equation}
\frac{\mathrm{UB}}{\mathrm{LB}} = \frac{\frac{4}{3}\left(1 - A_3 \right)}{1-A_{3}^{1/6}}
\end{equation}
is a monotonically increasing function on the interval $[0,1]$, which attains its maximum value as ${A_3\to 1}$. The limit can be calculated via l'Hôpital's rule, and we conclude that
\begin{equation}\label{eq:8_factor}
\frac{\mathrm{UB}}{\mathrm{LB}} \leq 8.
\end{equation}
Additionally, it follows from $\ell^*\geq \mathrm{LB}$ that
\begin{equation}\label{eq:inter_fact_2}
\left(1 + \frac{ (\mathrm{UB}-\mathrm{LB})}{\ell^*}\right) \leq \left(1 + \frac{ (\mathrm{UB}-\mathrm{LB})}{\mathrm{LB}}\right) \leq \frac{\mathrm{UB}}{\mathrm{LB}},
\end{equation}
and that
\begin{align}\label{eq:inter_fact}
\mathrm{UB} \leq \ell^* + (\mathrm{UB}-\mathrm{LB}).
\end{align}
Given this, we can analyze the completeness and soundess of the interactive proof protocol.

\textbf{Completeness:} We want to show that if the prover $P$ is honest, then with probability $\geq1-\delta$ the verifier $V$ will accept and output a valid hypothesis in Step 5 of the protocol. To this end, let us first note that the loss function is upper boundded by one -- i.e. $\ell(\ket{S},\ket{\psi}) \in [0,1]$. Additionally, a valid hypothesis is any hypothesis that satisfies Eq.~\eqref{eq:recap_loss}, with $\alpha = 8$. Given this, the only non-trivial case is when $\ell^*(\ket{\psi})\leq \nicefrac{1}{8}$. If $\ell^*(\ket{\psi})>\nicefrac{1}{8}$, then to be valid a hypothesis $\ket{S}$ should satisfy $\ell(\ket{S},\ket{\psi}) \leq 8\times\nicefrac{1}{8} + \epsilon \leq 1 + \epsilon$, which will be satisfied by \textit{any} state $\ket{S}$. As such, we assume that $\ell^*(\ket{\psi})>\nicefrac{1}{8}$, or equivalently that $F_{\mathrm{Stab}}(\ket{\psi}) \geq \nicefrac{7}{8}$. This is what allows us to make this promise to the honest prover in Step 1 of the algorithm.

With this in mind, let us denote with $E_1$ the event where:
\begin{enumerate}
\item The honest prover succeeds in Step 1 -- i.e., $V$ receives some $\ket{S}$ satisfying $\ell(\ket{S},\ket{\psi}) \leq \ell^* + \epsilon_1$.
\item In Step 2, $V$ successfully obtains an estimate $\hat{\ell}$ satisfying $|\hat{\ell} - \ell(\ket{S},\ket{\psi}) |\leq \epsilon_2$.
\item The delegation in Step 3 succeeds and $V$ receives and accepts $\hat{a}$ satisfying $|\hat{a}-A_3|\leq \epsilon_3$ (which then implies $|\hat{u} - UB|\leq \frac{4}{3}\epsilon_3$). 
\end{enumerate}
By a union bound, and the choice of $\delta_1,\delta_2$ and $\delta_3$, we have that $\mathrm{Pr}(E_1) \geq 1-\delta$. Now, lets assume that event $E_1$ has occurred. By assumption we know that honest $P$ succeeded and therefore that $\ket{S}$ satisfies $\ell(\ket{S},\ket{\psi})\leq \ell^* + \epsilon_1 \leq 8\ell^* + \epsilon$. As such, it only remains to show that $V$ will accept. To this end, note that under the assumptions of event $E_1$, and the choices of $\epsilon_1,\epsilon_2$ and $\epsilon_3$,
\begin{align}
\hat{\ell} &\leq \ell(\ket{S},\ket{\psi}) + \epsilon_2 \nonumber\\
&\leq \ell^* + \epsilon_1 + \epsilon_2\nonumber\\
&\leq \mathrm{UB} + \epsilon_1 + \epsilon_2 \nonumber\\
&\leq \hat{u} + \epsilon_1 + \epsilon_2 + \frac{4}{3}\epsilon_3\nonumber\\
&\leq \hat{u} + \frac{3}{5}\epsilon.
\end{align}
Therefore, $V$ accepts and we have proven completeness.

\textbf{Soundness:} We want to show that, for any prover $P$, the probability that $A$ accepts the interaction and outputs an invalid hypothesis is less than $\delta$. To this end, lets denote by $E_2$ the event where
\begin{enumerate}
\item In Step 2, $V$ successfully obtains an estimate $\hat{\ell}$ satisfying $|\hat{\ell} - \ell(\ket{S},\ket{\psi}) |\leq \epsilon_2$.
\item The delegation in Step 3 succeeds and $V$ receives and accepts $\hat{a}$ satisfying $|\hat{a}-A_3|\leq \epsilon_3$ ( which then implies $|\hat{u} - \mathrm{UB}|\leq \frac{4}{3}\epsilon_3$).
\end{enumerate}
By a union bound, and the choices of $\delta_2$ and $\delta_3$, we have $\mathrm{Pr}(E_2) \geq 1-\frac{2}{3}\delta \geq 1-\delta$. Now, assume that event $E_2$ occurred, and that $V$ accepted the interaction (i.e., that event $E_2$ occurred and $\hat{\ell}\leq \hat{u} + \frac{3}{5}\epsilon$). In this case, 
we have that
\begin{align}
\ell(\ket{S},\ket{\psi}) &\leq \hat{\ell} + \epsilon_2 \nonumber\\
&\leq \hat{u} + \frac{3}{5}\epsilon + \epsilon_2 \nonumber\\
&\leq \mathrm{UB} + \frac{4}{3}\epsilon_3 + \frac{3}{5}\epsilon+ \epsilon_2 \nonumber\\
&\leq \ell^* + (\mathrm{UB-\mathrm{LB}}) + \epsilon \hspace{10em}\text{[via Eq.\eqref{eq:inter_fact} and $\epsilon_2,\epsilon_3$]}\nonumber\\
&\leq \ell^*\left(1 + \frac{ (\mathrm{UB}-\mathrm{LB})}{\ell^*}\right) + \epsilon\nonumber\\
&\leq \frac{\mathrm{UB}}{\mathrm{LB}}\ell^* + \epsilon \hspace{13.5em}\text{[via Eq.\eqref{eq:inter_fact_2}]}\nonumber\\
&\leq 8\ell^* + \epsilon, \hspace{14.2em}\text{[via Eq.\eqref{eq:8_factor}]}\nonumber
\end{align}
and therefore $\ket{S}$ is a valid (8-agnostic) solution. In other words, if event $E_2$ occurs, then whenever $V$ accepts, $\ket{S}$ is a valid solution. Therefore, the only way that $V$ could accept and output an invalid hypothesis is if event $E_2$ does not occur, which happens with probability less than $\delta$. As such, we have proved soundness.

\textbf{Complexity analysis:} $V$ uses $\mathcal{O}\left(\frac{\log(\delta^{-1})}{\epsilon^2}\right)$ queries to its oracle (in steps 2 and 3). The honest prover $P$ uses $\mathcal{O}\left(n\log(\delta^{-1}) + \frac{\log(\delta^{-1})}{\epsilon^2}\right)$ queries, when running the algorithm in Corollary 6.3 of Ref.~\cite{chen2024stabilizerbootstrappingrecipeefficient} with $\tau = \nicefrac{7}{8}$. During the proof, $V$ sends $\mathcal{O}\left(\frac{\log(\delta^{-1})}{\epsilon^2}\right)$ $n$-qubit quantum states to $P$ in Step 3. 
The communication is efficient as a consequence of \Cref{lem:interactive-quantum-proofs-vbqc}.
\end{proof}

We note that the above interactive proof system generalizes to agnostic learning settings where the verifier:
\begin{enumerate}
\item Can estimate the loss (necessary for Step 2).
\item Can estimate a quantity which can be used to upper bound and lower bound the optimal loss (necessary for Step 4 and soundness analysis).
\end{enumerate}
In particular, the value of $\alpha$ which can be achieved in the soundness analysis depends on the tightness of the upper and lower bounds on the optimal loss. Specifically, the above interactive protocol will yield an interactive proof system for $\alpha$-agnostic learning for any $\alpha$ such that $\mathrm{UB}/\mathrm{LB}\leq \alpha$. 

\section*{Acknowledgments}

The authors thank Mina Doosti, Vedran Dunjko, Alexandru Gheorghiu, Tom Gur, Elham Kashefi, Ninad Rajgopal, Chirag Wadhwa, and especially Alex Grilo for insightful discussions.

The Berlin authors gratefully acknowledge support from the BMBF (FermiQP, MuniQCAtoms, DAQC), the Munich Quantum Valley (K-8), the Quantum Flagship (PasQuans2, Millenion), the QuantERA (HQCC), the DFG (CRC 183), the Einstein Research Unit, Berlin Quantum, and the ERC (DebuQC). MCC was partially supported by a DAAD PRIME fellowship.
\newpage

\appendix

\section{Considered problems}\label{app:problems}

We provide here a list of precise definitions for testing and learning problems which are considered explicitly in this work. Throughout this section we use the notation $\mathcal{S}$ and $\mathcal{S}_p$ for the set of quantum states and pure quantum states respectively. The dimension of the states will either be specified in the text, or via a subscript. We use the notation $\mathcal{D}$ for the set of discrete distributions over bit strings. The length of the bit strings will either be specified in the text, or with a subscript.

\subsection{Many-vs-one distinguishing problems}

\begin{problem}[Uniformity testing]\label{prob:uniformity} Let $U\in\mathcal{D}$ denote the uniform distribution over length-$n$ bit strings. Let $B_{\epsilon}(U)\subset \mathcal{D}$ be the $\epsilon$-ball centered at $U$ with respect to the total variation distance. Uniformity testing is the many-vs-one distinguishing problem $(U,\mathcal{D}\setminus B_{\epsilon}(U),\mathcal{D})$. In words, uniformity testing is the problem which asks one to decide whether an unknown instance is the uniform distribution, or at least $\epsilon$-far from the uniform distribution
\end{problem}

\begin{problem}[Parity distribution testing]\label{prob:parity_testing}
Let $\mathcal{D}_n$ be the set of probability distributions over length-$n$ bit strings, and $U_n\in\mathcal{D}_n$ the uniform distribution. For any $s\in\{0,1\}^n$ define the parity distribution $D_s\in\mathcal{D}_{n+1}$ as the distribution which is sampled from by the following process: Draw $x\sim U_n$, then output $x \mathbin\Vert \mathrm{par}_s(x)\in\{0,1\}^{n+1}$ where $\mathrm{par}_s(x) = s\cdot x\mod 2$ is the $s$-parity of $x$.  Finally, define the set of parity distributions $\mathcal{D}_P:=\{D_s\,|\,s\in\{0,1\}^n\}$. With this in hand, parity distribution testing is the many-vs-one decision problem $(U_{n+1},\mathcal{D}_P,\mathcal{D}_{n+1})$. In words, parity distribution testing is the problem which asks one to decide whether an unknown distribution is the uniform disrtibution or a parity distribution.
\end{problem}

\begin{problem}[Purity testing]\label{prob:purity_testing} Purity testing is the the many-vs-one distinguishing task $(\mathds{1}/d,\mathcal{S}_p,\mathcal{S})$. In words, purity testing is the problem which asks one to decide whether an unknown state is the maximally mixed state or a pure state.  
\end{problem}

\begin{problem}[Unitarity testing]\label{prob:unitarity_testing}
    Unitarity testing is the the many-vs-one distinguishing task $(\Delta_1,\mathsf{U},\mathsf{CPTP})$, where $\Delta_1$ denotes the maximally depolarizing channel, $\mathsf{U}$ denotes the set of unitary channels, and $\mathsf{CPTP}$ denotes the set of all channels, all for some fixed dimension $d$. In words, unitarity testing is the problem which asks one to decide whether an unknown channel is the maximally depolarizing channel or a unitary channel.
\end{problem}

\begin{problem}[Quantum state certification]\label{prob:quantum-state-certification} For any state $\rho\in\mathcal{S}$ let $B_\epsilon(\rho)\subset \mathcal{S}$ be the epsilon ball centered at $\rho$, with respect to the trace distance. For any state $\rho\in\mathcal{S}$ the associated quantum state certification problem 
is the many-vs-one distinguishing problem $(\rho,\mathcal{S}\setminus B_{\epsilon}(\rho),\mathcal{S})$. In words, quantum state certification for $\rho$ is the problem which asks one to decide whether an unknown state is $\rho$, or at least $\epsilon$-far from $\rho$ with respect to the trace distance.
\end{problem}

\begin{problem}[Stabilizer testing]\label{prob:stabilizer-testing} Let $\mathrm{Stab}_n\subset \mathcal{S}_P$ be the set of $n$-qubit stabilizer states. Stabilizer testing is the many-vs-one distinguishing problem $(\mathds{1}_2^{\otimes n}/2^n,\mathrm{Stab}_n,\mathcal{S})$. In words, stabilizer testing is the problem which asks one to decide whether an unknown $n$-qubit state is the maximally mixed state, or a stabilizer state.
\end{problem}

\begin{problem}[Clifford testing]\label{prob:clifford_testing}
    Clifford testing is the the many-vs-one distinguishing task $(\Delta_1,\mathsf{Cl},\mathsf{CPTP})$, where $\Delta_1$ denotes the maximally depolarizing channel, $\mathsf{Cl}$ denotes the set of Clifford unitary channels, and $\mathsf{CPTP}$ denotes the set of all channels, all for some fixed dimension $d=2^n$. In words, unitarity testing is the problem which asks one to decide whether an unknown channel is the maximally depolarizing channel or a Clifford unitary channel.
\end{problem}

\begin{problem}[Pauli spike detection]\label{prob:Pauli-spike-detection} For any $\varepsilon\in[0,1)$ define the set of ``$\varepsilon$-Pauli-spiked'' $n$-qubit states $\mathcal{S}_{\mathrm{PS}}$ via
\begin{equation}
\mathcal{S}_{\mathrm{PS}} = \left\{\frac{\mathds{1}_2^{\otimes n} + 3\varepsilon P}{2^n}\,|\, P\in\{\mathds{1}_2, X,Y,Z\}^{\otimes n}\setminus \{\mathds{1}_2^{\otimes n}\}\right\}.
\end{equation}
Pauli spike detection is then the many-vs-one distinguishing problem $\{\nicefrac{\mathds{1}_2^{\otimes n}}{2^n},\mathcal{S}_{\mathrm{PS}},\mathcal{S}\}$. In words, Pauli spike detection is the problem which asks one to decide whether an unknown state is the maximally mixed state, or a Pauli-spiked state. We note that Choi state and channel version of Pauli spike detection have been studied in Refs.~\cite{caro2023learning, chen2023efficient, chen2024tight}.
\end{problem}

\subsection{Learning problems}

\begin{problem}[Agnostic parity learning]\label{prob:agnostic_parity_learning} Let $\mathcal{D}_P\subset\mathcal{D}_{n+1}$ be the set of parity distributions defined in Problem~\ref{prob:parity_testing}. Distributional agnostic parity learning is the agnostic learning problem $\{\mathcal{D}_{n+1},\mathcal{D}_P,\ell,\epsilon,\alpha\}$, where the loss function $\ell$ is given by
\begin{equation}\label{eq:misclassification_prob}
    \ell (D_s,D) =\underset{(x,y)\sim D}{\mathrm{Pr}}[\mathrm{par}_s(x)\neq y] = \underset{(x,y)\sim D}{\mathbb{E}}[\mathds{1}_{\mathrm{par}_s(x)\neq y}],
\end{equation}
which is the misclassification probability when using $\mathrm{par}_s$ to predict the last bit of a sample from $D$, given the first $n$ bits. In words, agnostic parity learning is the problem which asks one, when given access to an unknown distribution, to output a parity distribution which is close enough to the best possible parity distribution, with respect to the misclassification probability.
\end{problem}

\begin{problem}[Agnostic quantum state tomography]\label{prob:agnostic_qst} Let $\mathcal{M}\subseteq{S}$ be a subset of quantum states. Agnostic quantum state tomography of $\mathcal{M}$ is the agnostic learning problem $\{\mathcal{S},\mathcal{M},\ell,\epsilon,\alpha\}$, where $\ell$ is typically the trace distance or the fidelity (strictly, one minus the fidelity), but can be any other metric for comparing quantum states. We also define \textit{pure-state} agnostic quantum state tomography as the agnostic learning problem $\{\mathcal{S}_P,\mathcal{M},\ell,\epsilon,\alpha\}$, where $\mathcal{S}_P$ is any subset of pure states. In words, (pure-state) agnostic quantum state tomography is the problem which asks one, when given access to an unknown (pure) quantum state, to output a state in $\mathcal{M}$ which is close enough to the best possible state in $\mathcal{M}$, with respect to $\ell$. 
\end{problem}

\begin{problem}[Quantum state tomography]\label{prob:qst} Quantum state tomography is the special case of agnostic quantum state tomography in which $\mathcal{M}=\mathcal{S}$ is the set of all quantum states. As such, quantum state tomography can also be written as the realizable learning problem $\{\mathcal{S},\ell,\epsilon\}$, where again $\ell$ is most-often either trace distance or fidelity, but can also be any other metric on quantum states. In words, quantum state tomography is the problem which asks one, when given access to an unknown quantum state, to output a description of a quantum state which is $\epsilon$-close to the unknown state with respect to $\ell$.
\end{problem}

\begin{problem}[Agnostic rank-$k$ quantum state tomography]\label{prob:agnostic_rank_k_quantum_state}
Let $\mathcal{S}_{\mathrm{rank}(k)}\subset\mathcal{S}$ denote the set of rank-$k$ quantum states. We then define agnostic rank-$k$ quantum state tomography as the agnostic learning problem $(\mathcal{S},\mathcal{S}_{\mathrm{rank}(k)},\ell,\epsilon,\alpha)$. When $\ell$ is the distance induced by the Schatten $p$-norm, then we refer to the problem as agnostic rank-$k$ quantum state tomography with respect to the Schatten $p$-norm. In words, agnostic rank-$k$ quantum state tomography is the problem which asks one, when given access to an unknown quantum state, to output a description of a rank-$k$ quantum state which is close enough to the best possible rank-$k$ approximation of the unknown quantum state.
\end{problem}

\begin{problem}[Agnostic rank-$k$ PSD tomography]\label{prob:agnostic_rank_k_PSD}
Let $\mathrm{PSD}_{\mathrm{rank}(k)}\supset \mathcal{S}_{\mathrm{rank}(k)}$ denote the set of all {$d$-dimensional} rank-$k$ positive semi-definite operators. We then define agnostic rank-$k$ PSD tomography as the agnostic learning problem $(\mathcal{S},\mathrm{PSD}_{\mathrm{rank}(k)},\ell,\epsilon,\alpha)$. Again, when $\ell$ is the distance induced by the Schatten $p$-norm, then we refer to the problem as agnostic rank-$k$ PSD tomography with respect to the Schatten $p$-norm. 
\end{problem}

Note that agnostic rank-$k$ PSD tomography is extremely similar to agnostic rank-$k$ \textit{quantum state} tomography, however, in agnostic rank-$k$ PSD tomography, any rank-$k$ PSD matrix is allowed as a hypothesis, even if it is not a valid rank-$k$ quantum state (which is required of hypotheses in agnostic rank-$k$ quantum state tomography).
In that sense, agnostic rank-$k$ PSD tomography can be viewed as an instance of improper learning. 

\begin{problem}[Agnostic stabilizer state learning]\label{prob:gnostic_stabilizer_learning} Agnostic stabilizer state learning is the pure-state agnostic quantum state tomography problem $\{\mathcal{S}_P,\mathrm{Stab}_n,1-F,\epsilon,\alpha\}$. In words, agnostic stabilizer state learning is the problem which asks one, when given access to an unknown pure state, to output a stabilizer state which is close to the best possible stabilizer state approximation to the unknown state, measured via the fidelity.
\end{problem}

\begin{problem}[Pauli spike search]\label{problem:pauli_spike_search} Pauli spike search is a learning version of Pauli spike detection, defined in Problem~\ref{prob:Pauli-spike-detection}, in which instead of merely requiring a quantum algorithm to distinguish between the maximally mixed state and an $\varepsilon$-Pauli-spiked state, any algorithm which claims a spike should also correctly identify the corresponding Pauli operator \cite{chen2023efficient, chen2024tight}. 
\end{problem}

\begin{problem}[Pauli shadow tomography]\label{problem:pauli_shadow_tomography}
Let $\mathcal{P} =\{\mathds{1}_2, X,Y,Z\}^{\otimes n}\}$ be the set of $n$-qubit Pauli operators. Pauli shadow tomography~\cite{huang2021information, king2024triply, chen2024optimalpauli} is the problem in which, when given oracle access to an unknown quantum state $\rho$, a learning algorithm should, with probability $\geq 1-\delta$, output a classically efficient function $f:\mathcal{P}\rightarrow\mathbb{R}$ satisfying $|f(P) - \tr{P\rho}|\leq \varepsilon$, for all $P\in\mathcal{P}$. In words, Pauli shadow tomography is the problem which asks one, when given access to an unknown quantum state, to produce simultaneously $\varepsilon$-accurate estimates for the expectation values $\tr{P\rho}$, for all $P\in\mathcal{P}$. We note that Pauli shadow tomography is a special case of the more general shadow tomography problem~\cite{aaronson2019shadow, badescu2023improved, abbas2023quantum}, and that  one can also define a special case of Pauli shadow tomography which relaxes the requirement of predicting \textit{all} Pauli expectation values, to only predicting some subset~\cite{king2024triply, chen2024optimalpauli}.
\end{problem}

\section{More motivating examples}\label{app:motivating_examples}

Here, we supplement the motivating examples given in \Cref{ss:motivating_examples} with a variety of additional examples, all of which illustrate resource dependent query complexity separations, and are therefore interesting settings for this work. We note that this is by no means an exhaustive list.

\begin{example}[Parity distribution testing -- Problem~\ref{prob:parity_testing}]\label{example:parity-testing-informal}
For the randomized version of parity distribution testing (c.f. Definition~\ref{def:solve_random_problem}), in which  $\mu$ is the uniform measure over $\mathcal{X}_R$, one recovers the problem of distinguishing between a uniformly random parity function and the fully uniform distribution. For this problem, it is known that \textit{sample access} is a relevant resource. More specifically, given a sample oracle $O(n)$ queries are sufficient. However, \(2^{\Omega(n)}\) queries are necessary to solve that task from statistical queries \cite{kearns1998efficient}. As such, one can ask whether there exists an interactive proof protocol which allows a verifier with statistical query access, through interaction with an unconstrained prover with sample access, to outperform the lower bound imposed by statistical queries.  
\end{example}

\begin{example}[Quantum state certification -- Problem~\ref{prob:quantum-state-certification}]\label{example:Quantum_state_certification} For this problem, Ref~\cite{buadescu2019quantum} showed that $\Theta\left(\nicefrac{d}{\varepsilon^2}\right)$ copies of the unknown state are necessary and sufficient for quantum state certification if one has enough quantum memory to make coherent multi-copy measurements. However, this complexity cannot be achieved with only incoherent single-copy measurements~\cite{chen2022tight}. As such, here one can ask whether a resource-constrained verifier, with only enough quantum memory to make incoherent single-copy measurements, can gain any advantage through interaction with a prover who has the ability to make coherent multi-copy measurements.
\end{example}

\begin{example}[Stabilizer testing -- Problem~\ref{prob:stabilizer-testing}]\label{example:stabilizer-testing}
This task can be solved from \(\mathcal{O}(n^2)\) single-copies of the unknown state \cite{aaronson2008identifying} and \(\mathcal{O}(n)\) many copies with coherent multi-copy measurements \cite{montanaro2017learning}. As such, \textit{quantum memory} is a relevant resource for stabilizer testing. Additionally, for inverse polynomially accurate 
quantum statistical queries (QSQs), this task requires \(\Omega(2^n)\) 
QSQs to the unknown state \(\rho\) \cite{arunachalam2023role,nietner2023unifying}  and \(\Omega(2^n)\) QSQs to the two-copy state \(\rho\otimes \rho\) \cite{nietner2023unifying}. As such, \textit{quantum copies}, as opposed to QSQs, are also a relevant resource for this problem. 
\end{example}

\begin{example}[Pauli spike detection -- Problem~\ref{prob:Pauli-spike-detection}]\label{example:pauli-spike-detection-informal}
For the randomized version of Pauli spike detection (c.f. Definition~\ref{def:solve_random_problem})
with $\mu$ the uniform measure over $\mathcal{X}_R$, as well as the Choi state and channel versions of this problem~\cite{caro2023learning, chen2023efficient, chen2024tight}, it is know that \textit{quantum memory} is a resource~\cite{chen2022exponential, chen2023efficient, chen2024tight}.
More specifically, for many variations of the task, coherent processing with a quantum memory is known to provide an exponential advantage over incoherent quantum processing, either in terms of the number of copies/queries required or in terms of the number of measurements to be performed \cite{chen2022exponential, chen2023efficient, caro2023learning, chen2024optimalpauli}.
\end{example}

\begin{example}[Agnostic stabilizer learning -- Problem~\ref{prob:gnostic_stabilizer_learning}]
For $\varepsilon$ small enough, this task is at least as challenging as realizable stabilizer learning, where the unknown state $\rho$ is promised to itself be a pure stabilizer state. Thus, the lower bounds of $\Omega(n^2)$ for single-copy measurements and $\Omega(n)$ for multi-copy measurements~\cite{arunachalam2023optimal} carry over to agnostic stabilizer state learning, and \textit{quantum memory} is again a relevant resource. 
\end{example}

\begin{example}[Realizable discrete distribution learning]\label{example:distribution_learning} 
Let $\mathcal{M}\subset \mathcal{D}$ be a subset of distributions (with some common structure) over length-$n$ bitstrings, and $(\mathcal{M},d_{\mathrm{TV}},\epsilon)$ the associated \textit{realizable} learning problem with respect to the total variation distance. Two typical oracle models for realizable distribution learning are the sample model and the statistical query model. However, we note that in both cases there do not exist generic efficient (with respect to $n$) algorithms for evaluating $d_{\mathrm{tv}}(h,c)$ -- and subsequently deciding valid solutions in the sense of Definition~\ref{def:decide-valid} -- given a description of $h$ and access to $\textsf{O}(c)$~\cite{canonne2020survey}. We therefore see that realizable distribution learning is a non-trivial setting for interactive proofs, contrary to realizable learning of Boolean functions. For concrete examples of sets $\mathcal{M}$ for which there is a query complexity separation between samples and statistical queries for realizable learning, one can consider either the parity distributions defined in Problem~\ref{prob:parity_testing}, or the output distributions of linear depth ``Clifford + one T'' quantum circuits~\cite{hinsche2023one}.
\end{example}  

\section{VBQC-based interactive proofs}\label{app:VBQC}
In this appendix, we provide a proof of Observation~\ref{lem:interactive-quantum-proofs-vbqc}, which we restate below for convenience. 

\labelRestate{lem:interactive-quantum-proofs-vbqc}
\obsgenericVBQC*

\begin{proof} To start, we note that without loss of generality we can write the (coherent multi-copy) algorithm $\mathcal{A}$, for solving the desired problem with success probability $\geq 1-\nicefrac{\delta}{2}$, in the following form: 

\begin{enumerate}
        \item Query \(m\) copies of the unknown state \(\rho\). 
        \item Prepare the state \(\psi=\rho^{\otimes m}\otimes 0^{\otimes l}\) where \(0=\ketbra{0}{0}\) denotes the all-zero state and \(l\) is some number of auxilliary qubits.
        \item Execute some unitary circuit \(\psi \mapsto \psi'=\mathcal{U}(\psi)=U \psi U^{\dagger}\). 
        \item Measure a specific subset of  $\psi'$'s qubits called output qubits in the computational basis to obtain a bit string \(b\) that represents the output of \(\mathcal{A}\).
\end{enumerate}

In particular, we note that any potential adaptivity or classical post-processing of measurement outcomes has been incorporated into the unitary $U$, possibly via the introduction of ancilla qubits. 
Then, the strategy for the interactive protocol is simple: $V$ will run the algorithm as written above, but use the universal VBQC protocol from Ref.~\cite{fitzsimonsUnconditionallyVerifiableBlind2017} to verifiably and blindly delegate the execution of the unitary circuit in Step 3, and the measurement of the output state in Step 4, to the coherent multi-copy prover. 
At first glance, this seems to contradict the quantum memory restriction of the prover.
However, 
the protocol from Ref.~\cite{fitzsimonsUnconditionallyVerifiableBlind2017} only requires the verifier to send the suitably masked qubits of $\psi$ to $P$ in a stream \textit{one at a time}. 
Despite $V$'s limited memory, it \textit{can} do this by successively querying the oracle, masking the qubits appropriately, sending them to $P$ to clear memory, and then re-querying the oracle. 
The interactive protocol works as follows:

\begin{enumerate}
\item Set security parameter $d\gets \frac{5}{2}\log_{\nicefrac{5}{6}}\left(\nicefrac{\delta}{2}\right)$.
\item $V$ uses Protocol 8 from Ref.~\cite{fitzsimonsUnconditionallyVerifiableBlind2017} to delegate Step 3 (the execution of $\mathcal{U}$ on $\psi$) to the prover, using security parameter $d$ when Protocol 8 calls Protocol 7. During execution of the protocol $V$ receives a bit string $b_\mathrm{trap}$ containing measurement outcomes from trap qubits. At the end of the protocol, $P$ sends some state $\psi_\mathrm{out}\in \mathcal{H}_{\mathrm{comp}}\otimes \mathcal{H}_{\mathrm{trap}}\otimes \mathcal{H}_{\mathrm{junk}}$ to $V$, one qubit at a time. 
Here, \(\mathcal{H}_{\mathrm{comp}}\) is the Hilbert space of \(\mathcal{A}\)'s output qubits, \(\mathcal{H}_{\mathrm{trap}}\) is the Hilbert space of remaining trap qubits, and \(\mathcal{H}_{\mathrm{junk}}\) is the Hilbert space of the remaining qubits.
\item $V$ should now decide whether to accept or reject:
\begin{enumerate}
\item If all bits of $b_\mathrm{trap}$ are correct (as per Protocol 8 in Ref.~\cite{fitzsimonsUnconditionallyVerifiableBlind2017}), then $V$ 
streams the incoming qubits of \(\psi_{\mathrm{out}}\) received from \(P\). 
The verifier discards  any qubit from \(\mathcal{H}_{\mathrm{junk}}\).
Qubits from \(\mathcal{H}_{\mathrm{comp}}\) are measured and \(V\) stores those outputs in a bit-string \(b_{\mathrm{out}}\).
Similarly, qubits from \(\mathcal{H}_{\mathrm{trap}}\) are measured and, conditioned on the output of the trap-measurement the interaction either continues or is rejected and \(V\) returns ``abort''. 
$V$ then outputs the bit string $b_\mathrm{out}$.
\item Else $V$ rejects the interaction and outputs ``abort".
\end{enumerate}
\end{enumerate}

We can now prove completeness and soundness.

\textbf{Completeness:} 
As before, we first stress that the delegation in Step 3 above is possible for $V$, as Protocol 8 from Ref.~\cite{fitzsimonsUnconditionallyVerifiableBlind2017} only requires $V$ to send qubits of $\psi$ to $P$ one at a time (after single-qubit masking). It then follows from Theorem 10 in Ref.~\cite{fitzsimonsUnconditionallyVerifiableBlind2017} that if the prover is honest, $V$ will accept the interaction, and receive in the computation register $\psi'$, with probability 1. From the assumption on $\mathcal{A}$, we therefore have that $b_\mathrm{out}$ is a valid solution with probability $\geq 1-\nicefrac{\delta}{2}$. 

\textbf{Soundness:} 
Define the ideal VBQC output state after interacting with a dishonest prover as
\begin{equation*}
\psi_{\mathrm{ideal}}(p) = p\psi'\otimes \proj{\mathrm{acc}} + (1-p)\rho\otimes \proj{\mathrm{rej}}
\end{equation*}
for some $p \in [0,1]$, and with $\rho$ any state on \(\mathcal{H}_{\mathrm{comp}}\otimes \mathcal{H}_{\mathrm{junk}}\). The states $\ket{\mathrm{acc}}$ and $\ket{\mathrm{rej}}$  in \(\mathcal{H}_{\mathrm{trap}}\) with \(\braket{\mathrm{acc}}{\mathrm{rej}}=0\) denote the flag-states that lead $V$ to accept or reject the interaction after measuring the respective labels. 

Recall, that algorithm $\mathcal{A}$ fails with probability $\leq\nicefrac{\delta}{2}$ when measuring $\psi'$ in the computational basis. 
Thus, the probability of \(V\) accepting the interaction, yet outputting an invalid $b_\mathrm{out}$ 
when measuring $\psi_{\mathrm{ideal}}(p)$ in the computational basis is $\leq \nicefrac{\delta}{2}$. 
From Theorem 12 in Ref.~\cite{fitzsimonsUnconditionallyVerifiableBlind2017}, together with the choice of security parameter $d$, we know that 
\begin{equation*}
    \frac{1}{2}\norm{\psi_\mathrm{out} - \psi_{\mathrm{ideal}}(p)}_1
    \leq \left(\frac{5}{6}\right)^{\left\lceil \frac{2d}{5}\right\rceil}
    \leq \frac{\delta}{2}\, ,
\end{equation*}
for some $p$.
Therefore, the probability of \(V\) with access to \(\psi_{\mathrm{out}}\) for accepting the interaction and outputting an invalid string \(b_\mathrm{out}\) is at most \(\frac{\delta}{2}\) larger than the same probability with \(\psi_{\mathrm{out}}\) replaced by \(\psi_{\mathrm{ideal}}(p)\).
Hence, the probability of \(V\) accepting the interaction and outputting an invalid string \(b_\mathrm{out}\) is at most \(\delta\), which completes the soundness analysis.

\textbf{Communication complexity:} The claimed communication complexity follows from properties of Protocol 8 from Ref.~\cite{fitzsimonsUnconditionallyVerifiableBlind2017}. We note that the protocol could be modified in such a way that $P$ only sends classical bits to $V$ (i.e. $P$ also performs the desired measurements) without affecting the analysis.
\end{proof}

\section{An interactive proof for memory-constrained uniformity testing}\label{app:uniformity_ip}

We show here how one can use known streaming interactive proofs~\cite{cormode2011verifyingcomputationsstreaminginteractive} to construct an interactive proof protocol for uniformity testing, which allows a memory-constrained verifier to surpass query complexity lower bounds resulting from memory limitations. To be more specific, recall from Problem~\ref{prob:unitarity_testing} that uniformity testing is the problem which asks one to decide whether an unknown distribution is the uniform distribution or $\varepsilon$-far from the uniform distribution with respect to trace distance~\cite{canonne2022topics}. As discussed in Example~\ref{example:uniformity_testing}, it is known that for distributions with support size $k$, when there are no memory constraints, $\Theta\left(\nicefrac{\sqrt{k}}{\varepsilon^2}\right)$ queries are necessary and sufficient~\cite{paninski2008coincidence,canonne2022topics}. On the other hand, when only $m$ bits of memory are available, then $\Omega\left(\nicefrac{k}{m\varepsilon^2}\right)$ queries are necessary~\cite{diakonikolas2019communication,canonne2023simpler,berg2022memory}. As such, its natural to wonder whether there exists an interactive proof system which allows a memory-constrained verifier to gain a query complexity advantage via delegation to a memory unconstrained prover. The following theorem answers this question affirmatively.

\begin{theorem} For any $\epsilon\geq \nicefrac{12}{k^{1/4}}$, there exists an interactive proof system $(V,P)$ for uniformity testing, with success probability $\nicefrac{2}{3}$, between a verifier with $\mathcal{O}(\mathrm{polylog}(k))$ classical bits of memory, and a memory unconstrained prover. More specifically, for this protocol: 
\begin{enumerate}
\item The verifier has query complexity $\mathcal{O}(\nicefrac{\sqrt{k}}{\epsilon^2})$.
\item The prover and the verifier exchange $\mathcal{O}\left(\sqrt{k}\,\mathrm{polylog}(k)\right)$ bits during the protocol.
\end{enumerate}
\end{theorem}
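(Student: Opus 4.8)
The plan is to instantiate the trivial interactive proof of Observation~\ref{observation:trivial-ip-from-valid-solution-decision}, with a memory-constrained verifier $V$ who delegates the ``hard'' part of deciding valid solutions to the prover via a streaming interactive proof. Concretely, recall that uniformity testing for a distribution $D$ over $[k]$ can be solved by estimating the collision probability $\|D\|_2^2 = \sum_{j=1}^k D(j)^2$: if $D$ is uniform this equals $1/k$, whereas if $\tv(D,\mathcal{U})\geq\epsilon$ then $\|D\|_2^2 \geq (1+4\epsilon^2)/k$ (via $\|D - \mathcal{U}\|_2^2 = \|D\|_2^2 - 1/k$ and Cauchy--Schwarz). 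The standard collision-based tester draws $\mathcal{O}(\sqrt{k}/\epsilon^2)$ samples, forms an empirical estimate of $\|D\|_2^2$ by counting collisions among the samples, and thresholds it. The obstacle for a memory-constrained $V$ is that naively counting collisions requires remembering which symbols have been seen, i.e. $\Omega(\sqrt k)$ bits. The idea is that $V$ will \emph{stream} the sample sequence to $P$ and have $P$ compute (and certify) the collision count, while $V$ only keeps $\mathrm{polylog}(k)$ bits of state.

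First, I would fix $N = \Theta(\sqrt{k}/\epsilon^2)$, have $V$ query its oracle $N$ times, and stream the samples $x_1,\dots,x_N \in [k]$ one at a time. The quantity $V$ wants is $C := \#\{(i,j): i<j,\ x_i = x_j\}$, equivalently $\sum_{j\in[k]} \binom{n_j}{2}$ where $n_j$ is the number of occurrences of symbol $j$; it is a standard fact (unbiasedness of the $U$-statistic) that $\Ex[C]/\binom{N}{2} = \|D\|_2^2$, and a variance/Chebyshev (or Bhattacharya--type) bound shows $\mathcal{O}(\sqrt k/\epsilon^2)$ samples suffice to distinguish the two cases with probability $\geq 2/3$ when $\epsilon \gtrsim k^{-1/4}$. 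Now $C$ is a function of the frequency vector $(n_1,\dots,n_k)$, which is itself a low-degree (degree-2, actually a sum of indicator products) function of the data stream, so it falls squarely within the reach of streaming interactive proof / annotated-data-stream protocols~\cite{cormode2011verifyingcomputationsstreaminginteractive,chakrabarti2014annotations}: there $V$ maintains a fingerprint (e.g. evaluates a low-degree polynomial extension of the frequency vector at a random point in a suitable field of size $\mathrm{poly}(k)$, updatable in $\mathrm{polylog}(k)$ space as the stream arrives), $P$ then sends a short certificate of the claimed value of the relevant sum, and a sumcheck-style interaction lets $V$ verify it using only its fingerprint and $\mathrm{polylog}(k)$ additional space, with soundness error $1/\mathrm{poly}(k)$. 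The total communication of such protocols for a degree-$O(1)$ statistic over a universe of size $k$ with a stream of length $N=\mathcal{O}(\sqrt k/\epsilon^2)$ is $\mathcal{O}(\sqrt k\,\mathrm{polylog}(k))$ (one can also split the frequency vector into $\mathcal{O}(\sqrt k)$ blocks to balance $V$'s space against communication, which is the source of the $\sqrt k$ factor).

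Putting the pieces together: $V$ streams its $N$ samples to $P$ while updating a $\mathrm{polylog}(k)$-bit fingerprint; $P$ returns a purported value $\hat C$ for the collision count together with a streaming-IP certificate; $V$ runs the verification, aborting if it fails; if it passes, $V$ computes $\hat C/\binom N2$ and outputs ``uniform'' or ``$\epsilon$-far'' according to whether this estimate lies below or above the threshold $(1+2\epsilon^2)/k$. \textbf{Completeness:} an honest $P$ reports the true $C$ and a valid certificate, so $V$ does not abort, and by the concentration of the collision estimator the output is correct with probability $\geq 2/3$. \textbf{Soundness:} by the soundness of the streaming IP, any $P'$ that reports a $\hat C$ differing from the true $C$ is caught except with probability $1/\mathrm{poly}(k)$; conditioned on $\hat C = C$, the output is again correct with probability $\geq 2/3 + o(1)$; a union bound gives overall success $\geq 2/3$ (after adjusting constants in $N$). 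The verifier makes $\mathcal{O}(\sqrt k/\epsilon^2)$ queries, uses $\mathcal{O}(\mathrm{polylog}(k))$ classical memory, and the communication is $\mathcal{O}(\sqrt k\,\mathrm{polylog}(k))$ bits, matching the claim; the constraint $\epsilon \geq 12/k^{1/4}$ is exactly what is needed for the collision-tester analysis to go through with these sample counts.

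The main obstacle I expect is bookkeeping at the interface between the statistical-testing side and the streaming-IP side: one must (i) phrase the collision count as a statistic to which an off-the-shelf annotated-data-stream / streaming sumcheck protocol applies with the claimed space and communication bounds — this requires expressing $C$ as, say, $\sum_j \binom{n_j}{2}$ and feeding the derived stream (or the frequency vector's low-degree extension) into the protocol of~\cite{cormode2011verifyingcomputationsstreaminginteractive} — and (ii) carefully combine the $1/\mathrm{poly}(k)$ soundness error of the IP with the constant error of the tester and the requirement $\epsilon \gtrsim k^{-1/4}$ so that the final success probability is genuinely $\geq 2/3$ with a verifier space that is only $\mathrm{polylog}(k)$. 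Neither step is conceptually deep, but getting the polylog factors and the block-decomposition trade-off to land on exactly $\mathcal{O}(\sqrt k\,\mathrm{polylog}(k))$ communication while keeping $V$'s query complexity at the memory-unconstrained optimum $\mathcal{O}(\sqrt k/\epsilon^2)$ is where the care is needed.
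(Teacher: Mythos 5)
Your proposal is correct and structurally mirrors the paper's: both have $V$ stream its $\mathcal{O}(\sqrt{k}/\epsilon^2)$ samples to $P$, delegate a frequency-based statistic $\sum_{j\in[k]} g(n_j)$ via the streaming-IP protocol of Theorem~6 in Ref.~\cite{cormode2011verifyingcomputationsstreaminginteractive}, and then have $V$ threshold the certified value; the soundness/completeness argument is a union bound between the IP's soundness error and the tester's statistical error, exactly as you sketch. The substantive difference is the choice of statistic: you use the collision count ($g(x)=x(x-1)/2$, i.e., estimating $\|D\|_2^2$), whereas the paper instead delegates the \emph{unique-elements} count ($g(x)=\mathds{1}[x=1]$) and runs the unique-elements uniformity tester (Algorithm~2 in Ref.~\cite{canonne2022topics}, with their explicit $n$, threshold $\tau$, and the $\epsilon\geq 12/k^{1/4}$ requirement from Theorem~2.4 there). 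Both statistics are covered by the same ``frequency-based functions'' protocol, so the space and communication bounds land identically without the block-decomposition balancing you were worried about — Theorem~6 hands you $\mathcal{O}(\mathrm{polylog}\,k)$ verifier space and $\mathcal{O}(\sqrt{k}\cdot\mathrm{polylog}\,k)$ communication out of the box. Two minor notes: (i) the claim that ``$\epsilon\geq 12/k^{1/4}$ is exactly what is needed for the collision-tester analysis'' is not quite right — that specific constraint comes from the unique-elements tester; the collision tester's $\mathcal{O}(\sqrt{k}/\epsilon^2)$ rate requires the sharper Diakonikolas–Gouleakis–Peebles–Price analysis rather than a bare Chebyshev bound, but works without this $\epsilon$ restriction (the theorem statement simply doesn't claim the wider range); (ii) your opening framing via Observation~\ref{observation:trivial-ip-from-valid-solution-decision} doesn't really match the protocol you then describe, since $V$ is not verifying a candidate solution supplied by $P$ but rather delegating a computation on $V$'s own sample stream.
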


Note that for a memory of size $m=\mathrm{polylog}(k)$, a memory-constrained verifier in isolation requires $\Omega\left(\nicefrac{k}{m\varepsilon^2}\right) = \Omega\left(\nicefrac{k}{\mathrm{polylog}(k) \varepsilon^2}\right)$ samples. This is (in the regime of large $k$) strictly larger than the upper bound of $\mathcal{O}(\nicefrac{\sqrt{k}}{\epsilon^2})$ achieved by the verifier in the above interactive proof for uniformity testing, and thus demonstrates that the interaction with the prover allows the verifier to overcome the lower bound arising from their memory constraint.

\begin{proof} 
    The high level idea of the interactive proof protocol is for the verifier to run the ``unique elements uniformity tester" (Algorithm 2 in Ref.~\cite{canonne2022topics}), by using the streaming interactive proof protocol for unique elements from Theorem 6 in Ref.~\cite{cormode2011verifyingcomputationsstreaminginteractive} to delegate the computation of the number of unique elements in the stream of samples, without having to store more than one sample at a time. 
    To make this more precise, let us define the support of the unknown distribution as $[k] = \{1,\ldots,k\}$. For any set of $n$ samples $\{x_1,\ldots,x_n\}\in[k]^n$ we define the associated frequency vector $\mathbf{a} = (a_1,\ldots,a_k)$, where for any $j\in [k]$ one defines $a_j = |\{i\,|\, x_i = j\}|$ as the number of times that $j\in[k]$ appears in the set of samples. Next, we define the function $h:\mathbb{N}_0\rightarrow\mathbb{N}_0$ via $h(1) = 1$ and $h(j) = 0$ for $j\neq 1$. This allows us to define the number of unique elements in the set of samples as $Z = \sum_{i\in [k]}h(a_i)$. With this in hand, the interactive proof protocol is as follows:
    
    \begin{enumerate}
    \item Get $\varepsilon \geq \nicefrac{12}{k^{1/4}}$.
    \item Set $n\gets \ceil{140\nicefrac{\sqrt{k}}{\epsilon^2}}$. 
    \item Set $\tau \gets \left(1-\nicefrac{1}{k}\right)^{n-1} - \nicefrac{n\epsilon^2}{8k}$. 
    \item For $j\in\{1,\ldots,n\}$: 
    \begin{enumerate}
    \item $V$ queries the oracle and receives a sample $x_j\in[k]$.
    \item $V$ sends $x_j$ to $P$, and follows the protocol for calculating ``frequency based functions", from Section 6.2 in Ref.~\cite{cormode2011verifyingcomputationsstreaminginteractive}, adapted to calculate the number of unique elements in the sample stream, and with soundness parameter $1/6$.
    \end{enumerate}
    \item $V$ receives $\tilde{Z}$ from $P$ -- the claimed number of unique elements. Still following the protocol from Section 6.2 in Ref.~\cite{cormode2011verifyingcomputationsstreaminginteractive}, $V$ decides whether to accept or reject the interaction with $P$.
    \begin{enumerate}
    \item If $V$ rejects the interaction with $P$, then the protocol is aborted.
    \item Else, if $V$ accepts the protocol then:
    \begin{enumerate}
    \item If $\tilde{Z}\leq \tau$ output ``reject'' (i.e. not uniform).
    \item Else, output ``accept" (i.e. uniform).
    \end{enumerate}
    \end{enumerate}
    \end{enumerate}
    
    \textbf{Completeness:} We note, from Theorem 6 in Ref.~\cite{cormode2011verifyingcomputationsstreaminginteractive}, together with the insistence on perfect completeness for valid protocols in Definition 1 and Definition 2 of Ref.~\cite{cormode2011verifyingcomputationsstreaminginteractive}, that for the honest prover $P$ the verifier $V$ will always accept the interaction, and it will always be the case that $\tilde{Z} = Z$. In this case, we note that the above interactive protocol implements the ``unique elements uniformity tester" (Algorithm 2 in Ref.~\cite{canonne2020survey}). From Theorem 2.4 and its proof in Ref.~\cite{canonne2020survey}, one can also verify that, with $n$ set as above, the algorithm will output a correct solution with probability greater than $\nicefrac{5}{6}$, provided $\varepsilon \geq \nicefrac{12}{k^{1/4}}$.
    
    \textbf{Soundness:} Let us denote by $E_1$ the event that $V$ accepts the interaction with $P$ and $\tilde{Z} = Z$, but $V$ still outputs an invalid solution. As $\tilde{Z} = Z$, verifier $V$ correctly implements the unique elements uniformity tester, and therefore outputs an invalid solution with probability less than $\nicefrac{1}{6}$ (via Theorem 2.4 in Ref.~\cite{canonne2022topics}). As such, the probability of event $E_1$ is less than $\nicefrac{1}{6}$. 
    
    Now, let us denote by $E_2$ the event that $P$ is dishonest and outputs some $\tilde{Z}$ which causes $V$ to output an invalid solution, but $V$ nevertheless accepts the interaction with $P$. By the soundness guarantee of the protocol for frequency based functions from Ref.~\cite{cormode2011verifyingcomputationsstreaminginteractive} (Section 6.2), the probability of event $E_2$ is less than $\nicefrac{1}{6}$. 
    The probability that $V$ accepts the interaction and output an incorrect solution is at most the probability of $E_1 \cup E_2$. By a union bound, this probability is less than $\nicefrac{1}{3}$.
    
    \textbf{Complexity analysis:} From the description of the protocol above, it is clear that $V$ has query complexity $\mathcal{O}\left(\nicefrac{\sqrt{k}}{\varepsilon^2}\right)$. From a space perspective, $V$ requires the space to store $\epsilon$, $n$ and $\tau$, keep track of the round counter $j$, receive single samples from the distribution, and execute the delegation protocol for frequency based functions from Ref.~\cite{cormode2011verifyingcomputationsstreaminginteractive}. For the relevant parameter regimes, this space cost is dominated by the requirements of the protocol from Ref.~\cite{cormode2011verifyingcomputationsstreaminginteractive}. More specifically, as per Theorem~6 (and Corollary 2) in Ref.~\cite{cormode2011verifyingcomputationsstreaminginteractive}, we have that $V$ uses $\mathcal{O}(\log(k))$ \textit{words} of space to execute this protocol, where as defined in Ref.~\cite{cormode2011verifyingcomputationsstreaminginteractive}, a \textit{word} is some sequence of $\mathrm{polylog}(k)$ bits. As such, $V$ requires $\mathcal{O}\left(\mathrm{polylog}(k)\right)$ \textit{bits} of space. Finally, the communication complexity of the protocol is determined solely by the communication required to execute the protocol from Ref.~\cite{cormode2011verifyingcomputationsstreaminginteractive}. Again from Theorem 6 in Ref.~\cite{cormode2011verifyingcomputationsstreaminginteractive}, we have that  $\mathcal{O}(\sqrt{k}\log(k))$ words, or $\mathcal{O}\left(\sqrt{k}\,\mathrm{polylog}(k)\right)$ bits,  are exchanged between $P$ and $V$ during execution of the protocol.
\end{proof}

\printbibliography
\addcontentsline{toc}{section}
{\protect\textbf{References}}








\end{document}